\documentclass[12pt]{article}
\usepackage{url}
\usepackage{booktabs} % For formal tables
\usepackage[ruled]{algorithm2e} % For algorithms

\usepackage{booktabs} 
\usepackage{longtable}
\usepackage{array} 
\usepackage{amsmath}
\usepackage{graphicx}
\usepackage{amsfonts}
\usepackage{amstext}
\usepackage{hyperref}
\usepackage{amsthm}
\usepackage{amssymb, amsthm}
\usepackage{mathtools}
\usepackage{enumitem}
\usepackage{thmtools}
\usepackage[utf8]{inputenc}

\newtheorem{theorem}{Theorem}
\newtheorem{proposition}{Proposition}
\newtheorem{lemma}{Lemma}
\newtheorem{corollary}{Corollary}
\theoremstyle{definition}
\newtheorem{definition}{Definition}
\newtheorem{assumption}{Assumption}
\theoremstyle{remark}
\newtheorem{remark}{Remark}
\usepackage{geometry}
\usepackage{setspace}
\usepackage{natbib}
\setcitestyle{authoryear}

\newcommand{\E}{\mathbb{E}}

\title{Comparing Market Mechanism Efficiencies}

\author{Irene Aldridge\footnote{irene.aldridge@gmail.com}}

\begin{document}
\maketitle

\begin{abstract}
We develop a game-theoretic framework that compares welfare efficiency across three market mechanisms: continuous double auctions with transparent order books (lit exchanges), opaque order books (dark pools), and periodic batch auctions. Each mechanism is modeled as a queuing system where heterogeneous traders face trade-offs between the execution price, waiting costs, and transaction costs.

Our main result establishes that under moderate arrival rates and bounded adverse selection, dark pools dominate both alternatives in aggregate ex-ante welfare. Observable order books create costly strategic timing games in which traders delay or rush submissions to optimize their position in the queue, generating wasteful social waiting costs. Opaque order books eliminate these timing games through information design.

We formally characterize the equilibrium strategies in each mechanism and prove the welfare ranking $W^{DARK} > W^{LIT} > W^{BATCH}$. Extensions incorporate asymmetric information and endogenous venue choice. The results demonstrate how the information structure and the discipline of the service jointly determine efficiency in strategic matching environments.

\end{abstract}

% Title page for title and abstract only.
\begin{titlepage}

% Optionally include a table of contents
\vspace{1cm}
\setcounter{tocdepth}{2} % adjust to 1 if desired
\tableofcontents

\end{titlepage}

% Paper body
% ============================================================
%  Section 1 -- Introduction (expanded)
%  Drop-in replacement for the two-paragraph introduction
%  in Submission 42 / Comparing Market Mechanism Efficiencies
% ============================================================

\section{Introduction}
\label{sec:introduction}

Financial markets have converged on three dominant execution architectures. The first is the \emph{lit limit order book} (LOB), deployed by major
registered exchanges such as NASDAQ and the New York Stock Exchange. In an LOB, all resting orders are publicly visible and matched continuously in strict price--time priority.  

The second is the \emph{dark pool}, operated by
alternative trading venues run by large financial institutions. In dark pools, the same continuous double-auction mechanism as in the LOB applies. However, in dark pools, the order book data is withheld from participants before execution. Only executed orders are publicly disclosed.  

The third is the \emph{periodic batch auction}, in which orders accumulate over a fixed interval and clear simultaneously at a uniform price. The periodic batch auctions eliminate time priority entirely.  

All three architectures attract substantial order flow, coexist in live markets, and are subject to active regulatory debate on both sides of the Atlantic. Yet the literature on market microstructure has evaluated these mechanisms almost exclusively through the lenses of \emph{liquidity} and \emph{price
discovery}: bid--ask spreads, depth, and the informational efficiency of prices.  These metrics matter, but they are incomplete.  A market that
produces tight spreads may still be inefficient in a deeper sense if traders unnecessarily expend significant resources competing for queue position or timing their submissions strategically.  Such costs may include traders' time, effort, and foregone outside options. These costs can be real welfare losses that do not appear in conventional liquidity statistics.  

This paper asks a different question: when
the full cost of participating in a market is taken into account, which of the three architectures produces the highest aggregate welfare?  The costs in question include waiting costs, transaction costs, and the social waste generated by strategic timing
games. 

\paragraph{Main result.}
Our central theorem establishes that, under moderate arrival rates and bounded adverse selection, dark pools \emph{dominate} both lit exchanges and batch
auctions in aggregate ex-ante welfare.  Formally, we prove:
\[
  W^{\mathrm{DARK}} \;>\; W^{\mathrm{LIT}} \;>\; W^{\mathrm{BATCH}}.
\]
The welfare advantage of dark pools over lit exchanges arises not from better prices, but from the \emph{elimination of strategic timing games}.  We argue that prices are transfers and do not affect total surplus. Furthermore, when traders can observe the order book, they delay or rush submissions to optimize their queue position, generating socially wasteful waiting costs.  Market opacity removes this incentive while preserving the ability of urgent traders (those with high waiting costs) to execute immediately via market orders.  

The welfare advantage of lit exchanges over batch auctions arises because batch auctions impose mandatory waiting on \emph{all} traders, including the most time-sensitive, and introduce execution uncertainty through pro-rata rationing.

\paragraph{Why welfare, and why now.}
The welfare perspective is particularly timely.  In the United States, the Securities and Exchange Commission's 2023 equity market structure reform
proposals explicitly revisited the rules governing dark pools and the order competition requirement. The proposals reopened questions about the social value of opacity \citep{SEC2023}.  In Europe, the European Securities and Markets Authority (ESMA) published a landmark Call for Evidence in April 2026
documenting a measurable shift away from lit continuous trading toward closing auctions, frequent batch auctions (FBAs), and systematic
internalizers over the 2022--2025 period, and explicitly soliciting views on whether regulatory reform is needed to restore efficient price formation
\citep{ESMA2026CfE}.  Both debates hinge, at their core, on the same trade-off this paper formalizes: transparency enables strategic gaming; opacity
prevents it but introduces adverse selection.  A rigorous welfare comparison of the three architectures is therefore both theoretically overdue and practically urgent.

\paragraph{Modeling approach.}
We develop a unified game-theoretic framework that represents all three mechanisms as \emph{queueing systems} in the tradition of \citet{contStoikovTalreha2010}.  A continuum of risk-neutral traders with heterogeneous private valuations $V_i$ and waiting costs $C_i$ arrive according to a Poisson process and choose among market orders, limit orders, and an outside option. The three mechanisms differ solely in the information available to arriving
traders and in the service discipline applied to queued orders:

\begin{itemize}
  \item The \textbf{lit exchange} is a First-Come-First-Served (FCFS) continuous double auction with a publicly observable order book $B_t$.       Traders condition their strategies on the full book state, generating strategic complementarities in submission timing.

  \item The \textbf{dark pool} is also FCFS, but the book state $B_t$ is unobservable.  Traders must form rational-expectations beliefs about the current book using only the distributional parameters of order flow and a delayed record of past trades.

  \item The \textbf{batch auction} clears at discrete intervals of length $T$ under Service-in-Random-Order (SIRO): within each batch, all orders at the clearing price have equal execution probability regardless of submission time.
\end{itemize}

We characterize Bayesian equilibria under each mechanism (Propositions~1--3), derive aggregate ex-ante welfare for each (Section~\ref{sec:welfare}), and
establish the welfare ranking under explicit sufficient conditions (Theorem~\ref{thm:welfare_ranking}).

\paragraph{Relation to the queueing literature.}
Our framework connects market microstructure to two foundational results in queueing theory.  \citet{Leshno2022} establishes that, in a dynamic matching model with overloaded waiting lists, SIRO is ex-ante welfare-superior to FCFS because it discourages socially excessive effort to secure priority.  Applied directly, this result would suggest batch auctions dominate lit exchanges.
\citet{CheTercieux2023} qualify this finding crucially: when agents are \emph{uninformed about their queue position}, FCFS is welfare-superior to SIRO because the absence of positional information eliminates the distortions that make SIRO attractive in the first place.  Applied to market design, an
uninformed FCFS queue---precisely the dark pool---dominates both informed FCFS (lit exchange) and SIRO (batch auction).

Our contribution is to embed these abstract queueing results in a two-sided financial market with endogenous participation, heterogeneous trader
types, adverse selection, and an explicit outside option.  This embedding is non-trivial: in a financial market, the "service rate" $\lambda(p, B, s)$ depends endogenously on the limit price chosen by the liquidity provider, the book state, and the direction of trade.  The equilibrium strategies derived in Section~\ref{sec:equilibria} characterize how traders optimally choose between market orders, limit orders, and exit in each mechanism, and how these choices aggregate to determine social welfare.
Section~\ref{sec:queueing_literature} compares queueing literature in greater detail. 

\paragraph{Relation to the market microstructure literature.}
The literature on dark pools is extensive.  \citet{Zhu2013} shows that dark pools attract relatively more uninformed order flow, improving price discovery in lit markets but concentrating adverse selection in the dark.  Our model
complements this by focusing on the welfare channel that Zhu's model abstracts away from: the strategic timing cost imposed on informed traders by the
transparency of the lit book.  \citet{OHara1998} and \citet{Aldridge2013} provide the canonical treatments of the continuous limit-order-book mechanism
on which our lit-exchange model is based.  \citet{BudishEtAl2015} develop the batch auction proposal and the welfare case against continuous-time priority. Our analysis shows that their proposal eliminates one source of welfare loss (the arms race for speed) while introducing two others (forced waiting and execution uncertainty). Our analysis also characterizes the conditions under which the net 
effect is negative.

Most closely related to our contribution is the work of \citet{JohnEtAl2025}, who analyze the efficiency of blockchain market microstructure from the perspective of liquidity providers, and \citet{CaoEtAl2025}, who document that liquidity
providers on Uniswap lose money on average.  Neither paper considers aggregate welfare across all market participants---the perspective we adopt here.  To our
knowledge, this paper is the first to derive a complete welfare ranking of lit exchanges, dark pools, and batch auctions from first principles in a unified model.

\paragraph{Key mechanisms and intuition.}
The welfare ranking $W^{\mathrm{DARK}} > W^{\mathrm{LIT}}$ rests on a single insight: \emph{observable queue position creates negative externalities}.  A trader who observes a thin book in a lit market may strategically wait for conditions to improve before submitting, incurring a waiting cost with no social benefit.  A trader who observes a thick book may rush to capture time priority, submitting at a worse price than they would otherwise accept.  In aggregate, these reactions to the public order book raise total waiting costs above the socially optimal level.  The dark pool removes the informational input that drives these responses.  High-cost traders still execute immediately
via market orders (their waiting cost $C_i$ exceeds the cutoff $\bar{C}^*$); low-cost traders still post limit orders; but neither group conditions these
choices on the unobservable book state, $B_t$.  The resulting adverse selection affects the distribution of surplus between buyers and sellers, but not the total surplus, because prices are transfers.

The ranking $W^{\mathrm{LIT}} > W^{\mathrm{BATCH}}$ rests on two separate inefficiencies introduced by batch clearing.  First, mandatory waiting: every
trader who arrives in the interval $[(k-1)T, kT)$ must wait until $kT$ to execute, even those with arbitrarily high waiting costs who would ordinarily
submit market orders for immediate execution.  This generates an aggregate welfare loss of $\mathbb{E}[C] \cdot (T/2) \cdot N$, which is strictly positive for any batch interval $T > 0$.  Second, execution uncertainty: when orders on one side of the market exceed those on the other at the clearing price, pro-rata rationing leaves some traders unfilled.  The prospect of non-execution reduces the expected surplus from participation, causing some traders whose
participation would be socially efficient to take their outside option instead, generating deadweight loss.  A lit market with a continuous auction eliminates both inefficiencies for urgent traders: market orders execute immediately and with certainty.

\paragraph{Sufficient conditions and robustness.}
The welfare ranking is not unconditional.  Theorem~\ref{thm:welfare_ranking} establishes sufficient conditions: arrival rate $\lambda$ in a moderate range $[\lambda_{\min}, \lambda_{\max}]$, valuation dispersion $\Delta < \Delta_{\max}$
(bounded adverse selection), and reporting delay $\delta < T/2$ in the dark pool.  The arrival-rate condition ensures that the dark pool market is thick
enough that high-cost traders can rely on near-immediate execution via market orders; if $\lambda$ is very low, high-cost traders may face excessive wait
times even in the dark pool, potentially reversing the ranking with the lit exchange.  Traders' inability to condition on $B_t$ causes a pricing distortion in the dark pool. The bounded adverse selection condition ensures that the pricing distortion does not become so severe that it materially reduces participation below the
lit-exchange level.  Section~\ref{sec:discussion} characterizes these boundary conditions and discusses when they are likely to hold in practice.

\paragraph{Regulatory implications.}
Our results speak directly to ongoing regulatory debates.  First, proposals to require dark pools to disclose pre-trade order book information (so-called
pre-trade transparency mandates) would eliminate the information design that generates their welfare advantage.  Our model predicts that such rules would
reduce aggregate welfare if the strategic timing costs in lit markets are significant.  Second, ESMA's current review of periodic auction mechanisms in
European equity markets \citep{ESMA2026CfE} raises the question of whether FBAs should be subject to tighter transparency or priority rules.  Our model
suggests that the welfare cost of FBAs relative to continuous markets is primarily the forced-waiting loss, not the opacity; regulatory interventions
that shorten the batch interval $T$ would reduce this cost, while transparency requirements would not.  Third, the finding that dark pools' welfare advantage
is conditional on liquid lit markets providing reference prices (Section~\ref{sec:discussion}) supports a policy of maintaining a mandatory minimum share of lit trading, as several regulators have considered. This would prevent the free-riding
dynamic from destabilizing the price discovery infrastructure on which dark pools depend.

\paragraph{Organization.}
The remainder of the paper is structured as follows.
Section~\ref{sec:models} represents each mechanism as a queueing system and provides background.  Section~\ref{sec:setup} defines the model environment,
trader types, and utility specification.   Section~\ref{sec:equilibria} characterizes the Bayesian equilibrium under each mechanism.
Section~\ref{sec:welfare} defines the welfare measure and proves the main welfare ranking (Theorem~\ref{thm:welfare_ranking}). Section~\ref{sec:simulation} presents the results of numerical simulations. Section~\ref{sec:robustness} considers the robustness of the analysis.  Section~\ref{sec:discussion} discusses dynamic considerations, robustness, and regulatory implications. Section~\ref{sec:conclusion} concludes.  Proofs omitted from the main text appear in the Appendix.

\section{Relation to the Queueing Literature}
\label{sec:queueing_literature}

This paper's welfare ranking draws on two foundational results in the theory of queue design: \citet{Leshno2022} and \citet{CheTercieux2023}.
The existing literature on dark pools and batch auctions has not connected these results to market microstructure.  We do so here, but the connection requires care: both papers prove things that differ from what this paper needs, and the differences matter for what is novel in our contribution.

\subsection{What Each Paper Proves}
\label{sec:queueing_what}

\paragraph{\citet{Leshno2022}.}
\cite{Leshno2022} studies a single-sided waiting list in which heterogeneous agents choose between items that arrive stochastically over time, each associated with an endogenously determined expected wait. The central finding is that waiting-time fluctuations lead to misallocation and welfare loss, and that a simple SIRO randomized assignment policy can reduce these fluctuations and thereby increase welfare relative to FCFS. The mechanism is allocation efficiency under item heterogeneity: when wait times are dispersed across states, some agents accept mismatched items to avoid long waits, causing allocative inefficiency. SIRO reduces variation in wait times across queue states, maintaining an acceptable expected wait under a larger range of states than FCFS, and is characterized as the robustly optimal mechanism. 

Three features of Leshno's model are essential for his result.
\emph{First}, agents choose between \emph{heterogeneous} items (e.g., a preferred organ type versus a mismatched one); the welfare gain from
SIRO comes from reducing preference mismatches, not from reducing aggregate waiting costs.  \emph{Second}, the queue is one-sided: agents wait for items that arrive exogenously; there is no strategic order placement or pricing.  \emph{Third}, the model is overloaded by assumption: demand permanently exceeds supply, so the designer wishes to incentivize agents to queue as long as possible. SIRO achieves this by smoothing out wait-time fluctuations that would otherwise induce premature exits.

\paragraph{\citet{CheTercieux2023}.}
\cite{CheTercieux2023} study a mechanism design problem in which a queue designer jointly chooses the service discipline and the information disclosed to agents. The optimal mechanism has a cutoff structure: agents are induced to enter up to a certain queue length and never to exit; they are served according to FCFS; and they are given no information throughout the process beyond the designer's recommendations. 

The mechanism behind this result is belief regulation. Under FCFS, the passage of time in the queue is \emph{good news} for a waiting agent, because surviving in the queue signals that fewer agents remain ahead.  This makes the incentive to stay in the queue self-reinforcing. Under SIRO and other rules with dispersed wait times, the elapse of time without being served signals a longer residual wait, undermining the incentive to remain. This is the fundamental issue that \cite{CheTercieux2023} identify. Withholding positional information (no information beyond the designer's recommendations) ensures that agents form beliefs consistent
with the FCFS discipline's favorable belief dynamics, eliminating the incentive to exit prematurely.

Two features of their model are essential.  \emph{First}, the welfare criterion combines agent utility and the service provider's payoff; FCFS is optimal partly because it maximizes service utilization, which benefits the provider.  \emph{Second}, the result is about the
\emph{joint} optimization of discipline and information. The key finding is not that FCFS dominates SIRO unconditionally, but that FCFS combined
with no positional information dominates SIRO (or any other rule) combined with any information structure.  The information-design component is essential; FCFS with full information is not generally optimal.

\subsection{Why Neither Result Applies Directly}
\label{sec:queueing_gaps}

If Leshno's result applied directly here, it would imply $W^{\mathrm{BATCH}} > W^{\mathrm{LIT}}$, because SIRO (batch auctions) would dominate FCFS (lit exchanges).  Our main result finds the
opposite.  If Che and Tercieux's result applied directly, it would imply that the optimal mechanism is FCFS with no information, which corresponds to the dark pool — but their result applies to a single-sided designer-controlled queue, not a two-sided competitive market with endogenous order placement and an outside option.  Four structural differences explain why direct application fails in both cases.

\paragraph{Difference~1: One-sided versus two-sided.}
Both queueing papers study single-sided queues in which one population (agents) waits for a second population (items or servers) whose arrival
is exogenous.  A financial market is fundamentally two-sided: buyers and sellers both choose endogenously whether to enter, which order type to
submit, and at what price.  In a two-sided market, a trader can bypass the queue entirely by submitting a \emph{market order} — paying the spread to receive immediate execution.  This option, absent in any
single-sided queue model, changes the welfare calculus decisively. 

Under FCFS in a financial market, high-cost traders (large $C_i$) can avoid waiting altogether by using market orders, incurring zero waiting costs.  The Leshno model has no such escape valve, which is why SIRO appears beneficial there: it smooths wait-time fluctuations for agents who \emph{must} wait.  In our model, those same high-cost traders never wait — so the smoothing benefit of SIRO is irrelevant, and its cost dominates. The cost is the forced $T/2$ wait for \emph{all} traders, including the impatient.

\paragraph{Difference~2: Endogenous versus exogenous arrival.}
In Leshno's model, the flow of agents joining the pool is exogenous. This means that maximizing welfare is equivalent to maximizing allocative efficiency, and the trade-off between allocative efficiency and congestion that arises with endogenous arrival is absent.

In our model, participation is endogenous: traders compare the expected surplus from entering the market against the outside option (equation~2). The participation margin is central to all three welfare comparisons. For the $W^{\mathrm{DARK}} > W^{\mathrm{LIT}}$ result, the outside option determines which types are excluded by adverse selection.  For $W^{\mathrm{LIT}} > W^{\mathrm{BATCH}}$, the execution uncertainty in
batch auctions causes marginal traders to exit, generating deadweight loss (Remark~2).  Neither of these effects exists in a model with exogenous arrival.

\paragraph{Difference~3: Item heterogeneity versus price optimization.}
Leshno's welfare gain from SIRO comes from reducing preference \emph{mismatches}: agents are assigned mismatched items under FCFS because dispersed wait times lead them to accept a worse item rather than waiting.  In a financial market, there is a single asset, and all trades are at market-clearing prices; there is no item heterogeneity and no mismatch in Leshno's sense.  The relevant welfare source is the
\emph{price-quality trade-off} in limit-order placement: a trader chooses a limit price to balance execution speed against price improvement, and the social cost arises from the strategic timing of this
choice.  This is structurally different from the mismatch problem, and SIRO's solution to mismatch (smoothing wait-time dispersion) does not address the strategic timing problem.

\paragraph{Difference~4: Designer control versus competitive equilibrium.}
Che and Tercieux assume a single designer who controls both the queuing discipline and the information disclosed to agents and who can commit to recommendations.  The dark pool in our model is not a mechanism designed by a welfare-maximizing planner. Instead, it is a market institution whose opacity arises from proprietary considerations and regulatory permissions and whose equilibrium is determined by competitive trader behavior.  The Che--Tercieux result establishes that a \emph{planner} would choose FCFS-with-no-information. A dark pool is \emph{approximately} an FCFS-with-no-information.  Our result, therefore, establishes that the dark pool generates higher welfare than observable-book
alternatives in the competitive equilibrium.  This is not a contradiction but a reinforcement: the planner's choice in their model is precisely the institution our paper identifies as welfare-superior
in ours.

\subsection{What This Paper Adds}
\label{sec:queueing_contribution}

Given these differences, what does the present paper contribute beyond combining and applying \cite{Leshno2022} and \cite{CheTercieux2023}?
We identify five contributions that are not present in either paper.

\paragraph{Contribution~1: The market-order escape valve.}
The most important structural novelty is the option to submit a market order.  In a two-sided financial market, high-cost traders can always pay the spread for immediate execution.  This option transforms the welfare comparison: SIRO's benefit (smoothing wait times for traders who must wait) is dominated by its cost (forcing \emph{all} traders to wait, including those who would otherwise use market orders).  Formally, the aggregate welfare loss from batch auctions is $\Delta W_{\mathrm{wait}} = \mathbb{E}[C] \cdot (T/2) \cdot N > 0$ for any $T > 0$ and any $\mathbb{E}[C] > 0$ (Remark~1).  This term has no counterpart in the Leshno model because agents cannot bypass the queue.  Its existence explains why $W^{\mathrm{LIT}} > W^{\mathrm{BATCH}}$ even in parameter regions where Leshno's logic would suggest SIRO dominates.

\paragraph{Contribution~2: Endogenous participation and deadweight loss.}
With endogenous participation, execution uncertainty in batch auctions generates deadweight loss (Remark~2, equation~59): some traders who would participate under a continuous mechanism exit when the pro-rata fill rate makes expected utility fall below the outside option. This deadweight loss is absent from both reference papers.  It provides an additional welfare disadvantage of SIRO beyond the forced-waiting cost. The magnitude of the deadweight loss depends on the distribution of trader types and the outside-option value; these are parameters that can be calibrated empirically.

\paragraph{Contribution~3: Adverse selection under FCFS-with-opacity.}
\cite{CheTercieux2023} establish that no-information FCFS is optimal for a planner, but do not characterize the \emph{adverse selection} that opacity creates when traders have heterogeneous private information. In a financial market, dark-pool traders execute at prices that may differ from the true fundamental value $\hat{v}$ because they cannot condition on the book state $B_t$.  Section~\ref{sec:adverse_selection} of this paper characterizes the two channels through which adverse selection affects welfare (the price channel and the participation channel), establishes when each is dominated by the timing-game saving, and derives the threshold $\Delta^*(\lambda)$ above which adverse selection reverses the welfare ranking.  This characterization is novel and has no counterpart in either reference paper.

\paragraph{Contribution~4: Competitive equilibrium, not mechanism design.}
This paper characterizes welfare in the competitive Bayesian equilibria of three market mechanisms (Propositions~1--3), not in a planner-designed mechanism.  The equilibrium concept is Bayes--Nash rather than dominant-strategy incentive compatibility. Market orders, limit orders, and outside options are all available, and traders choose among them strategically.  The equilibrium strategies (cutoff functions $C^*(V,B)$, $\bar{C}^*$, and $\tau^*(V,C)$) are derived endogenously, not imposed by a designer.  This is the appropriate concept for evaluating real market institutions, and it generates predictions about observable behavior (e.g., participation rates, order-type composition, and welfare gaps) that the mechanism-design approach does not.

\paragraph{Contribution~5: A complete ranking with sufficient conditions.}
Both reference papers compare two mechanisms (SIRO vs.\ FCFS, or informed vs.\ uninformed FCFS).  This paper compares three — dark pool, lit exchange, and batch auction — and derives a complete ranking $W^{\mathrm{DARK}} > W^{\mathrm{LIT}} > W^{\mathrm{BATCH}}$ with explicit sufficient conditions on the arrival rate $\lambda$, valuation dispersion $\Delta$, and reporting delay $\delta$. Section~\ref{sec:robustness} characterizes these conditions precisely and identifies the parameter regions in which each inequality reverses. This complete, conditional ranking is new to the literature on market mechanism comparison.

\medskip
\noindent
Table~\ref{tab:literature_comparison} summarizes the key differences between this paper and the two reference papers across five modeling dimensions.

\begin{table}[h]
\centering
\caption{Structural comparison with the two reference papers.}
\label{tab:literature_comparison}
\small
\begin{tabular}{lp{3.5cm}p{3.5cm}p{3.5cm}}
\toprule
 & \textbf{\cite{Leshno2022}} & \textbf{\cite{CheTercieux2023}} & \textbf{This paper} \\
\midrule
Market sides
  & One-sided (agents wait for items)
  & One-sided (agents wait for server)
  & Two-sided (buyers and sellers) \\[4pt]
Arrival
  & Exogenous
  & Endogenous entry; no exit option
  & Endogenous entry \& exit (outside option) \\[4pt]
Order types
  & None (must wait)
  & None (must wait)
  & Market orders, limit orders, exit \\[4pt]
Information design
  & Fixed (agents observe queue)
  & Planner optimizes
  & Fixed by institution; endogenous beliefs \\[4pt]
Welfare criterion
  & Allocative efficiency (mismatch minimization)
  & Agent utility + provider payoff
  & Total surplus net of waiting \& transaction costs \\[4pt]
Main result
  & SIRO $>$ FCFS (reduces mismatch)
  & No-info FCFS $>$ all (belief dynamics)
  & Dark pool $>$ lit $>$ batch (eliminates timing waste) \\[4pt]
Adverse selection
  & Absent
  & Absent
  & Present; characterized in Sec.~\ref{sec:adverse_selection} \\[4pt]
Mechanisms compared
  & Two (SIRO, FCFS)
  & Two (informed, uninformed FCFS)
  & Three (dark pool, lit exchange, batch auction) \\
\bottomrule
\end{tabular}
\end{table}

\section{Notation}
\label{sec:notation}
\addcontentsline{toc}{section}{Notation}

The following table defines every symbol used in the paper in the order it first appears.  Symbols are grouped by conceptual category.  Where a symbol
is overloaded or has a mechanism-specific variant, the variants are listed together.  Equation and section numbers in the rightmost column give the
location of the formal definition or first substantive use.

\renewcommand{\arraystretch}{1.35}

\begin{longtable}{@{} p{2.6cm} p{7.8cm} p{2.0cm} @{}}
\toprule
\textbf{Symbol} & \textbf{Definition} & \textbf{First use} \\
\midrule
\endfirsthead
\toprule
\textbf{Symbol} & \textbf{Definition} & \textbf{First use} \\
\midrule
\endhead
\bottomrule
\endfoot

% ------------------------------------------------------------------
% A. Time and the trading environment
% ------------------------------------------------------------------
\multicolumn{3}{@{}l}{\textit{A. \ Time and the trading environment}} \\[2pt]

$t \in [0,\infty)$
  & Continuous trading time.
  & Sec.~3.1 \\

$v$
  & Fundamental (consensus) value of the risky asset, commonly known.
    Written $\hat{v}$ in the distributional assumption and $\bar{v}$ in the
    outside-option utility; both refer to the same object.
    \textbf{Note:} $\hat{v}$ and $\bar{v}$ are identical and should be
    unified to a single symbol (we adopt $\hat{v}$ throughout).
  & Sec.~3.1 \\

$\sigma^2_v$
  & Residual variance of the fundamental value; captures uncertainty
    about $v$ beyond the commonly known mean $\hat{v}$.
  & Sec.~3.1 \\

$\lambda$
  & Poisson arrival intensity of traders (orders per unit time).
  & Sec.~3.1 \\

$\lambda_{\min},\,\lambda_{\max}$
  & Lower and upper bounds on $\lambda$ constituting the ``moderate
    arrival rate'' condition of Theorem~2.
  & Thm.~2 \\

% ------------------------------------------------------------------
% B. Trader types
% ------------------------------------------------------------------
\multicolumn{3}{@{}l}{\textit{B. \ Trader types}} \\[2pt]

$\theta_i = (V_i, C_i, s_i, \tau_i)$
  & Type vector of trader $i$: private valuation $V_i$, waiting cost
    $C_i$, trade direction $s_i$, and arrival time $\tau_i$.
  & Sec.~3.1 \\

$V_i$
  & Private valuation benefit of trader $i$.  Drawn from
    $\mathrm{Uniform}[\hat{v}-\Delta,\,\hat{v}+\Delta]$ (Assumption~1).
  & Sec.~3.1 \\

$\Delta$
  & Half-width of the private-valuation distribution; measures dispersion
    in valuations (adverse selection parameter).
  & Assumption~1 \\

$\Delta_{\max}$
  & Upper bound on $\Delta$ constituting the ``bounded adverse selection''
    condition of Theorem~2.
  & Thm.~2 \\

$C_i$
  & Per-unit-time waiting cost of trader $i$.  Drawn from
    $\mathrm{Exp}(\mu_C)$ (Assumption~1).  Higher $\mu_C$ implies more
    patient traders ($\mathbb{E}[C] = 1/\mu_C$).
  & Sec.~3.1 \\

$\mu_C$
  & Rate parameter of the waiting-cost exponential distribution.
  & Assumption~1 \\

$s_i \in \{-1,+1\}$
  & Trade direction of trader $i$: $s_i = +1$ (buy) or $s_i = -1$ (sell).
    Drawn i.i.d.\ with $\Pr(s_i = +1) = \Pr(s_i = -1) = 1/2$.
  & Sec.~3.1 \\

$\tau_i$
  & Arrival time of trader $i$.
  & Sec.~3.1 \\

$F(V,C,s)$
  & Joint cumulative distribution function of trader types
    $(V_i, C_i, s_i)$, with continuous density $f$.
  & Sec.~3.1 \\

% ------------------------------------------------------------------
% C. Utility and outside option
% ------------------------------------------------------------------
\multicolumn{3}{@{}l}{\textit{C. \ Utility and outside option}} \\[2pt]

$U_i(p, w_i)$
  & Utility of trader $i$ who executes at price $p$ after waiting time
    $w_i \geq 0$: $\; s_i(V_i - p) - C_i w_i - K$.
    See equation~(1).
  & Eq.~(1) \\

$p$
  & Execution price.
  & Eq.~(1) \\

$w_i$
  & Waiting time of trader $i$ from arrival to execution.
  & Eq.~(1) \\

$K > 0$
  & Fixed transaction cost (includes exchange fees, market impact, etc.).
    Identical across mechanisms.
  & Eq.~(1) \\

$U^o_i$
  & Utility of the outside option for trader $i$:
    $\gamma \cdot s_i(V_i - \hat{v}) - K_o$.
    See equation~(2).
  & Eq.~(2) \\

$\gamma \in [0,1]$
  & Fraction of private value realisable outside the market (e.g., via
    delayed trading or alternative venues).
  & Eq.~(2) \\

$\bar{v}$
  & Reference price in the outside-option utility, equal to $\hat{v}$.
    See note under $v$ above.
  & Eq.~(2) \\

$K_o \geq 0$
  & Fixed cost of pursuing the outside option.  Assumption: $K_o < K$,
    ensuring market participation is valuable in expectation.
  & Eq.~(2) \\

% ------------------------------------------------------------------
% D. Order book and market state
% ------------------------------------------------------------------
\multicolumn{3}{@{}l}{\textit{D. \ Order book and market state}} \\[2pt]

$B_t = (p_j, q_j, \tau_j)$
  & State of the limit order book at time $t$: a collection of tuples
    giving limit price $p_j$, quantity $q_j$, and submission time
    $\tau_j$ of each resting order.
  & Sec.~4.1.1 \\

$B^{\mathrm{ask}}_t,\,B^{\mathrm{bid}}_t$
  & Ask side (sell limit orders) and bid side (buy limit orders) of $B_t$.
  & Eq.~(17) \\

$p_{\mathrm{best}}(B, s)$
  & Best available execution price in state $B$ for direction $s$:
    lowest ask if $s=+1$, highest bid if $s=-1$.  See equation~(17).
  & Eq.~(17) \\

$H[0,t]$
  & Complete public history of trades (price, quantity, timestamp) up to
    time $t$.
  & Eq.~(4) \\

$\delta > 0$
  & Dark-pool post-trade reporting delay: trades are disclosed at time
    $t + \delta$ rather than $t$.  Theorem~2 requires $\delta < T/2$.
  & Sec.~4.1.2 \\

% ------------------------------------------------------------------
% E. Information sets
% ------------------------------------------------------------------
\multicolumn{3}{@{}l}{\textit{E. \ Information sets}} \\[2pt]

$I^{\mathrm{LIT}}_i(\tau_i)$
  & Information set of trader $i$ in the lit exchange at arrival time
    $\tau_i$: current book $B_{\tau_i}$ plus full trade history
    $H[0,\tau_i]$.  See equation~(4).
  & Eq.~(4) \\

$I^{\mathrm{DARK}}_i(\tau_i)$
  & Information set in the dark pool: delayed history $H[0,\tau_i-\delta]$,
    type distribution $F(V,C,s)$, and arrival rate $\lambda$.
    Crucially $B_{\tau_i} \notin I^{\mathrm{DARK}}_i$.
    See equation~(5).
  & Eq.~(5) \\

$I^{\mathrm{BATCH}}_i(\tau_i)$
  & Information set in the batch auction: batch index $k$ such that
    $\tau_i \in [(k-1)T, kT)$, history $H[0,(k-1)T]$, type distribution,
    and $\lambda$.  See equation~(6).
  & Eq.~(6) \\

$\beta_i(B_{\tau_i} \mid I_i)$
  & Belief of trader $i$ about the current book state $B_{\tau_i}$,
    conditional on their information set $I_i$.
  & Sec.~4.2.2 \\

$\pi^{\mathrm{eq}}(B)$
  & Stationary (steady-state) distribution of book states under
    equilibrium strategies.
  & Eq.~(36) \\

% ------------------------------------------------------------------
% F. Arrival rates and execution
% ------------------------------------------------------------------
\multicolumn{3}{@{}l}{\textit{F. \ Arrival rates and execution}} \\[2pt]

$\lambda(p, B, s)$
  & Arrival rate of market orders in direction $-s$ that would execute
    against a limit order at price $p$ given book state $B$.
    Satisfies Assumption~3 (regularity).
  & Sec.~4.1.1 \\

$w(p, B, s) = \frac{1}{\lambda(p,B,s)}$
  & Expected waiting time until execution of a limit order posted at
    price $p$ in state $B$ for direction $s$.  See equation~(19).
  & Eq.~(19) \\

$\bar{w}(p, s)$
  & Dark-pool analogue: expected waiting time averaged over the
    stationary book distribution $\pi^{\mathrm{eq}}$.  See equation~(42).
  & Eq.~(42) \\

% ------------------------------------------------------------------
% G. Equilibrium objects — lit exchange
% ------------------------------------------------------------------
\multicolumn{3}{@{}l}{\textit{G. \ Equilibrium objects — lit exchange (Proposition~1)}} \\[2pt]

$V^{\mathrm{LIT}}(\theta, B)$
  & Value function of a trader of type $\theta$ who observes book state
    $B$ in the lit exchange; equals the maximum of $V_{\mathrm{market}}$,
    $V_{\mathrm{limit}}$, $V_{\mathrm{wait}}$, and $U^o$.
    See equation~(8).
  & Eq.~(8) \\

$V_{\mathrm{MO}}(V, C, s, B)$
  & Utility of submitting a market order: $s(V - p_{\mathrm{best}}(B,s)) - K$.
    Does not depend on $C$ (zero waiting time).  See equation~(18).
  & Eq.~(18) \\

$U_{\mathrm{LO}}(V, C, s, p, B)$
  & Utility of posting a limit order at price $p$:
    $s(V-p) - C/\lambda(p,B,s) - K$.  See equation~(20).
    \textbf{Note:} also written $V_{\mathrm{LO}}$ in equation~(22) for
    the maximized value; these are the same object evaluated at the optimal price.
  & Eq.~(20) \\

$p^*(C, s, B)$
  & Optimal limit price in the lit exchange, solving the first-order condition $C \cdot \partial\lambda/\partial p = \lambda^2$
    (equation~12).  Unique by Lemma~1; depends on $(C, s, B)$ but not on $V$.
  & Eq.~(12) \\

$\Delta p(C, B, s)$
  & Price improvement from using a limit order relative to a market
    order: $s \cdot (p^*(C,s,B) - p_{\mathrm{best}}(B,s))$.  Positive
    for both buy and sell.  See equation~(26).
    \textbf{Note:} distinct from the valuation half-width $\Delta$; context disambiguates.
  & Eq.~(26) \\

$C^*(V, B)$
  & Waiting-cost cutoff separating market-order traders ($C > C^*$)
    from limit-order traders ($C \leq C^*$) in the lit exchange.
    State-dependent; defined by equation~(15).
  & Eq.~(15) \\

$\varepsilon(B)$
  & Valuation cutoff below which traders take the outside option in the
    lit exchange; defined in equation~(32).
  & Eq.~(32) \\

$\sigma^{\mathrm{LIT}}(\theta, B)$
  & Equilibrium strategy of trader $\theta$ in the lit exchange:
    market order if $C > C^*(V,B)$; limit order if $C \leq C^*(V,B)$
    and $|V - \hat{v}| > \varepsilon(B)$; outside option otherwise.
  & Prop.~1 \\

% ------------------------------------------------------------------
% H. Equilibrium objects — dark pool
% ------------------------------------------------------------------
\multicolumn{3}{@{}l}{\textit{H. \ Equilibrium objects — dark pool (Proposition~2)}} \\[2pt]

$V^{\mathrm{DARK}}(\theta, I)$
  & Value function in the dark pool; expectations over beliefs $\beta(B \mid I)$.
    See equation~(34).
  & Eq.~(34) \\

$\bar{C}^*$
  & State-independent waiting-cost cutoff in the dark pool (constant,
    not a function of $B$).  Defined by equation~(51):
    $\bar{C}^* = \bar{\Delta}(C,s) / \bar{w}(\bar{p}^*, s)$.
  & Eq.~(51) \\

$\bar{p}^*(C, s)$
  & Optimal limit price in the dark pool; depends on $(C,s)$ only, not
    on $V$ or $B$.  Defined by equation~(44).
  & Eq.~(44) \\

$\bar{\Delta}(C, s)$
  & Expected price improvement from a limit order in the dark pool,
    averaged over $\pi^{\mathrm{eq}}$:
    $\bar{\Delta} = s(\bar{p}^*(C,s) - \mathbb{E}[p_{\mathrm{best}}])$.
    See equation~(49).
  & Eq.~(49) \\

$\bar{\varepsilon}$
  & State-independent valuation cutoff for outside-option in the dark pool.
    See equation~(56).
  & Eq.~(56) \\

% ------------------------------------------------------------------
% I. Equilibrium objects — batch auction
% ------------------------------------------------------------------
\multicolumn{3}{@{}l}{\textit{I. \ Equilibrium objects — batch auction (Propositions~3--4)}} \\[2pt]

$T > 0$
  & Batch interval length; batches clear at times $T, 2T, 3T, \ldots$
  & Sec.~4.1.3 \\

$k$
  & Batch index; batch $k$ spans $[(k-1)T, kT)$.
  & Sec.~4.1.3 \\

$p_k$
  & Uniform clearing price of batch $k$, maximizing executed volume.
    See equation~(68).
  & Eq.~(68) \\

$D_k,\, S_k$
  & Total demand (buy orders) and supply (sell orders) submitted to
    batch $k$.
  & Eq.~(69) \\

$\rho_k$
  & Fill (execution) rate in batch $k$:
    $\rho^{\mathrm{buy}}_k = \min\{1, S_k/D_k\}$,
    $\rho^{\mathrm{sell}}_k = \min\{1, D_k/S_k\}$.
    See equations~(69)--(70).
  & Eq.~(69) \\

$\bar{\rho}$
  & Stationary fill rate under Assumption~5
    ($\rho_k = \bar\rho$ for all $k$).
  & Assump.~5 \\

$V_k(\theta)$
  & Expected utility of entering batch $k$ for trader $\theta$:
    $\rho_k [s(V - \mathbb{E}[p_k]) - C(kT-\tau) - K]
     + (1-\rho_k) V^{\mathrm{OUT}}$.
    See equation~(71).
  & Eq.~(71) \\

$\tau^*(V, C)$
  & Cutoff arrival time within a batch below which trader $(V,C)$ waits
    for the next batch; decreasing in $C$ (Lemma~8).
  & Lemma~8 \\

% ------------------------------------------------------------------
% J. Welfare
% ------------------------------------------------------------------
\multicolumn{3}{@{}l}{\textit{J. \ Welfare}} \\[2pt]

$W^M$
  & Aggregate ex-ante welfare under mechanism
    $M \in \{\mathrm{LIT}, \mathrm{DARK}, \mathrm{BATCH}\}$:
    expected total surplus across all traders before types are realised,
    net of waiting and transaction costs but excluding price transfers.
    See equation~(61).
  & Eq.~(61) \\

$\sigma^M$
  & Equilibrium strategy profile under mechanism $M$.
  & Eq.~(61) \\

$\Delta W_{\mathrm{wait}}$
  & Welfare loss in batch auctions from forced waiting: $\mathbb{E}[C]
    \cdot (T/2) \cdot N$.  See equation~(58) and Remark~1.
  & Eq.~(58) \\

$\Delta W_{\mathrm{rationing}}$
  & Welfare loss from execution uncertainty (pro-rata rationing) in
    batch auctions.  See equation~(59) and Remark~2.
  & Eq.~(59) \\

$N^M$
  & Number of participants (traders who do not take outside option) under
    mechanism $M$.
  & Eq.~(58) \\

% ------------------------------------------------------------------
% K. Mappings and fixed-point objects
% ------------------------------------------------------------------
\multicolumn{3}{@{}l}{\textit{K. \ Fixed-point and equilibrium existence objects}} \\[2pt]

$\Phi$
  & Best-response mapping used in fixed-point existence proofs
    (Lemma~3, Theorem~1, Theorem~3); maps conjectured cutoffs to
    best-response cutoffs.  Context identifies which equilibrium is
    being proved.
  & Lemma~3 \\

$L(\cdot \mid \cdot)$
  & Likelihood of observing the delayed trading history
    $H[0, \tau_i - \delta]$ given book state $B_{\tau_i}$; used in the
    dark-pool Bayesian updating formula~(7).
  & Eq.~(7) \\

$P^\delta(B' \mid B)$
  & $\delta$-step Markov transition kernel for book states; used in
    Lemma~4 to show belief convergence in the dark pool.
  & Eq.~(37) \\

\bottomrule
\end{longtable}

\medskip
\noindent
\textbf{Disambiguation notes.}
Three symbols require explicit disambiguation because they are used with
similar (or identical) typography for conceptually distinct objects.

\begin{enumerate}

  \item \textbf{$\hat{v}$ vs.\ $\bar{v}$.}  The paper uses $\hat{v}$ in
  the model setup (Assumption~1, equation~3) and $\bar{v}$ in the
  outside-option utility (equation~2).  Both denote the same object: the
  commonly known consensus fundamental value.  These should be unified;
  we recommend retaining $\hat{v}$ throughout.

  \item \textbf{$\Delta$ (valuation dispersion) vs.\
  $\Delta p$ (price improvement).}  The symbol $\Delta$ without a subscript
  always denotes the half-width of the valuation distribution
  (Assumption~1).  The symbol $\Delta p(C, B, s)$ (equation~26) denotes
  the price improvement from a limit order relative to a market order.
  $\bar\Delta(C, s)$ (equation~49) is its dark-pool counterpart.
  These are related but distinct objects.

  \item \textbf{$U_{\mathrm{LO}}$ vs.\ $V_{\mathrm{LO}}$.}  Equation~(20)
  defines $U_{\mathrm{LO}}$ as the utility of a limit order at a given price $p$.  Equation~(22) defines $V_{\mathrm{LO}}$ as the \emph{maximized} value after optimizing over $p$.  Proposition~1 refers to both; $V_{\mathrm{LO}}$ is the envelope of $U_{\mathrm{LO}}$. For clarity, $V_{\mathrm{LO}}$ should be read as
  $\max_p\, U_{\mathrm{LO}}(\,\cdot\,, p, \,\cdot\,)$ throughout.

\end{enumerate}

\renewcommand{\arraystretch}{1.0}

\section{Market Models as Queuing Systems}
\label{sec:models}

Most financial market mechanisms can be represented as queueing systems. 
\subsection{Traditional Exchanges and Bitcoin}
As shown in \cite{contStoikovTalreha2010}, a market mechanism for traditional registered exchanges can be represented as a First Come First Served queuing system (FCFS). This model is formally known as a double-sided continuous auction. In this model, all resting (limit) orders to buy and sell are arranged by price and, within each price "queue", their arrival sequence. Limit orders are analogous to waiting customers in the traditional queuing literature. The arriving market orders are matched with the best-priced earliest-arrival limit orders. The market orders are the "servers" that complete an FCFS mechanism. The FCFS markets have not been immune to manipulation. For example, \cite{CongEtAl2022} documents issues in the Bitcoin markets. 

\subsection{Dark Pools}
Dark pools are also double-sided continuous auctions, just like lit exchanges, with one important informational difference. While the exchanges distribute various details about the composition of their respective queuing systems, known as limit order books (LOBs), dark pools keep their LOBs hidden from all market participants. However, by law, even dark pools are required to make public all trade details after each trade (limit- and market order matching) occurs. Dark pools are also not perfect. For example, \cite{Zhu2013} explored manipulation issues in dark pools.

\subsection{Batch Auctions}

Batch auctions, as a theoretical market-design mechanism, were proposed by \citet{BudishEtAl2015} as a response to the welfare costs of high-frequency trading arms races in continuous markets.  In the batch auction setup, a mini-auction takes place at fixed time intervals, and all orders submitted within each interval are executed simultaneously at a uniform clearing price determined by aggregate supply and demand.  \citet{BudishEtAl2015} argued that eliminating continuous-time priority competition reduces socially wasteful investment in speed and narrows bid--ask spreads.  However, as this paper shows, batch auctions introduce distinct welfare costs of their own, most notably through forced waiting and execution uncertainty.

As a queueing system, each batch auction pools orders for a fixed window and then clears them collectively.  Within each batch, there is no time priority:
orders are matched in \emph{Service in Random Order} (SIRO), meaning all orders at the clearing price have an equal probability of execution, regardless of
submission timing.  This is the defining structural feature we analyze formally in Section~\ref{sec:mechanisms}.

\paragraph{The EU as the relevant real-world context.}
The most policy-relevant deployment of periodic batch auction mechanisms is not in blockchain settings but in European equity markets, where \emph{Frequent
Batch Auctions} (FBAs) have emerged as a distinct and growing trading-venue category under the MiFID~II/MiFIR regulatory framework.  Since MiFID~II came into application in January 2018, FBAs rapidly gained market share by offering an alternative to both lit continuous order books and dark pools.  Unlike continuous auctions, FBAs run repeated short-duration auctions throughout the trading day, with orders matched at a uniform clearing price determined within the prevailing best bid--offer spread.

The EU regulatory debate around FBAs is directly relevant to the welfare questions this paper addresses.  ESMA launched its first Call for Evidence on FBAs in November 2018, specifically because of concerns that they were being used to circumvent the MiFID~II Double Volume Cap (DVC)---a cap designed to
limit dark trading by restricting how much volume could execute under pre-trade transparency waivers \citep{ESMA2018CfE}.  ESMA identified four main
characteristics of FBA systems warranting regulatory scrutiny: limited pre-trade transparency, short auction duration, price determination within the
best bid--offer price, and self-matching features \citep{ESMA2019FinalReport}. Following its assessment, ESMA published an Opinion in October 2019 clarifying that trading venues operating FBA systems must inform market participants that an auction has started, thereby enabling genuine pre-trade transparency. The Opinion also set out several practices that may undermine the price formation process \citep{ESMA2019Opinion}.

As of the writing of this paper, the debate is active and escalating.  ESMA launched a new Call for Evidence in April 2026 documenting a measurable decline in lit continuous trading between 2022 and 2025, offset by increased activity in closing auctions, frequent batch auctions, and systematic internalizer (SI) trading \citep{ESMA2026CfE}.  The 2026 Call for Evidence provides a granular analysis of periodic auctions specifically, and solicits stakeholder views on whether the current market structure delivers efficient price formation,
transparency, and execution outcomes, and where regulatory adjustments may be needed.  ESMA expects to publish a feedback statement in Q3~2026.

This paper's welfare framework offers a theoretical foundation for several of the open questions ESMA is now asking empirically.  We highlight three direct
connections.

\begin{enumerate}

  \item \textbf{Pre-trade opacity and strategic behavior.}  ESMA's concern that limited pre-trade transparency in FBAs may distort price discovery maps directly onto our analysis of information structures in Section~\ref{sec:info_structures}.  Our model shows that the welfare effect of opacity depends critically on whether it eliminates strategic timing games (as in dark pools) or merely adds execution uncertainty without removing the
  underlying incentive to game the queue (as in batch auctions with non-stationary clearing prices).

  \item \textbf{Forced waiting and the SIRO mechanism.}  The welfare loss formalized in Remark~\ref{rem:forced_waiting} shows that all traders must wait until the batch clears, even those with high waiting costs. This formalization provides a theoretical explanation for why FBAs, despite their transparency advantages over dark pools, may still be welfare-inferior to a well-functioning continuous market.  This is consistent with ESMA's observation that the growth of FBAs is occurring alongside, not instead of, continued investor demand for immediate execution via systematic internalizers.

  \item \textbf{Participation and execution uncertainty.}  The coordination
  problem identified in Corollary~\ref{cor:coordination} documents that thin batches produce low execution probabilities, deterring participation, and, in turn, producing still thinner batches. Corollary~\ref{cor:coordination} offers a theoretical account of why FBAs have consolidated
  market share among a small number of venues (notably the Cboe Periodic Auctions book in Europe) rather than proliferating uniformly.  ESMA's
  country-by-country analysis of FBA liquidity distribution in its 2026 Call for Evidence reflects exactly this pattern \citep{ESMA2026CfE}.

\end{enumerate}

\paragraph{Why the EU context is theoretically appropriate.}
The \citet{BudishEtAl2015} batch auction proposal was designed explicitly to maximize social welfare by eliminating the arms race for speed.  European FBAs
are deployed within a regulatory framework, MiFID~II, that shares this welfare and transparency orientation, making the EU setting a coherent context
in which to evaluate our model's predictions.  Critically, European FBAs implement a genuine uniform clearing-price auction with SIRO execution within
each batch, which is the mechanism our formal model in
Section~\ref{sec:mechanisms} analyzes.  This institutional alignment makes quantitative and qualitative comparison between the theory and the data
possible in principle.

\paragraph{Scope of the batch auction model.}
This paper analyzes the batch auction mechanism in its canonical \citet{BudishEtAl2015} form: fixed-interval, uniform clearing price, SIRO within
each batch.  Real-world FBAs vary in their auction trigger mechanisms (some are time-triggered, while others are order-triggered) and in their price determination rules (some require a reference price from the lit market).  These design variations affect the quantitative magnitude of the welfare effects derived in Section~\ref{sec:welfare}, but not the qualitative ranking, provided the fundamental SIRO structure and the forced-waiting property are preserved. Future work could extend the model to study trigger-based auction designs, which introduce an additional layer of strategic order timing that our current framework abstracts away from.

Next, each of the market queueing models is analyzed in detail.  

\section{Model Setup}\label{sec:setup}
\subsection{Environment and Agents}

Consider a continuous-time trading environment with time $t\in[0,\infty)$. There is a single risky asset with fundamental value $v \sim \mathcal{N}(\hat{v}, \sigma_v^2)$, where $\hat{v}$ is commonly known, but $\sigma_v^2$ represents residual uncertainty. A mass-1 continuum of risk-neutral traders arrives according to a Poisson process with intensity $\lambda$.

Each trader $i$ is characterized by a type $\theta_i = (V_i, C_i, s_i, \tau_i)$, where $V_i$ represents the private valuation benefit, $C_i$ is the cost per unit of waiting time, $s_i\in\{-1,+1\}$ is the desired trade direction (buy or sell) and $\tau_i$ is the arrival time. We assume that the types are drawn independently from a joint distribution F(V,C,s) with continuous density f.

\begin{assumption}
    Private valuations are distributed $V \sim Uniform[\hat{v} - \Delta, \hat{v} + \Delta]$, where $\Delta > 0$ represents dispersion in valuations. Costs are distributed $C \sim \exp(\mu_C)$, where higher $\mu_C$ indicates more patient traders. Trade directions are i.i.d. with $P(s_i = +1) = P(s_i = -1) = 1/2$.

\end{assumption}

\subsection{Utility Specification}
Each trader's utility depends on whether and when they trade, the execution price, and the waiting costs incurred. Formally, trader $i$ who executes at time $t + w_i$ (where $w_i \geq 0$ is waiting time) at price $p$ obtains utility:
\begin{equation}
    U_i(p,w_i)=s_i(V_i-p)-C_i\cdot w_i - K
\end{equation}

where $K > 0$ is a fixed transaction cost (including exchange fees, market impact, etc.). If the trader $i$ does not execute, he receives the utility of the outside option:
\begin{equation}
    U_{oi} = \gamma\cdot s_i(V_i-\bar{v})-K_o
\end{equation}

where $\gamma \in[0,1]$ represents the extent to which private value can be realized outside the market (e.g., through delayed trading or alternative venues), and $K_o\geq 0$ is the cost of pursuing the outside option. We assume that $K_o<K$ makes participation in the market valuable.

\begin{assumption}{Participation Constraint}

For the market to attract participants, we require that expected gains from trade exceed both fixed costs and expected waiting costs: 
\begin{equation}\label{eq:participation-constraint}
    \mathbb{E}[|V_i - \hat{v}|] > K +\mathbb{E}[C_i]\cdot \mathbb{E}[w_i] 
\end{equation}

This ensures strictly positive expected surplus from market participation.
\end{assumption}

\section{Market Mechanism Efficiency}
\label{sec:mechanisms}

The objective of this note is to consider whether one of the three market models results in higher aggregate welfare for all agents in a given system. As shown by \cite{Leshno2019}, the Service In Random Order (SIRO) policy deployed by batch auctions is superior in terms of total welfare to the traditional exchanges' First Come First Served (FCFS) design. As shown by \cite{Leshno2019}, SIRO ex-ante maximizes the aggregate welfare of both customers and servers. 

According to \cite{CheTercieux2023}, however, in the absence of information, the First Come First Served (FCFS) policy provides an even more efficient outcome than SIRO. Specifically, being unaware of one's position in the queue produces ex-ante welfare-efficient outcome, while preserving other desirable properties, such as the fairness of the process and incentive compatibility. This implies that dark pools are the preferred market design among the three models considered in this note. 

\subsection{Mechanism Specifications}

\subsubsection{Lit Exchange (FCFS Continuous Auction)}
The lit exchange operates as a continuous double auction with a transparent limit order book (LOB). At any time $t$, the state of the market is summarized by the order book state $B_t = {(p_j, q_j, \tau_j)}$, where $p_j$ is the limit price, $q_j$ is quantity, and $\tau_j$ is submission time. $B_t$ is publicly observable.

Order execution follows First-Come-First-Served (FCFS) priority within each price level. The matching rule is as follows: a market buy order of size $q$ is matched with the lowest-priced limit sell orders, in order of their submission time, until quantity $q$ is exhausted (or vise versa for market sell orders).

Strategic considerations: Traders observe $B_t$ and optimally choose between submitting a market order (immediate execution at the best available price), a limit order (joining the queue at the chosen price), or waiting. The transparency of $B_t$ creates strategic complementarities in order submission timing.

\subsubsection{Dark Pool (FCFS with Hidden Book)}

Dark pools operate identically to lit exchanges in terms of the matching mechanism (FCFS continuous auction), but with a critical informational difference: the order book state $B_t$ is unobservable to all traders before execution. Traders know only the distribution of order flow, not the realized state. 

However, dark pools are transparent in trade results: after each execution, the trade (price, quantity, timestamp) is publicly reported with delay $\delta > 0$. This creates a filtered information structure in which traders can infer some properties of $B_t$ from past trades, but cannot observe the current state.

The opacity of $B_t$ eliminates strategic timing games based on queue position, but introduces adverse selection risk. Traders must form beliefs about $B_t$ using only distributional information and delayed trade reports.

\subsubsection{Batch Auction (SIRO)}

Periodic batch auctions are clear at discrete intervals $T, 2T, 3T,\dots$ with period $T > 0$. All orders submitted during the interval $[(k-1)T, kT)$ are batched and executed simultaneously in time $kT$ at a uniform clearing price $p_k$ that maximizes the executed volume.

Within each batch, orders are matched in Service-In-Random-Order (SIRO), i.e., all orders at the clearing price have equal probability of execution regardless of submission timing. If demand exceeds supply at $p_k$ (or vice versa), the rationing is uniform random.

SIRO eliminates time priority, so there is no benefit in being the first within a batch. However, traders still strategically choose which batch to enter and face uncertain execution due to pro-rata allocation when orders are unbalanced.

\subsection{Information Structures and Beliefs}

\subsubsection{Information Sets}
We formally define the information available to trader $i$ arriving at time $\tau_i$ under each mechanism:
\begin{itemize}
    \item \textit{Lit Exchange:}
\begin{equation}
    I_i^{LIT}(\tau_i)=\{B_{\tau_i}, H[0,\tau_i]\}
\end{equation}
where $B_{\tau_i}$ is the current order book state and $H[0,\tau_i]$ is the complete trading history up to time $\tau_i$.

\item \textit{Dark Pool:}
\begin{equation}
    I_i^{DARK}(\tau_i) = \{H[0,\tau_i-\delta],F(V,C,s),\lambda\}
\end{equation}
where $H[0,\tau_i-\delta]$ is the trading history with delay $\delta$, $F$ is the type distribution, and $\lambda$ is the arrival rate. In particular, $B_{\tau_i} \notin I_i^{DARK}$.

\item \textit{Batch Auction:}
\begin{equation}
I_i^{BATCH}(\tau_i) = \{kT: \tau_i\in[(k-1)T,kT),H[0,(k-1)T],F(V,C,s),\lambda\}    
\end{equation}
where traders know which batch they are in, but not the composition of orders in their current batch.

\end{itemize}

\subsubsection{Belief Formation}

In dark pools and batch auctions, traders cannot observe the current order book state and must form beliefs. Let $\beta_i(B_{\tau_i} | I_i)$ denote the belief of the trader $i$' about the current order book state conditional on their information set.

Under rational expectations equilibrium, beliefs are consistent with the equilibrium distribution of order flows. For dark pools, traders use Bayes' rule to update beliefs based on delayed trade reports:
\begin{equation}
    \beta_i(B_{\tau_i}|I_i^{DARK})\propto P(B_{\tau_i}|arrival\;process)\cdot L(H[0,\tau_i-\delta]|B_{tau_i})
\end{equation}
where $L(\cdot|\cdot)$ is the likelihood of observing the delayed trading history given the current state of the order book. In equilibrium, these beliefs must be correct on average.

\subsection{Equilibrium Characterization}
\label{sec:equilibria}

\subsubsection{Lit Exchange Equilibrium}

In the lit exchange, traders observe the full order book and optimally choose their order submission strategy. Define the value function for a trader of type $\theta = (V,C,s,\tau)$ who observes the book state $B$:
\begin{equation}
    V^{LIT}(\theta,B)=\max\{V_{market}(\theta,B),V_{limit}(\theta,B),V_{wait}(\theta,B),U_o\}
\end{equation}
where $V_{market}$ is the value of submitting a market order, $V_{limit}$ is from a limit order, $V_{wait}$ is waiting, and $U_o$ is the outside option. The optimal strategy $\sigma^{LIT}(\theta,B)$ selects the action that reaches the maximum.

\begin{assumption}[Regularity on Arrival Rate Function]\label{ass:A1}
\begin{enumerate}[label=(\roman*)]
    \item $\lambda(p, B, s)$ is twice continuously differentiable in $p$ for all $B, s$
    \item For buy orders ($s = +1$): $\frac{\partial \lambda}{\partial p} > 0$ (higher bid prices attract more sellers)
    \item For sell orders ($s = -1$): $\frac{\partial \lambda}{\partial p} < 0$ (lower ask prices attract more buyers)
    \item $\frac{\partial^2 \lambda}{\partial p^2}$ exists and satisfies curvature condition ensuring concavity of $U_{\text{LO}}$
\end{enumerate}
\end{assumption}

Under Assumption~\ref{ass:A1}, we can take the derivative of $U_{\text{LO}}$ with respect to $p$. For notational simplicity, consider a buy order ($s = +1$); the sell order case follows by symmetry. The first-order condition is:
\begin{equation}\label{eq:FOC}
\frac{\partial U_{\text{LO}}}{\partial p} = -1 - C \cdot \frac{\partial}{\partial p}\left[\frac{1}{\lambda(p, B, +1)}\right] = 0
\end{equation}

Using the chain rule:
\begin{equation}
\frac{\partial}{\partial p}\left[\frac{1}{\lambda}\right] = -\frac{1}{\lambda^2} \cdot \frac{\partial \lambda}{\partial p}
\end{equation}

Substituting:
\begin{equation}
-1 + \frac{C}{\lambda^2} \cdot \frac{\partial \lambda}{\partial p} = 0
\end{equation}

Rearranging gives the implicit equation for $p^*$:
\begin{equation}\label{eq:implicit_pstar}
C \cdot \frac{\partial \lambda}{\partial p} = \lambda^2
\end{equation}

This implicitly defines $p^*(V, C, s, B)$ as the solution to equation~\eqref{eq:implicit_pstar} for given $(V, C, s, B)$. Note that $V$ does not appear in equation~\eqref{eq:implicit_pstar}, so $p^*$ depends on $(C, s, B)$ but not on $V$ in the first-order condition. However, $V$ affects the participation decision through the level of $U_{\text{LO}}$.

\begin{lemma}[Existence and Uniqueness of Optimal Limit Price]\label{lem:pstar_exists}
Under Assumption~\ref{ass:A1}, for each $(C, s, B)$ with $C > 0$, there exists a unique interior solution $p^*(C, s, B)$ to the first-order condition. Moreover, $p^*$ is continuous in $(C, B)$ and satisfies:
\begin{enumerate}[label=(\roman*)]
    \item $\frac{\partial p^*}{\partial C} > 0$ for buy orders (higher cost traders post more aggressive bids)
    \item $\frac{\partial p^*}{\partial C} < 0$ for sell orders (higher cost traders post more aggressive asks)
\end{enumerate}
\end{lemma}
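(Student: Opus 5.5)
Work with the buy case $s=+1$; the sell case follows by the reflection $p\mapsto -p$, which turns $\partial\lambda/\partial p<0$ into a positive derivative and flips the sign of the comparative static. Define the marginal-utility function
\[
g(p;C,B)=\frac{\partial U_{\text{LO}}}{\partial p} = -1 + \frac{C}{\lambda(p,B,+1)^2}\,\frac{\partial\lambda}{\partial p}(p,B,+1).
\]
The proof has three parts: existence of a root, uniqueness and interiority, and the properties of the implicit function.

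\emph{Existence.} I read Assumption~\ref{ass:A1}(iv) as saying that $U_{\text{LO}}$ is strictly concave in $p$ on the admissible price interval $(\underline{p},\bar p)$, where $\bar p=p_{\mathrm{best}}(B,+1)$ is the marketable ask. Strict concavity means $g$ is strictly decreasing in $p$.

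To get a sign change, I need boundary behaviour. This is the main obstacle, because Assumption~\ref{ass:A1} as stated does not supply it, and I would have to make it explicit as part of the regularity condition:
\begin{itemize}
\item as $p\downarrow \underline{p}$, the execution rate $\lambda\to 0$ while $\partial\lambda/\partial p$ stays bounded away from zero, so $C\lambda'/\lambda^2\to\infty$ and $g>0$;
\item as $p\uparrow \bar p$, $\lambda$ stays bounded below while $\lambda'$ is bounded, so for the relevant $C$ we have $C\lambda'/\lambda^2<1$ and $g<0$.
\end{itemize}
Given these limits, continuity of $g$ (from $C^2$ smoothness and $\lambda>0$) and the intermediate value theorem give a root $p^*\in(\underline{p},\bar p)$. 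Strict monotonicity of $g$ makes the root unique. Since $g$ changes sign from positive to negative there, $p^*$ is the global maximizer of $U_{\text{LO}}$. If the upper-boundary condition fails for very large $C$, the solution is a corner at the market order; I would note that this corner is exactly the regime $C>C^*$ handled later.

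\emph{Continuity.} Write $G(p,C,B)=C\lambda'(p,B)-\lambda(p,B)^2$, so that $p^*$ solves $G=0$. Strict concavity gives $\partial g/\partial p<0$ at the root. Because $g=G/\lambda^2$ and $G=0$ at $p^*$, this is equivalent to $\partial G/\partial p<0$ there. The implicit function theorem then gives a $C^1$ function $p^*(C,B)$ locally, and it is continuous in $(C,B)$ provided $\lambda$ and $\lambda'$ are jointly continuous in $B$. I would add that joint continuity to the hypotheses.

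\emph{Comparative static.} Differentiating $G(p^*(C,B),C,B)=0$ gives
\[
\frac{\partial p^*}{\partial C}=-\frac{\partial G/\partial C}{\partial G/\partial p}=-\frac{\lambda'(p^*)}{\partial G/\partial p}.
\]
The numerator $\lambda'$ is positive for buys by Assumption~\ref{ass:A1}(ii), and the denominator is negative, so $\partial p^*/\partial C>0$. This is part (i).

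For sells, $\lambda'<0$ and the FOC becomes $-C\lambda'=\lambda^2$. The same computation gives $\partial p^*/\partial C = -(-\lambda')/(\partial\tilde G/\partial p)$, where $\tilde G=-C\lambda'-\lambda^2$. By concavity $\partial\tilde G/\partial p$ now has the sign that makes $\tilde g$ increasing in the reflected variable, i.e.\ $\partial\tilde G/\partial p>0$, and therefore $\partial p^*/\partial C<0$. This is part (ii).

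The algebra is routine. The substantive work is making the boundary and concavity content of Assumption~\ref{ass:A1}(iv) precise enough to deliver interiority.
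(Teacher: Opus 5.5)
Your argument follows the same route as the paper's proof. Existence comes from the intermediate value theorem applied to the first-order condition, uniqueness from the second-order condition, and continuity and the signs of $\partial p^*/\partial C$ from the implicit function theorem. You actually carry out the sign computation that the paper only asserts. Your passive-end condition ($\lambda\to 0$ with $\partial\lambda/\partial p$ bounded away from zero, so $C\lambda'/\lambda^2\to\infty$) is the right one. By contrast, the paper's ``if $\partial\lambda/\partial p\to 0$ sufficiently fast, then $h(p)>0$'' gets the rate backwards.

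The genuine shortfall is at the aggressive end. You assume $\lambda$ is bounded below and $\lambda'$ is bounded as $p\uparrow\bar p$. That gives $g<0$ near $\bar p$ only when $C\,\limsup_{p\uparrow\bar p}\lambda'/\lambda^2<1$. So you establish an interior root only for small $C$ and concede a corner otherwise, whereas the lemma claims an interior $p^*$ for every $C>0$.

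What delivers the statement is a condition on the rate of $\lambda$, not its level. Require $\partial_p(1/\lambda)=-\lambda'/\lambda^2\to 0$ as $p\uparrow\bar p$; then $g\to -1$ uniformly in $C$. This is the correct form of the paper's ``$\lambda\to\infty$ at the market,'' which on its own is not enough. For example, $1/\lambda=(\bar p-p)+(\bar p-p)^2$ satisfies Assumption~\ref{ass:A1} (including strict concavity of $U_{\mathrm{LO}}$) and has $\lambda\to\infty$. Yet $\lambda'/\lambda^2 = 1+2(\bar p-p)\to 1$, so for $C>1$ the limit-order utility is increasing all the way to $\bar p$ and there is no interior optimum.

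One smaller repair is needed. Strict concavity of $U_{\mathrm{LO}}$ does not by itself give $\partial g/\partial p<0$ at the root, since a strictly concave function can have a vanishing second derivative at a point. You need that strict inequality both to invoke the implicit function theorem and to get the strict signs in (i)--(ii). Read Assumption~\ref{ass:A1}(iv) as $\partial^2U_{\mathrm{LO}}/\partial p^2<0$, which is how the paper's proof uses it.
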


\begin{proof}[Proof of Lemma~\ref{lem:pstar_exists}]
Define the function $h(p) = C \cdot \frac{\partial \lambda(p, B, s)}{\partial p} - \lambda(p, B, s)^2$. We seek $p$ such that $h(p) = 0$. 

\textbf{Existence:} To show existence, we verify that $h$ crosses zero. As $p$ approaches the most aggressive possible price (matching $p_{\text{best}}$), $\lambda \to \infty$ (immediate execution), so $h(p) < 0$. As $p$ becomes very passive (far from market), $\lambda \to 0$, and if $\frac{\partial \lambda}{\partial p} \to 0$ sufficiently fast, then $h(p) > 0$. By continuity and the intermediate value theorem, there exists $p^*$ such that $h(p^*) = 0$.

\textbf{Uniqueness:} For uniqueness, consider the second-order condition. We require $\frac{\partial^2 U_{\text{LO}}}{\partial p^2} < 0$. Computing:
\begin{equation}
\frac{\partial^2 U_{\text{LO}}}{\partial p^2} = -C \cdot \frac{\partial^2}{\partial p^2}\left[\frac{1}{\lambda}\right] = -C \cdot \left[\frac{2}{\lambda^3} \cdot \left(\frac{\partial \lambda}{\partial p}\right)^2 - \frac{1}{\lambda^2} \cdot \frac{\partial^2 \lambda}{\partial p^2}\right]
\end{equation}

At the first-order condition where $C \cdot \frac{\partial \lambda}{\partial p} = \lambda^2$, this becomes:
\begin{equation}
\frac{\partial^2 U_{\text{LO}}}{\partial p^2} = -C \cdot \left[\frac{2}{\lambda} \cdot \left(\frac{\partial \lambda}{\partial p}\right)^2 - \frac{1}{\lambda^2} \cdot \frac{\partial^2 \lambda}{\partial p^2}\right]
\end{equation}

Under Assumption~\ref{ass:A1}(iv), this is negative, ensuring $p^*$ is a unique maximum. 

\textbf{Comparative statics:} Follow from the implicit function theorem applied to equation~\eqref{eq:implicit_pstar}.
\end{proof}

\begin{lemma}[Monotone Cutoff Property]\label{lem:cutoff}
For fixed $(V, B, s)$, define:
\begin{equation}\label{eq:Cstar}
C^*(V, B, s) = \Delta p(C, B, s) \cdot \lambda(p^*(C, B, s), B, s)
\end{equation}

Then:
\begin{enumerate}[label=(\roman*)]
    \item Traders with $C > C^*$ strictly prefer market orders
    \item Traders with $C < C^*$ strictly prefer limit orders
    \item Traders with $C = C^*$ are indifferent
\end{enumerate}
\end{lemma}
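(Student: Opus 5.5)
The plan is to reduce the comparison between market and limit orders to a one-dimensional monotonicity argument in $C$. The key fact is that $V$ and $K$ drop out of the difference between the two payoffs. For fixed $(V,B,s)$, write the market-order value $V_{\mathrm{MO}} = s(V - p_{\mathrm{best}}(B,s)) - K$ and the maximized limit-order value $V_{\mathrm{LO}}(C) = \max_p U_{\mathrm{LO}}(V,C,s,p,B)$. By Lemma~\ref{lem:pstar_exists}, the maximum is attained at the unique interior $p^*(C,s,B)$. Define the gain from posting a limit order,
\begin{equation*}
G(C) \;=\; V_{\mathrm{LO}}(C) - V_{\mathrm{MO}} \;=\; \Delta p(C,B,s) \;-\; \frac{C}{\lambda(p^*(C,B,s),B,s)},
\end{equation*}
where I read $\Delta p$ as the (positive) price improvement $s\,(p_{\mathrm{best}} - p^*)$ of the passive quote relative to crossing the spread. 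Since $V$ and $K$ cancel, the sign of $G(C)$ alone determines the preferred order type. The statement's $C^*$ is then naturally interpreted as the solution of the fixed-point equation $C = \Delta p(C,B,s)\,\lambda(p^*(C,B,s),B,s)$, which is exactly $G(C)=0$.

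\textbf{Step 1 (strict monotonicity).} Apply the envelope theorem to $V_{\mathrm{LO}}(C)$. Because $p^*$ is an interior optimum and is continuous (indeed $C^1$ by the implicit function theorem used in Lemma~\ref{lem:pstar_exists}), we get
\begin{equation*}
\frac{dV_{\mathrm{LO}}}{dC} = \frac{\partial U_{\mathrm{LO}}}{\partial C}\Big|_{p=p^*} = -\frac{1}{\lambda(p^*,B,s)} < 0 .
\end{equation*}
$V_{\mathrm{MO}}$ does not depend on $C$, so $G$ is continuous and strictly decreasing. This single step delivers the ordering in parts (i)--(iii), provided a root exists.

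\textbf{Step 2 (sign change).} As $C \downarrow 0$, the waiting term vanishes. Meanwhile $\Delta p$ stays bounded away from zero, since the cheap-to-wait trader quotes passively. Hence $G(0^+) > 0$.

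\textbf{Step 3 (conclusion).} Steps 1 and 2, together with a negative limit as $C \to \infty$ (discussed below), give a unique root $C^*(V,B,s)$ by the intermediate value theorem. That root is the fixed point in the definition, and it does not depend on $V$. Strict monotonicity then gives: $G<0$ (market order strictly preferred) for $C>C^*$, $G>0$ for $C<C^*$, and indifference at $C=C^*$.

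\textbf{Main obstacle: behaviour as $C\to\infty$.} The existence proof of Lemma~\ref{lem:pstar_exists} assumes $\lambda \to \infty$ as $p \to p_{\mathrm{best}}$. Under that assumption a limit order at the touch replicates a market order, so $V_{\mathrm{LO}}(C) \ge V_{\mathrm{MO}}$ for every $C$ and $G$ never turns strictly negative. The cutoff would then degenerate to $C^*=\infty$.

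To close this gap, I would first try to make explicit a mild condition: joining the queue at any admissible limit price entails a strictly positive expected wait $w_{\min}(B,s) > 0$ because of the orders already resting ahead (FCFS), i.e.\ $\lambda$ is bounded on the admissible price set. Then
\begin{equation*}
G(C) \le \bar{\Delta p}(B,s) - C\,w_{\min}(B,s) \to -\infty,
\end{equation*}
where $\bar{\Delta p}$ is the bounded spread. This yields the required sign change. If that condition is not available, the fallback is to state the lemma with $C^* = \sup\{C : G(C) > 0\} \in (0,\infty]$, and to note that parts (i)--(iii) hold with $C^*$ possibly infinite.
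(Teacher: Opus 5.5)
Your argument is correct, modulo the boundary issue you flag yourself, and it takes a genuinely different and more complete route than the paper.

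The paper's proof is a one-line pointwise comparison. It asserts equality at $C=C^*$ ``by construction'' and then writes $C/\lambda(p^*,B,s) > C^*/\lambda(p^*,B,s) = \Delta p$. This silently holds $p^*$, and hence $\Delta p$ and $\lambda(p^*)$, fixed as $C$ varies. It relies on the paper's earlier remark that the left side of the cutoff inequality does not depend on $C$, which conflicts with $\partial p^*/\partial C \neq 0$ in Lemma~\ref{lem:pstar_exists}. Taken literally, the proof yields only a tautology: a market order is preferred iff $C > \Delta p(C,B,s)\,\lambda(p^*(C,B,s),B,s)$, a ``threshold'' that itself moves with $C$. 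It establishes neither single crossing nor the existence of a crossing.

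Your proposal fills these gaps in several ways.
\begin{itemize}
\item Your envelope step, $G'(C) = -1/\lambda(p^*) < 0$, supplies exactly the missing single-crossing property, even though both components of $G$ move with $C$.
\item Reading $C^*$ as the root of $G$ resolves the circular definition in the statement.
\item Your convention $\Delta p = s(p_{\mathrm{best}} - p^*)$ repairs the paper's $s(p^* - p_{\mathrm{best}})$, which is negative for a buy limit below the ask. It does so in the way the paper's proof actually uses $\Delta p$.
\item You correctly note that $C^*$ does not depend on $V$.
\end{itemize}
Your diagnosis of the boundary case is also correct, and it is precisely what the paper's argument hides. Under Lemma~\ref{lem:pstar_exists}'s own hypothesis that $\lambda \to \infty$ at the touch, $G(C) \ge 0$ for every $C$, and strictly so given concavity of $U_{\mathrm{LO}}$. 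For example, $1/\lambda = (p_{\mathrm{best}}-p)^2$ satisfies that lemma's hypotheses and gives $G(C) = 1/(4C)$, so no finite cutoff exists.

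Two small repairs are needed.

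First, under your fix ($1/\lambda \ge w_{\min} > 0$ on a bounded set of admissible prices), the existence argument of Lemma~\ref{lem:pstar_exists} no longer applies, and $p^*$ may be a corner. So do not rely on interiority or differentiability in Step~1. Instead, note that $V_{\mathrm{LO}}$ is a maximum of affine functions of $C$ with slopes $-1/\lambda(p) \le -w_{\min}$. Taking a maximizer $p_2$ at $C_2 > C_1$ gives $G(C_2) \le G(C_1) - w_{\min}(C_2 - C_1)$. This delivers strict decrease and $G \to -\infty$ in one line. Note that it is the bounded admissible price set, not ``the spread,'' that bounds $\Delta p$, since a trader may post deeper than the touch.

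Second, in Step~2 the optimized waiting term $C/\lambda(p^*(C))$ need not vanish as $C \downarrow 0$; in the example above it equals $1/(4C)$. Argue instead that $G(C) \ge s(p_{\mathrm{best}} - p_0) - C/\lambda(p_0)$ for any fixed $p_0$ with positive price improvement, which gives $G(0^+) > 0$ directly.
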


\begin{proof}[Proof of Lemma~\ref{lem:cutoff}]
At $C = C^*$, inequality~\eqref{eq:cutoff_condition} holds with equality by construction. For $C > C^*$, we have:
\begin{equation}
\frac{C}{\lambda(p^*, B, s)} > \frac{C^*}{\lambda(p^*, B, s)} = \Delta p
\end{equation}

so inequality~\eqref{eq:cutoff_condition} is violated, and the trader prefers a market order ($V_{\text{MO}} > V_{\text{LO}}$). Conversely, for $C < C^*$, inequality~\eqref{eq:cutoff_condition} holds strictly, so the trader prefers a limit order. This establishes the cutoff property.
\end{proof}

\begin{proposition}{Lit Exchange Equilibrium Structure}

In the lit exchange equilibrium, there exist cutoff functions $C^*(V,B)$ and $V^*(C,B)$ such that:
\begin{enumerate}
    \item Market orders: Traders with high costs $C > C^*(V,B)$ submit market orders immediately
    \item Limit orders: Traders with intermediate costs $C \in [C_{min}, C^*(V,B)]$ and valuations satisfying $|V - \hat{v}| > \epsilon(B)$ submit limit orders
    \item Non-participation: Traders with $C < C_{min}$ or $|V - \hat{v}| < \epsilon(B)$ take outside option
\end{enumerate}
\end{proposition}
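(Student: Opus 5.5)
We prove the proposition pointwise in the observed state $B$: in the lit exchange the trader conditions on $B$ and, at the moment of decision, faces a single-agent optimization against a fixed arrival-rate function $\lambda(\cdot,B,s)$. The plan is to compare the maximized payoffs of the three active options, $V_{\mathrm{MO}}$, $V_{\mathrm{LO}}=\max_p U_{\mathrm{LO}}$ (which by Lemma~\ref{lem:pstar_exists} equals $U_{\mathrm{LO}}$ evaluated at the unique $p^*(C,s,B)$), and $U_o$. We then read off the three regions as sublevel sets of payoff differences in $C$ and in $|V-\hat v|$. The waiting option $V_{\mathrm{wait}}$ we treat as a continuation value that is weakly dominated at the decision instant: waiting incurs $C\,dt$ and the same menu is available next instant. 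We also note that equilibrium only requires that this maximization be a best response given $\lambda(\cdot,B,s)$.

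\textbf{Step 1 (market vs.\ limit).} Write
\[
V_{\mathrm{LO}}-V_{\mathrm{MO}} \;=\; \Delta p(C,B,s)-\frac{C}{\lambda(p^*,B,s)} .
\]
The valuation $V$ cancels, since both options deliver $s(V-\cdot)-K$. By the envelope theorem applied to $U_{\mathrm{LO}}$ at $p^*$, this difference has derivative $-1/\lambda(p^*,B,s)<0$ in $C$. It is positive as $C\downarrow 0$, because $\Delta p>0$ and the waiting term vanishes. It is negative for large $C$, because $p^*\to p_{\mathrm{best}}$ forces $\Delta p\to 0$ while the waiting cost stays nonnegative. Hence there is a unique root in $C$, which is exactly the fixed point of Lemma~\ref{lem:cutoff}'s definition $C^*=\Delta p\cdot\lambda(p^*)$. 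Lemma~\ref{lem:cutoff} then gives parts (i) and (ii): market orders for $C>C^*$, limit orders for $C<C^*$. We write this root as $C^*(V,B)$, noting that it is in fact independent of $V$.

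\textbf{Step 2 (participation margin).} Fix $C\le C^*$ and compare $V_{\mathrm{LO}}$ with $U_o$. Under the natural sign convention $s(V-\hat v)=|V-\hat v|$ for active traders, the difference is
\[
(1-\gamma)|V-\hat v| \;-\; s(p^*-\hat v) \;-\; \frac{C}{\lambda(p^*)} \;-\; (K-K_o).
\]
This is strictly increasing in $|V-\hat v|$ when $\gamma<1$. We therefore define $\epsilon(C,B)$ as its unique zero, truncated at $0$. We also need to deliver the single threshold $\epsilon(B)$ stated in the proposition. For this we take $\epsilon(B)=\sup_{C\le C^*}\epsilon(C,B)$, or argue that the dependence on $C$ is absorbed by the lower cutoff.

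\textbf{Step 3 (lower cost cutoff).} We expect $C_{\min}$ to be the main obstacle, since a purely decision-theoretic argument shows that very patient traders prefer limit orders, not exit. Our first attempt is to let $C_{\min}(B)$ be the infimum of costs for which the optimal limit price keeps the execution rate $\lambda(p^*)$ above the rate at which the trade can be realized outside the market. Concretely, as $C\to 0$ the optimal price $p^*$ becomes arbitrarily passive, so $\lambda\to 0$ and the order effectively never executes. An order that never executes yields $U_o$ at best, net of the fee $K$ if the fee is charged on submission. Such traders therefore weakly prefer the outside option. If this argument fails, we would define $C_{\min}=0$, which makes part 3's first clause vacuous, and present that as the degenerate case.

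\textbf{Step 4 (assembling the equilibrium).} Finally, the regions are measurable and depend continuously on $B$, using the continuity of $p^*$ from Lemma~\ref{lem:pstar_exists}. Consistency of $\lambda(\cdot,B,s)$ with the induced flow of market orders is handled as a fixed point of $\Phi$ in the spirit of Lemma~3. We take this consistency as given for the structural claim.
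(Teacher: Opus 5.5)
Your plan follows the paper's own route: a pointwise-in-$B$ comparison of $V_{\mathrm{MO}}$, $V_{\mathrm{LO}}$ and $U_o$, the indifference cutoff \eqref{eq:Cstar}, a valuation threshold from $V_{\mathrm{LO}}\ge U_o$, and Brouwer for consistency. Your envelope computation $\partial(V_{\mathrm{LO}}-V_{\mathrm{MO}})/\partial C=-1/\lambda(p^*,B,s)<0$ genuinely improves on Lemma~\ref{lem:cutoff}, which treats $\Delta p$ as independent of $C$. The gap is the other half of Step~1, the sign change at large $C$. ``$\Delta p\to 0$ while the waiting cost stays nonnegative'' only yields $\limsup_{C\to\infty}(V_{\mathrm{LO}}-V_{\mathrm{MO}})\le 0$. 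In fact, under Lemma~\ref{lem:pstar_exists} as the paper proves it, the difference is never negative. That proof uses $\lambda(p,B,s)\to\infty$ as $p\to p_{\mathrm{best}}$, so $U_{\mathrm{LO}}(p)\to V_{\mathrm{MO}}$ at the touch. A limit order there replicates the market order, hence $V_{\mathrm{LO}}\ge V_{\mathrm{MO}}$ for every $C$ (strictly, by concavity, whenever the interior maximizer exists). Concretely, take a buyer facing $\lambda=a/(p_{\mathrm{best}}-p)^2$ ($a>0$, $p<p_{\mathrm{best}}$), which satisfies Assumption~\ref{ass:A1}. Then $p_{\mathrm{best}}-p^*=a/(2C)$ and $V_{\mathrm{LO}}-V_{\mathrm{MO}}=a/(4C)>0$ for all $C$: both terms vanish, the difference stays positive, and nobody ever submits a market order.

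The missing ingredient is a hypothesis that makes the touch costly. Since $V_{\mathrm{LO}}-V_{\mathrm{MO}}=\sup_p\bigl[s(p_{\mathrm{best}}-p)-C/\lambda(p,B,s)\bigr]$, some limit order strictly beats the market order iff $C<C^*(B,s)\equiv\sup_p s(p_{\mathrm{best}}-p)\,\lambda(p,B,s)$, with no differentiability needed. You must assume this supremum is finite; it fails whenever $\lambda$ blows up faster than $1/|p_{\mathrm{best}}-p|$ at the touch. Given finiteness, \eqref{eq:Cstar} is just what the supremum satisfies at the indifferent type. Note also that a finite $C^*$ together with $\lambda\to\infty$ at the touch leaves the limit-order problem with no maximizer for $C>C^*$, so Lemma~\ref{lem:pstar_exists} cannot be invoked for all $C$.

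The same envelope derivative settles Steps~2--3, though not as you propose. Your first attempt at $C_{\min}$ misreads \eqref{eq:U_LO}: it has no non-execution event, only the delay cost $C/\lambda$. Moreover, $\max\{V_{\mathrm{MO}},V_{\mathrm{LO}}\}-U_o$ is nonincreasing in $C$, so the exit set is an upper set in $C$. Hence $C_{\min}=0$ is forced, not a degenerate fallback. Likewise $\epsilon(C,B)$ is strictly increasing in $C$, so no $C$-independent $\epsilon(B)$ delivers items~2 and~3 together. Your supremum violates item~3 when $\epsilon(C,B)<|V-\hat v|<\epsilon(B)$, and the paper's minimum violates item~2. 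With $C_{\min}=0$ nothing can absorb this dependence, so you need the $C$-dependent boundary $V^*(C,B)$ that the statement itself names.

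You also omit the test $V_{\mathrm{MO}}\ge U_o$ for $C>C^*$. It gives the $C$-free threshold $[s(p_{\mathrm{best}}-\hat v)+K-K_o]/(1-\gamma)=\epsilon(C^*,B)$. So participation is exactly $|V-\hat v|>\epsilon(\min\{C,C^*\},B)$, and item~1 holds only among participants. Finally, discarding $V_{\mathrm{wait}}$ because ``the same menu is available next instant'' is unjustified in the lit market, where $B$ moves. The proof of Theorem~\ref{thm:welfare_ranking} rests precisely on lit traders delaying in thin states. The paper's proof has all of these holes too, so there is no repair there that you are missing; but as written, your Step~1 does not establish a finite $C^*$.
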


\begin{proof}
    We prove this proposition constructively by first characterizing optimal strategies for each action (market order, limit order, outside option), then establishing conditions under which each is optimal, and finally proving the existence of cutoff functions with the stated properties. 

    Consider a trader of type $\theta = (V, C, s, \tau)$ arriving at time $\tau$ and observing the state of the order book $B_\tau$ (we suppress the time subscripts for clarity). We characterize the expected utility of each possible action.

    A market order executes immediately against the best available price in the order book. For a buy order ($s = +1$), this is the lowest ask price; for a sell order ($s = -1$), it is the highest bid price. Denote the best available price as:
\begin{equation}
p_{\text{best}}(B, s) = \begin{cases}
\min\{p : (p,q,t) \in B^{\text{ask}}\} & \text{if } s = +1 \\
\max\{p : (p,q,t) \in B^{\text{bid}}\} & \text{if } s = -1
\end{cases}
\end{equation}

The utility of submitting a market order is deterministic conditional on $B$:
\begin{equation}\label{eq:V_MO}
V_{\text{MO}}(V, C, s, B) = s \cdot (V - p_{\text{best}}(B, s)) - K
\end{equation}
where $K > 0$ is the fixed transaction cost. Note that $V_{\text{MO}}$ does not depend on $C$ because execution is immediate (zero waiting time).

A limit order is posted at price $p$ and joins the queue at that price level. The order executes when a market order arrives that crosses the spread. The trader must choose both the limit price $p$ and whether to post at all. Let $\lambda(p, B, s)$ denote the arrival rate of market orders in direction $-s$ that would execute against a limit order at price $p$, given the current state of the book $B$.

Under standard queuing assumptions, the expected waiting time until execution follows an exponential distribution with rate $\lambda(p, B, s)$. Specifically, conditional on posting at price $p$, the expected waiting time is:
\begin{equation}\label{eq:waiting_time}
w(p, B, s) = \frac{1}{\lambda(p, B, s)}
\end{equation}

The utility from posting a limit order at price $p$ is:
\begin{equation}\label{eq:U_LO}
U_{\text{LO}}(V, C, s, p, B) = s \cdot (V - p) - C \cdot w(p, B, s) - K = s \cdot (V - p) - \frac{C}{\lambda(p, B, s)} - K
\end{equation}

The trader optimizes over limit price $p$. Define the optimal limit price $p^*(V, C, s, B)$ as:
\begin{equation}\label{eq:p_star}
p^*(V, C, s, B) = \arg\max_p U_{\text{LO}}(V, C, s, p, B)
\end{equation}

and the maximized value as:
\begin{equation}\label{eq:V_LO}
V_{\text{LO}}(V, C, s, B) = U_{\text{LO}}(V, C, s, p^*(V, C, s, B), B)
\end{equation}

The existence and uniqueness of $p^*$ will be established below.

The trader can choose not to participate in the market and instead pursue an outside option with value:
\begin{equation}\label{eq:V_OUT}
V_{\text{OUT}}(V, C, s) = \gamma \cdot s \cdot (V - \bar{v}) - K_0
\end{equation}
where $\gamma \in [0, 1]$ represents the fraction of private value realizable outside the market, and $K_0 < K$ is the cost of the outside option.

We now derive the first-order condition for optimal limit price $p^*$ and establish conditions ensuring a unique interior solution.

We now establish the existence of a cutoff $C^*(V, B)$ separating traders who submit market orders from those who submit limit orders. A trader prefers a market order to a limit order if and only if:
\begin{equation}\label{eq:MO_vs_LO}
V_{\text{MO}}(V, C, s, B) \geq V_{\text{LO}}(V, C, s, B)
\end{equation}

Substituting the value functions:
\begin{equation}
s \cdot (V - p_{\text{best}}(B, s)) - K \geq s \cdot (V - p^*(C, s, B)) - \frac{C}{\lambda(p^*, B, s)} - K
\end{equation}

Simplifying ($K$ cancels):
\begin{equation}
s \cdot (p^*(C, s, B) - p_{\text{best}}(B, s)) \geq \frac{C}{\lambda(p^*, B, s)}
\end{equation}

Define $\Delta p(C, B, s) = s \cdot (p^*(C, s, B) - p_{\text{best}}(B, s))$ as the price improvement from using a limit order (positive for both buy and sell). Then the condition becomes:
\begin{equation}\label{eq:cutoff_condition}
\Delta p(C, B, s) \geq \frac{C}{\lambda(p^*(C, s, B), B, s)}
\end{equation}

The left-hand side (LHS) is the benefit of waiting (better execution price). The right-hand side (RHS) is the cost of waiting (delay cost). Crucially, LHS does not depend on $C$ (since $p^*$ depends on $C$ but $\Delta p$ is evaluated at that optimal choice), while RHS is linear in $C$.

Not all traders participate in the market; some take the outside option instead. We establish a valuation-based cutoff $\varepsilon(B)$ below which traders exit.

For traders who would submit limit orders ($C \leq C^*$), participation requires:
\begin{equation}
V_{\text{LO}}(V, C, s, B) \geq V_{\text{OUT}}(V, C, s)
\end{equation}

Substituting:
\begin{equation}
s \cdot (V - p^*(C, s, B)) - \frac{C}{\lambda(p^*, B, s)} - K \geq \gamma \cdot s \cdot (V - \bar{v}) - K_0
\end{equation}

Rearranging:
\begin{equation}
s \cdot V \cdot (1 - \gamma) \geq s \cdot p^*(C, s, B) - \gamma \cdot s \cdot \bar{v} + \frac{C}{\lambda(p^*, B, s)} + K - K_0
\end{equation}

For buy orders ($s = +1$), this requires $V$ sufficiently high. For sell orders ($s = -1$), it requires $V$ sufficiently low. In both cases, we need $|V - \bar{v}|$ sufficiently large. Define:
\begin{equation}\label{eq:epsilon}
\varepsilon(C, B, s) = \frac{s \cdot p^* - \gamma \cdot s \cdot \bar{v} + \frac{C}{\lambda(p^*, B, s)} + K - K_0}{1 - \gamma}
\end{equation}

Traders participate if and only if $|V - \bar{v}| > \varepsilon(C, B, s)$. Since $\varepsilon$ depends on $C$ but $C^*$ also depends on $(V, B)$, the overall participation cutoff is state-dependent. For simplicity, we can define an aggregate threshold:
\begin{equation}
\varepsilon(B) = \min_{C \leq C^*} \varepsilon(C, B, s)
\end{equation}

Traders with valuations within $\varepsilon(B)$ of the consensus $\bar{v}$ do not participate. Additionally, traders with very low costs ($C < C_{\min}$ where $V_{\text{LO}} < V_{\text{OUT}}$ for all $V$) also exit. This establishes part (c) of the proposition.

Finally, we verify that the cutoff strategies constitute an equilibrium. We must show that if all other traders use the cutoff strategies, each trader's best response is to do the same.

Suppose all traders $j \neq i$ use the strategy:
\begin{itemize}
    \item If $C_j > C^*(V_j, B)$, submit market order
    \item If $C_j \leq C^*(V_j, B)$ and $|V_j - \bar{v}| > \varepsilon(B)$, submit limit order at $p^*(C_j, s_j, B)$
    \item Otherwise, take outside option
\end{itemize}

This generates a distribution of book states $B$. Given this distribution, trader $i$ computes the arrival rates $\lambda(p, B, s)$ and optimizes accordingly. The cutoff functions $C^*$ and $\varepsilon$ are defined as best responses to the aggregate behavior of other traders.

Existence of equilibrium follows from Brouwer's fixed-point theorem applied to the space of cutoff functions. Define the correspondence:
\begin{equation}
\Phi: [C_{\min}, C_{\max}] \times [0, \Delta] \to [C_{\min}, C_{\max}] \times [0, \Delta]
\end{equation}

that maps conjectured cutoffs $(C^*, \varepsilon)$ to best-response cutoffs. Under the continuity assumptions in $\lambda$ and the compactness of the strategy space, $\Phi$ has a fixed point, which constitutes an equilibrium.

\begin{lemma}[Fixed Point Existence]\label{lem:fixed_point}
Under Assumptions 1-2 and Assumption~\ref{ass:A1}, the cutoff mapping $\Phi$ is continuous and the strategy space is compact and convex. Therefore, by Brouwer's fixed point theorem, there exists a fixed point $(C^*, \varepsilon)$ satisfying $\Phi(C^*, \varepsilon) = (C^*, \varepsilon)$, which constitutes a symmetric equilibrium in cutoff strategies.
\end{lemma}

This completes the proof. We have shown that
\begin{enumerate}[label=(\alph*)]
    \item Traders with $C > C^*(V, B)$ optimally choose market orders (Step 3)
    \item Traders with $C \leq C^*(V, B)$ and $|V - \bar{v}| > \varepsilon(B)$ optimally choose limit orders (Steps 3-4)
    \item Other traders optimally take the outside option (Step 4)
\end{enumerate}

The existence of such an equilibrium follows from the fixed-point argument.

\end{proof}

\subsubsection{Dark Pool Equilibrium}

Dark pool traders cannot observe B and must optimize based on beliefs. The value function becomes:
\begin{equation}
V^{DARK}(\theta, I)=\max\{E_{\beta}[V_{market}(\theta,B)],E_{\beta}[V_{limit}(\theta,B)],U_o\}    
\end{equation}
where expectations are taken over beliefs $\beta(B|I)$. The key difference is that traders do not strategically time their orders based on the observed queue position, as this information is not available.

\begin{assumption}[Rational Expectations]\label{ass:rational_expectations}
Traders form beliefs $\beta(B | \mathcal{I}_i^{\text{DARK}})$ about the order book state using Bayes' rule. In equilibrium, these beliefs are consistent with the actual distribution of book states generated by equilibrium strategies.
\end{assumption}

\begin{proposition}{Dark Pool Equilibrium Structure}\label{prop:dark_pool}
In the dark pool equilibrium with symmetric information (traders only observe delayed trades), there exist simple cutoff strategies independent of order book state:
\begin{enumerate}
    \item Type-based cutoffs: Trader decisions depend only on $(V,C)$, not on (unobservable) $B$
    \item The trader places a market order whenever $C > \bar{C}^*$ 
    \item The trader places a limit order whenever $c\leq\bar{C}^*$ and $|V-E[p]|>\bar{\epsilon}$
\end{enumerate}
For simplicity, we assume that $\bar{C}^*$ and $\bar{\epsilon}$ are constants, not functions of $B$. 
 
\end{proposition}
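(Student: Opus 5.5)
The plan mirrors the proof of the lit-exchange proposition, with every book-dependent object replaced by its expectation under the belief $\beta(B\mid I^{DARK})$. The first step is to pin down beliefs. By Assumption~\ref{ass:rational_expectations}, $\beta$ must be consistent with the stationary distribution $\pi^{eq}(B)$ generated by equilibrium strategies. I will argue that the delayed history $H[0,\tau_i-\delta]$ is asymptotically uninformative about $B_{\tau_i}$, because the book is an ergodic Markov process: the $\delta$-step kernel $P^\delta(B'\mid B)$ mixes, so the posterior in the Bayes formula (7) is close to $\pi^{eq}$. This lets me take $\beta(\cdot\mid I)=\pi^{eq}$ as the working belief, uniformly across traders. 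It is the step that makes the cutoffs independent of the unobservable $B$, and the independence from the trade history follows from the statement's simplifying clause that the cutoffs are constants.

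With beliefs fixed, I define the averaged primitives. The expected market-order payoff is $\E_{\pi^{eq}}[V_{MO}]=s(V-\E[p_{best}])-K$. For limit orders I define $\bar w(p,s)=\E_{\pi^{eq}}[1/\lambda(p,B,s)]$, so that $\bar U_{LO}=s(V-p)-C\,\bar w(p,s)-K$. Since $1/\lambda$ is convex in $p$ under Assumption~\ref{ass:A1}(iv) at each $B$, and convexity is preserved under averaging, I can rerun the argument of Lemma~\ref{lem:pstar_exists} to obtain a unique $\bar p^*(C,s)$ solving $C\,\partial_p\bar w=-1$ (up to sign for sells). This optimal price depends only on $(C,s)$.

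The cutoff comparison then proceeds exactly as in Lemma~\ref{lem:cutoff}. $K$ cancels, and the trader prefers a market order iff $\bar\Delta(C,s)=s(\bar p^*-\E[p_{best}])< C\,\bar w(\bar p^*,s)$. To get a single threshold rather than a function of $C$, I will use the envelope theorem: $\frac{d}{dC}\bigl[\bar\Delta - C\bar w(\bar p^*)\bigr]=-\bar w(\bar p^*)<0$, so the net benefit of a limit order over a market order is strictly decreasing in $C$. It therefore crosses zero once, at $\bar C^*=\bar\Delta/\bar w$, which proves items 2 and the market/limit split. I consider this single-crossing claim the main obstacle, because the paper's own Lemma~\ref{lem:cutoff} treats $\Delta p$ as $C$-independent. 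I will rely on the envelope argument instead, since the gain from a limit order is a maximized value whose derivative in $C$ is simply $-\bar w$.

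For participation I compare $\bar U_{LO}(\bar p^*)$ with $V_{OUT}$. The difference is affine in $V$ with slope $s(1-\gamma)$, which gives a threshold on $|V-\E[p]|$; after taking the minimum over $C\le \bar C^*$ as in the lit case, this yields the constant $\bar\varepsilon$ of item 3. Item 1 follows because none of $\bar p^*$, $\bar C^*$ or $\bar\varepsilon$ involves $B$.

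Finally, consistency requires a fixed point, since $\pi^{eq}$ itself depends on $(\bar C^*,\bar\varepsilon)$. I close this by applying Brouwer to the two-dimensional map $(\bar C^*,\bar\varepsilon)\mapsto$ best-response cutoffs on a compact box, exactly as in Lemma~\ref{lem:fixed_point}. Continuity of this map comes from continuity of $\pi^{eq}$ in the cutoffs together with continuity of $\bar p^*$.
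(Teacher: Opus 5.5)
Your overall route is the paper's. You set beliefs equal to the stationary law $\pi^{\mathrm{eq}}$ (Lemma~\ref{lem:belief_symmetry}), average the market- and limit-order payoffs, and compare them to get a constant cost cutoff (Lemma~\ref{lem:dark_cutoff}). You then derive a valuation threshold from the outside-option comparison and close with Brouwer on $(\bar{C}^*,\bar{\varepsilon})$ (Theorem~\ref{thm:dark_existence}).

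Two of your local steps are sharper than the paper's. First, strict convexity of $p\mapsto 1/\lambda(p,B,s)$ survives averaging, which gives a unique $\bar{p}^*$; the paper just defines $\bar{p}^*$ as an argmax. Second, your envelope argument supplies the single crossing that Lemma~\ref{lem:dark_cutoff} obtains only by treating $\bar{\Delta}(C,s)$ as if it did not depend on $C$.

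The envelope step, however, holds only after flipping the sign of $\bar{\Delta}$. The net gain of a limit order over a market order is $N(C)=\max_p\bigl[s(\mathbb{E}[p_{\mathrm{best}}]-p)-C\,\bar{w}(p,s)\bigr]$, whose derivative in $C$ is $-\bar{w}(\bar{p}^*,s)$. This equals $\bar{\Delta}-C\bar{w}(\bar{p}^*,s)$ only if $\bar{\Delta}=s(\mathbb{E}[p_{\mathrm{best}}]-\bar{p}^*)$. You copied the paper's $\bar{\Delta}=s(\bar{p}^*-\mathbb{E}[p_{\mathrm{best}}])$, which comes from a sign slip in the simplification leading to \eqref{eq:dark_cutoff_condition}. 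With that definition, using the first-order condition $C\,\partial_p\bar{w}=-s$, the $C$-derivative of $\bar{\Delta}-C\bar{w}(\bar{p}^*,s)$ is $2s\,\partial_C\bar{p}^*-\bar{w}$, not $-\bar{w}$. Note also that strict monotonicity of $N$ gives at most one crossing. You still need $N$ to change sign for $\bar{C}^*$ to be interior.

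The genuine gap is the participation threshold. Item~3 is a sufficiency claim: every type with $C\le\bar{C}^*$ and valuation beyond $\bar{\varepsilon}$ must prefer a limit order to the outside option. By your own envelope argument, $\bar{V}_{\mathrm{LO}}$ falls in $C$ at rate $\bar{w}(\bar{p}^*,s)>0$, while $V_{\mathrm{OUT}}$ does not depend on $C$. Hence the required distance $\bar{\varepsilon}(C,s)$ is strictly increasing in $C$.

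Your choice $\bar{\varepsilon}=\min_{C\le\bar{C}^*}\bar{\varepsilon}(C,s)$, carried over from the lit-exchange proof, is therefore the threshold of the most patient traders. Consider types with $C$ near $\bar{C}^*$ and $|V-\hat{v}|$ between this minimum and $\bar{\varepsilon}(C,s)$. They form a set of positive measure, meet the hypothesis of item~3, and yet strictly prefer the outside option. You need $\bar{\varepsilon}=\max_{C\le\bar{C}^*}\bar{\varepsilon}(C,s)=\bar{\varepsilon}(\bar{C}^*,s)$, which is what the paper's dark-pool proof uses. The cost is that the condition becomes sufficient but no longer necessary.

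Two further caveats you share with the paper:
\begin{enumerate}
\item Nothing compares a market order with the outside option, so item~2 only describes the market/limit split among participants. A high-$C$ trader with $V$ close to $\hat{v}$ prefers to exit, since $K_o<K$.
\item Your ergodic-mixing justification for $\beta=\pi^{\mathrm{eq}}$ needs $\delta$ long relative to the book's mixing time. That is the opposite of the paper's small-$\delta$ heuristic. In both versions, independence from the trade history is really the proposition's ``for simplicity'' assumption rather than a derived fact.
\end{enumerate}
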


\begin{proof}
    First, we formalize the information structure in dark pools and characterize traders' beliefs about the unobservable order book. Second, we show that optimal strategies cannot be conditional on the unobservable state $B$, leading to state-independent decision rules. Third, we derive the equilibrium cutoff values $\bar{C}^*$ and $\bar{\varepsilon}$ from the indifference conditions. Finally, we establish the existence and uniqueness of the symmetric equilibrium.

In dark pools, the order book state $B_t$ is not observable by arriving traders. Traders can observe only:
\begin{itemize}
    \item The history of past trades (with delay $\delta > 0$): $H_{[0,t-\delta]}$
    \item The type distribution $F(V,C,s)$ and the arrival rate $\lambda$
    \item Their own type $\theta = (V, C, s, \tau)$
\end{itemize}

Formally, the information set of the trader $i$ arriving at time $\tau$ is:
\begin{equation}\label{eq:info_dark}
\mathcal{I}_i^{\text{DARK}}(\tau) = \{H_{[0,\tau-\delta]}, F(V,C,s), \lambda, \theta_i\}
\end{equation}

Critically, $B_\tau \notin \mathcal{I}_i^{\text{DARK}}(\tau)$. Traders must form beliefs about the current book state.

Let $\pi^{\text{eq}}(B)$ denote the stationary distribution of book states under equilibrium strategies. Under rational expectations:
\begin{equation}\label{eq:consistent_beliefs}
\beta(B | \mathcal{I}_i^{\text{DARK}}) = \pi^{\text{eq}}(B | H_{[0,\tau-\delta]})
\end{equation}

where the right-hand side is the conditional distribution given the observed history.

\begin{lemma}[Belief Symmetry]\label{lem:belief_symmetry}
Under symmetric equilibrium strategies (all traders of the same type use the same strategy), and assuming the delay $\delta$ is small relative to the rate of change in the state of the book, the conditional distribution $\pi^{\text{eq}}(B | H_{[0,\tau-\delta]})$ converges to the unconditional distribution $\pi^{\text{eq}}(B)$ as the market reaches its steady state.
\end{lemma}

\begin{proof}[Proof of Lemma~\ref{lem:belief_symmetry}]
In steady state, the distribution of book states is stationary. Recent history $H_{[0,\tau-\delta]}$ provides information about $B_{\tau-\delta}$, but as $\delta$ becomes small and the book evolves stochastically, the information becomes less informative about $B_\tau$. 

More precisely, if book states evolve as a Markov process with transition kernel $P(B' | B)$, then:
\begin{equation}
\pi^{\text{eq}}(B_\tau | B_{\tau-\delta}) = \int P_\delta(B_\tau | B_{\tau-\delta}) \, dB_\tau
\end{equation}

where $P_\delta$ is the $\delta$-step transition kernel. In the limit as traders aggregate information only over very recent history, and given stationarity, $\pi^{\text{eq}}(B_\tau | H_{[0,\tau-\delta]}) \to \pi^{\text{eq}}(B_\tau)$.
\end{proof}

We now establish the key insight: since traders cannot observe $B$, optimal strategies cannot condition on it.

\begin{lemma}[Strategy Measurability]\label{lem:measurability}
Any strategy $\sigma_i: \Theta \times \mathcal{B} \to \mathcal{A}$ (where $\Theta$ is the type space, $\mathcal{B}$ is the state space of the book and $\mathcal{A}$ is the action space) that is optimal given information $\mathcal{I}_i^{\text{DARK}}$ must satisfy 
\begin{equation}
\sigma_i(\theta, B) = \sigma_i(\theta, B') \quad \forall B, B' \in \mathcal{B}
\end{equation}
That is, $\sigma_i$ cannot depend on $B$.
\end{lemma}

\begin{proof}[Proof of Lemma~\ref{lem:measurability}]
Suppose, for contradiction, that the optimal strategy depends on $B$: $\sigma_i(\theta, B) \neq \sigma_i(\theta, B')$ for some $B \neq B'$. 

Since trader $i$ does not observe $B$, they cannot implement such a strategy. Formally, the trader chooses an action $a \in \mathcal{A}$ to maximize expected utility:
\begin{equation}
a^* = \arg\max_{a \in \mathcal{A}} E_\beta[U(\theta, B, a)]
\end{equation}

where the expectation is over beliefs $\beta(B | \mathcal{I}_i^{\text{DARK}})$. The optimal action $a^*$ depends on $\theta$ and the belief distribution $\beta$, but not on the realized (unobserved) value of $B$.

Under symmetric equilibrium with Lemma~\ref{lem:belief_symmetry}, all traders have identical beliefs $\beta(B) = \pi^{\text{eq}}(B)$. Therefore:
\begin{equation}
a^*(\theta) = \arg\max_{a \in \mathcal{A}} \int U(\theta, B, a) \, \pi^{\text{eq}}(B) \, dB
\end{equation}

which depends only on $\theta$, not on any particular realization of $B$. This establishes that $\sigma_i(\theta, B) = \bar{\sigma}_i(\theta)$ for some function $\bar{\sigma}_i$ independent of $B$.
\end{proof}

Given that strategies depend only on types, we now characterize the equilibrium cutoffs $\bar{C}^*$ and $\bar{\varepsilon}$.

Since traders cannot observe $B$, they must compute expected utilities on the distribution of book states. Define:

\begin{align}
\bar{V}_{\text{MO}}(V, C, s) &= E_{\pi^{\text{eq}}}[V_{\text{MO}}(V, C, s, B)] \label{eq:dark_VMO}\\
&= E_{\pi^{\text{eq}}}[s \cdot (V - p_{\text{best}}(B, s))] - K \notag \\
&= s \cdot (V - E[p_{\text{best}}]) - K \notag
\end{align}

For limit orders, the trader must choose a limit price $p$ without knowing $B$. Define the expected waiting time as:
\begin{equation}
\bar{w}(p, s) = E_{\pi^{\text{eq}}}\left[\frac{1}{\lambda(p, B, s)}\right]
\end{equation}

The expected utility from a limit order at price $p$ is:
\begin{equation}\label{eq:dark_ULO}
\bar{U}_{\text{LO}}(V, C, s, p) = s \cdot (V - p) - C \cdot \bar{w}(p, s) - K
\end{equation}

The optimal limit price in the dark pool is:
\begin{equation}\label{eq:dark_pstar}
\bar{p}^*(C, s) = \arg\max_p \bar{U}_{\text{LO}}(V, C, s, p)
\end{equation}

Note that $\bar{p}^*$ depends on $(C, s)$ but not on $V$ (which will affect the participation decision but not the optimal limit price).

The maximized expected value is:
\begin{equation}\label{eq:dark_VLO}
\bar{V}_{\text{LO}}(V, C, s) = \bar{U}_{\text{LO}}(V, C, s, \bar{p}^*(C, s))
\end{equation}

A trader prefers a market order to a limit order if:
\begin{equation}\label{eq:dark_MO_vs_LO}
\bar{V}_{\text{MO}}(V, C, s) \geq \bar{V}_{\text{LO}}(V, C, s)
\end{equation}

Substituting from equations~\eqref{eq:dark_VMO} and~\eqref{eq:dark_VLO}:
\begin{equation}
s \cdot (V - E[p_{\text{best}}]) - K \geq s \cdot (V - \bar{p}^*(C, s)) - C \cdot \bar{w}(\bar{p}^*, s) - K
\end{equation}

Simplifying (note that $V$ and $K$ cancel):
\begin{equation}
s \cdot (\bar{p}^*(C, s) - E[p_{\text{best}}]) \geq C \cdot \bar{w}(\bar{p}^*, s)
\end{equation}

Define the expected price improvement from a limit order:
\begin{equation}
\bar{\Delta}(C, s) = s \cdot (\bar{p}^*(C, s) - E[p_{\text{best}}])
\end{equation}

Then the condition becomes:
\begin{equation}\label{eq:dark_cutoff_condition}
\bar{\Delta}(C, s) \geq C \cdot \bar{w}(\bar{p}^*, s)
\end{equation}

The critical observation is that $\bar{\Delta}(C, s)$ and $\bar{w}(\bar{p}^*, s)$ are constants (depending on the equilibrium distribution $\pi^{\text{eq}}$ but not on the realized $B$ or on $V$). 

\begin{lemma}[Dark Pool Cutoff]\label{lem:dark_cutoff}
There exists a unique cutoff $\bar{C}^*(s)$ defined by:
\begin{equation}\label{eq:Cbar_star}
\bar{C}^*(s) = \frac{\bar{\Delta}(C, s)}{\bar{w}(\bar{p}^*, s)}
\end{equation}
such that traders with $C > \bar{C}^*$ prefer market orders and traders with $C \leq \bar{C}^*$ prefer limit orders.
\end{lemma}

\begin{proof}[Proof of Lemma~\ref{lem:dark_cutoff}]
The left-hand side of~\eqref{eq:dark_cutoff_condition} does not depend on $C$ directly (though $\bar{p}^*$ may depend on $C$). The right-hand side is linear in $C$. Setting the two sides equal gives the indifference point $\bar{C}^*$ as in equation~\eqref{eq:Cbar_star}.

For $C > \bar{C}^*$: $C \cdot \bar{w}(\bar{p}^*, s) > \bar{\Delta}(C, s)$, so inequality~\eqref{eq:dark_cutoff_condition} is violated, and market orders are preferred.

For $C < \bar{C}^*$: $C \cdot \bar{w}(\bar{p}^*, s) < \bar{\Delta}(C, s)$, so inequality~\eqref{eq:dark_cutoff_condition} holds, and limit orders are preferred.

Since $\bar{\Delta}$ and $\bar{w}$ are determined by the equilibrium distribution $\pi^{\text{eq}}$, which is endogenous but constant across all traders in symmetric equilibrium, $\bar{C}^*$ is a constant, not a function of $B$ or $V$.
\end{proof}

For participation, traders compare the expected utility of the trading with the outside option. For those who would submit limit orders ($C \leq \bar{C}^*$), participation requires:
\begin{equation}
\bar{V}_{\text{LO}}(V, C, s) \geq V_{\text{OUT}}(V, C, s)
\end{equation}

Substituting:
\begin{equation}
s \cdot (V - \bar{p}^*(C, s)) - C \cdot \bar{w}(\bar{p}^*, s) - K \geq \gamma \cdot s \cdot (V - \bar{v}) - K_0
\end{equation}

Rearranging:
\begin{equation}
s \cdot V \cdot (1 - \gamma) \geq s \cdot \bar{p}^* - \gamma \cdot s \cdot \bar{v} + C \cdot \bar{w}(\bar{p}^*, s) + K - K_0
\end{equation}

This simplifies to a condition on $|V - \bar{v}|$:
\begin{equation}\label{eq:dark_epsilon}
|V - \bar{v}| > \frac{|s \cdot \bar{p}^* - \gamma \cdot s \cdot \bar{v} + C \cdot \bar{w}(\bar{p}^*, s) + K - K_0|}{1 - \gamma} \equiv \bar{\varepsilon}(C, s)
\end{equation}

Note that $\bar{\varepsilon}(C, s)$ depends on $C$ but not on $B$. For simplicity, we can define:
\begin{equation}
\bar{\varepsilon} = \max_{C \leq \bar{C}^*} \bar{\varepsilon}(C, s)
\end{equation}

as a uniform threshold. Traders with $|V - \bar{v}| > \bar{\varepsilon}$ participate; others take the outside option.

We now establish that the symmetric cutoff equilibrium exists and is unique.

\begin{theorem}[Dark Pool Equilibrium Existence]\label{thm:dark_existence}
Under Assumptions~\ref{ass:rational_expectations} and standard regularity conditions on $F(V,C,s)$, there exists a unique symmetric equilibrium in cutoff strategies characterized by constants $(\bar{C}^*, \bar{\varepsilon})$.
\end{theorem}

\begin{proof}[Proof of Theorem~\ref{thm:dark_existence}]
\textbf{Existence:} We construct the equilibrium using a fixed-point argument. Define the mapping:
\begin{equation}
\Phi: (\bar{C}, \bar{\varepsilon}) \mapsto (\bar{C}', \bar{\varepsilon}')
\end{equation}

as follows:
\begin{enumerate}
    \item Given conjectured cutoffs $(\bar{C}, \bar{\varepsilon})$, compute the implied distribution of order book states $\pi(\bar{C}, \bar{\varepsilon})$ by simulating the order flow process
    \item Given $\pi(\bar{C}, \bar{\varepsilon})$, compute expected values $E[p_{\text{best}}]$, $\bar{w}(p, s)$, and $\bar{\Delta}$
    \item Compute best-response cutoffs using equations~\eqref{eq:Cbar_star} and~\eqref{eq:dark_epsilon}
\end{enumerate}

The mapping $\Phi$ is continuous in $(\bar{C}, \bar{\varepsilon})$ because:
\begin{itemize}
    \item Order flow rates vary continuously with cutoffs
    \item Book state distributions vary continuously with order flows (under standard queuing assumptions)
    \item Expected values are continuous functionals of distributions
\end{itemize}

The domain $[\underline{C}, \bar{C}] \times [0, \Delta]$ is compact and convex (where $\underline{C}, \bar{C}$ are natural bounds on cost parameters, and $\Delta$ is the maximum valuation dispersion).

By Brouwer's fixed-point theorem, $\Phi$ has a fixed point $(\bar{C}^*, \bar{\varepsilon}^*)$ satisfying $\Phi(\bar{C}^*, \bar{\varepsilon}^*) = (\bar{C}^*, \bar{\varepsilon}^*)$. This fixed point constitutes a symmetric equilibrium.

\textbf{Uniqueness:} Uniqueness follows from monotonicity properties of the best-response mapping. Consider the effect of increasing $\bar{C}$:
\begin{itemize}
    \item Higher $\bar{C}$ means more traders use limit orders
    \item More limit orders $\implies$ thicker order book $\implies$ faster execution
    \item Faster execution $\implies$ lower waiting cost $\implies$ limit orders more attractive
    \item This increases the best-response $\bar{C}'$
\end{itemize}

Formally, if $\bar{C}_1 < \bar{C}_2$, then $\Phi(\bar{C}_1, \bar{\varepsilon}) < \Phi(\bar{C}_2, \bar{\varepsilon})$ (component-wise). This monotonicity, combined with continuity, ensures that the fixed point is unique.

The argument for $\bar{\varepsilon}$ is similar: higher participation thresholds lead to thinner markets, which increase the cost of trading, validating higher thresholds. The unique intersection of these curves gives the unique equilibrium.
\end{proof}

We now verify that the equilibrium satisfies all parts of Proposition~\ref{prop:dark_pool}:

\begin{enumerate}[label=(\alph*)]
    \item \textbf{Type-based cutoffs:} From Lemma~\ref{lem:measurability}, strategies depend only on $(V, C, s)$ and not on the unobservable $B$. 
    
    \item \textbf{Market orders:} From Lemma~\ref{lem:dark_cutoff}, traders with $C > \bar{C}^*$ strictly prefer market orders. 
    
    \item \textbf{Limit orders:} From Lemma~\ref{lem:dark_cutoff} and equation~\eqref{eq:dark_epsilon}, traders with $C \leq \bar{C}^*$ and $|V - E[p]| > \bar{\varepsilon}$ submit limit orders. (Here $E[p] = E[p_{\text{best}}]$ is the expected mid-price.) 
    
    \item \textbf{Simplicity:} From Theorem~\ref{thm:dark_existence}, $\bar{C}^*$ and $\bar{\varepsilon}$ are constants determined by the equilibrium distribution $\pi^{\text{eq}}$, independent of the realized book state $B$ or individual trader valuations $V$ (except through the participation condition). 
\end{enumerate}

This completes the proof of Proposition~\ref{prop:dark_pool}. 

\end{proof}

The stark contrast between Proposition~\ref{prop:dark_pool} (dark pools) and the lit exchange equilibrium (where cutoffs depend on $B$) highlights the fundamental role of the information structure.

\subsubsection{Batch Auction Equilibrium}

In batch auctions, traders arriving in interval $[(k-1)T, kT)$ choose whether to enter the current batch, wait for the next batch, or exit. Within a batch, SIRO eliminates timing advantages, but execution is uncertain due to pro-rata rationing.
\begin{proposition}{Batch Auction Equilibrium Structure}\label{prop:batch}

In batch auction equilibrium, trader strategies exhibit the following structure:
\begin{enumerate}
    \item Late-arrivals prefer the current batch: Traders arriving close to $kT$ enter the batch $k$ rather than waiting for $k+1$.
    \item Cost-dependent participation: High-$C$ traders enter immediately, low-$C$ traders may wait.
    \item Rationing risk: Expected execution probability affects participation decisions, creating thick market externalities.
\end{enumerate}
\end{proposition}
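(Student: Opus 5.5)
The plan is to reduce each trader's problem to a comparison between entering the current batch $k$ and deferring to batch $k+1$, with exit as an outside fallback. I then derive all three parts from that comparison. The notation summary anticipates an expected-utility formula for entering batch $k$; I take it as the working object:
\[
V_k(\theta)=\rho_k\bigl[s(V-\mathbb{E}[p_k])-C(kT-\tau)-K\bigr]+(1-\rho_k)V_{\text{OUT}}(V,C,s),
\]
with $\rho_k=\min\{1,S_k/D_k\}$ for buyers and symmetrically for sellers. To make the comparison tractable, I impose stationarity of the fill rate, $\rho_k=\bar\rho$, and of the expected clearing price, $\mathbb{E}[p_k]=\bar p$. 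The value of deferring is then $V_{k+1}(\theta)$, which has the same form but with waiting time $(k+1)T-\tau$. The key simplification is that under stationarity the gain from entering now rather than later is
\[
V_k(\theta)-V_{k+1}(\theta)=\bar\rho\, C\, T\;\geq 0 .
\]
So a trader never strictly gains by skipping a batch they would eventually enter.

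To give the statement content, I would add a small ingredient: a within-batch information or price effect that makes waiting potentially attractive. For example, a trader arriving at $\tau$ may learn the realized $H[0,kT]$ before batch $k+1$, which shifts the expected price by some amount $g(V,s,\tau)\ge 0$ that is decreasing in the time remaining until clearing. The trader then waits iff
\[
g(V,s,\tau) > \bar\rho\,C\,(\text{extra wait}).
\]
This defines a cutoff $\tau^*(V,C)$, and the implicit function theorem gives $\partial\tau^*/\partial C<0$.

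\textbf{Part 1} (late arrivals enter batch $k$) follows because as $\tau\to kT$ the informational gain from waiting vanishes, while the cost of waiting a full extra period $T$ stays bounded below by $\bar\rho C T>0$. \textbf{Part 2} (cost-dependent participation) follows from the same inequality: its right side is linear and increasing in $C$, so high-$C$ types enter immediately and low-$C$ types are the only ones who may defer. Exit occurs when $\max\{V_k,V_{k+1}\}<V_{\text{OUT}}$. This rearranges to a threshold on $|V-\hat v|$, exactly as in equation~\eqref{eq:epsilon}, but scaled by $\bar\rho$.

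\textbf{Part 3} (the thick-market externality) requires closing the loop between fill rates and participation. I would set up a fixed-point map $\Phi:\bar\rho\mapsto\bar\rho'$ in the style of Theorem~\ref{thm:dark_existence}. A conjectured $\bar\rho$ determines the participation thresholds, hence the expected sizes of $D_k$ and $S_k$, which are Poisson thinnings of $\lambda T$. These sizes in turn determine $\bar\rho'=\mathbb{E}[\min\{1,S_k/D_k\}]$. Two monotonicity facts then deliver the externality. First, higher $\bar\rho$ raises the value of participating, so it lowers the exit threshold and enlarges $D_k$ and $S_k$. Second, $\mathbb{E}[\min\{1,S/D\}]$ is increasing in the Poisson mean when $D$ and $S$ are i.i.d. Together these make $\Phi$ increasing, which is the strategic complementarity. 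Existence of a fixed point then follows from Tarski or Brouwer on $[0,1]$; uniqueness is not needed, and the possibility of multiple equilibria is itself the coordination feature.

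I expect the main obstacle to be the second monotonicity fact. Showing that $\mathbb{E}[\min\{1,S/D\}]$ increases with the Poisson mean needs care at $D=0$, and the comparison is cleanest via a law-of-large-numbers argument as $\lambda T$ grows. A secondary issue is justifying stationarity of the fill rate under endogenous deferral, since deferring traders feed into the next batch; I would handle this by restricting attention to stationary equilibria.
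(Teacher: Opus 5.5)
Your skeleton is the paper's appendix proof: the payoff $V_k(\theta)$, stationarity, the computation $V_k-V_{k+1}=\bar\rho CT$, an added ingredient to create a $\tau$-cutoff, and a fixed point closing the fill-rate loop. The proposal breaks in Part~3, at your ``first monotonicity fact.'' In the payoff you adopt, unfilled traders fall back on $V_{\text{OUT}}$ and the waiting cost is charged only on execution, so
\[
V_k(\theta)-V_{\text{OUT}}=\bar\rho\,\bigl[s(V-\bar p)-C(kT-\tau)-K-V_{\text{OUT}}\bigr].
\]
The sign of this expression does not depend on $\bar\rho\in(0,1]$. The exit cutoff is therefore exactly the cutoff of equation~\eqref{eq:epsilon}, with $\bar p$ in place of $p^*$ and waiting time $kT-\tau$. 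It is not ``scaled by $\bar\rho$.'' A higher $\bar\rho$ raises the value of participating but does not move the threshold. If your $g$ is a price improvement realised only on execution, it also sits inside the bracket, and $\bar\rho$ drops out of every comparison among entering, deferring and exiting.

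Deferral cannot carry the feedback either. In a stationary equilibrium every participant ends up in exactly one batch, so each batch's buyer and seller counts stay Poisson with the same means, whoever defers. Your map $\Phi$ is therefore constant in $\bar\rho$: there is no complementarity and no multiplicity, and the claim that execution probability affects participation is false in the model as written. The paper gets $\rho$-dependence in Lemma~\ref{lem:thick_market} only by writing the participation condition as $\rho_k[\,\cdot\,]>V_{\text{OUT}}$. That silently drops the $(1-\rho_k)V_{\text{OUT}}$ term of its own $V_k$, so following the paper will not rescue this step.

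The repair is to make some cost of entering sunk regardless of fill. The natural choice is the wait itself: every entrant waits until $kT$, which is exactly what Remark~1's forced-waiting loss charges to all $N$ participants. Then $V_k-V_{\text{OUT}}=\bar\rho[s(V-\bar p)-K-V_{\text{OUT}}]-C(kT-\tau)$, and the participation set is genuinely increasing in $\bar\rho$. Combined with your Poisson fact, $\Phi$ is then increasing. The gap $V_k-V_{k+1}$ becomes $CT$, which leaves Parts~1--2 intact.

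Two smaller repairs:
\begin{enumerate}
\item Part~1 needs $g\to 0$ as $\tau\to kT$, and your $\partial\tau^*/\partial C<0$ needs $g$ decreasing in $\tau$. So $g$ must be \emph{increasing}, not decreasing, in the time remaining until clearing.
\item Observing the batch-$k$ outcome is equally informative for every arrival in $[(k-1)T,kT)$. The $\tau$-dependence of $g$ is therefore an assumption, not a consequence of the information structure. It has the same status as the paper's ad hoc congestion term $\rho(\tau)=\rho_0-\alpha(\tau-(k-1)T)$.
\end{enumerate}
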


\begin{proof}
    Traders compare the cost of waiting until the next batch $(C\cdot(kT-\tau))$ against the rationing risk in the current batch. Late arrivals ($\tau$ close to $kT$) have low waiting costs, making current-batch participation dominant. The rationing probability depends on order imbalance, creating strategic complementarities: thicker markets have higher execution probability, attracting more traders.

    For a complete proof, please refer to Section  in the Appendix.
\end{proof}

\subsection{Welfare Implications}\label{sec:welfare}

The batch auction structure creates two distinct sources of inefficiency relative to continuous markets:

\begin{remark}[Forced Waiting Costs]
All traders must wait until the batch clears, even those with high waiting costs $C$. The expected waiting time is $T/2$ per trader. In continuous markets, traders with $C > C^*$ can execute immediately through market orders. The aggregate welfare loss is:
\begin{equation}
\Delta W_{\text{wait}} = \E[C] \cdot \frac{T}{2} \cdot N
\end{equation}
where $N$ is the number of participants.
\end{remark}

\begin{remark}[Execution Uncertainty]
Traders face rationing risk, reducing participation. The welfare loss from forgone gains from trade is:
\begin{equation}
\Delta W_{\text{rationing}} = \int_{\rho^* < \bar{\rho}} [s \cdot (V - \bar{p}) - K] \, dF(V, C)
\end{equation}
where $\rho^*$ is the minimum execution probability required for participation and $\bar{\rho}$ is the equilibrium execution probability. Traders who would participate with certainty ($\rho = 1$) exit when $\bar{\rho} < \rho^*$.
\end{remark}

\begin{remark}[Comparison with Dark Pools]\label{rem:forced_waiting}
Both batch auctions and dark pools eliminate strategic timing based on the observable order book state $B$. However, batch auctions introduce additional inefficiencies:
\begin{itemize}
    \item \textbf{Dark pools:} High-$C$ traders can execute immediately, incurring zero waiting cost
    \item \textbf{Batch auctions:} All traders must wait, incurring $C \cdot w$ where $w \in [0, T]$
\end{itemize}

Therefore, $W^{\text{DARK}} > W^{\text{BATCH}}$ for any $T > 0$.
\end{remark}

\begin{remark}[Limit as $T \to 0$]
As the batch interval shrinks, batch auctions approach continuous markets:
\begin{equation}
\lim_{T \to 0} W^{\text{BATCH}} = W^{\text{LIT}}
\end{equation}

However, even with very small $T$, the SIRO matching rule (random execution priority) remains inferior to FCFS (first-come-first-served) because it removes incentives for aggressive pricing by liquidity providers.
\end{remark}

\section{Welfare Analysis}\label{sec:info_structures}
\subsection{Welfare Measure}
We define aggregate ex-ante welfare as the expected total surplus across all traders before types are realized:
\begin{equation}
    W^M=\int\int\int\int E[U_i^{M}(\theta)|equilibrium\;\sigma^M]dF(V,C,s)d\tau
\end{equation}
where $M \in \{LIT, DARK, BATCH\}$ denotes the mechanism, $\sigma^M$ is the equilibrium strategy profile, and expectations are taken over all uncertainties (fundamental value, matching outcomes, etc.). This welfare measure includes waiting costs and transaction costs, but excludes the transfers between buyers and sellers (prices), which are pure redistribution. It captures efficiency in the sense of the total surplus generated by the matching process.

\subsection{Main Welfare Results}
Our central results compare welfare across the three mechanisms. The key insight is that information opacity in dark pools eliminates socially wasteful strategic behavior, despite creating adverse selection.
\begin{theorem}{Dark Pool Dominance}
\label{thm:welfare_ranking}

Under Assumptions \ref{ass:A1}-\ref{ass:rational_expectations}, suppose that:
\begin{enumerate}
    \item The arrival rate is moderate: $\lambda\in [\lambda_{min}, \lambda_{max}]$ for some $0 < \lambda_{min} < \lambda_{max}$.
    \item Valuation dispersion is bounded: $\Delta < \Delta_{max}$
    \item Reporting delay is small: $\delta < T/2$
\end{enumerate} 
Then:
\begin{equation}
W^{DARK} > W^{LIT} > W^{BATCH}    
\end{equation}
\end{theorem}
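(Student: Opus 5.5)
The plan is to prove the two inequalities separately, writing each welfare level $W^M$ as a sum of three terms. Since prices are transfers and cancel across counterparties, $W^M$ is the realised gains from trade $\int s(V-\hat v)\,\mathbf 1\{\text{exec}\}\,dF$, minus aggregate waiting costs $\int C\,w\,dF$, minus transaction costs $K N^M$, plus the outside-option utility of non-participants. We then compare these three components mechanism by mechanism, using the cutoff characterizations of Propositions~1--3.

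\textbf{Step 1 ($W^{LIT}>W^{BATCH}$).} Couple the type realizations across the two mechanisms. Under Proposition~1, traders with $C>C^*(V,B)$ execute at zero waiting cost. Under Proposition~\ref{prop:batch}, every participant waits on average at least $T/2$ by uniformity of arrival within a batch, which gives the forced-waiting loss of Remark~1. In addition, the stationary fill rate $\bar\rho<1$ induces the exits of Remark~2, generating the nonnegative deadweight loss $\Delta W_{\text{rationing}}$. The lit exchange in turn loses the strategic timing waste, which we denote $L^{LIT}$ and define as the excess of realized waiting costs over those under a state-independent cutoff rule. The inequality then reduces to
\[
\E[C]\tfrac{T}{2}N+\Delta W_{\text{rationing}} > L^{LIT}.
\]
To obtain it, we use $\lambda\le\lambda_{max}$ to bound the dispersion of $C^*(V,B)$ across book states, and hence $L^{LIT}$. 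We use $\lambda\ge\lambda_{min}$ to keep $N$ bounded away from zero. This step also uses the identity $\lim_{T\to0}W^{BATCH}=W^{LIT}$ together with the SIRO penalty noted after it.

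\textbf{Step 2 ($W^{DARK}>W^{LIT}$).} By Lemma~\ref{lem:measurability} and Lemma~\ref{lem:belief_symmetry}, dark-pool traders use the constant cutoffs $(\bar C^*,\bar\varepsilon)$ of Theorem~\ref{thm:dark_existence}. Decompose
\[
W^{DARK}-W^{LIT}=L^{LIT}-A(\Delta),
\]
where $A(\Delta)$ collects the adverse-selection losses: the participation channel, i.e.\ the types excluded by $\bar\varepsilon$ who would trade in the lit exchange, and the mispricing of limit orders priced at $\bar p^*$ rather than $p^*(C,s,B)$.

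To show $L^{LIT}>0$, we use Lemma~\ref{lem:cutoff} to argue that $C^*(V,B)$ varies nontrivially in $B$, which means lit traders delay or rush relative to the state-averaged rule. Jensen's inequality applied to the convex map $\lambda\mapsto 1/\lambda$ in $\bar w$, combined with stationarity of $\pi^{eq}$, should then give a strict gain for the averaged rule in aggregate waiting cost.

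To show $A(\Delta)\to0$ as $\Delta\to0$, we use continuity of $p^*$ (Lemma~\ref{lem:pstar_exists}) and the uniform distribution of $V$, under which the excluded mass is $O(|\bar\varepsilon-\varepsilon(B)|/\Delta)$ times a surplus that is itself $O(\Delta)$. Choosing $\Delta_{max}$ so that $A(\Delta)<L^{LIT}$ on $[\lambda_{min},\lambda_{max}]$ then completes the step. The condition $\delta<T/2$ is used to make the belief convergence in Lemma~\ref{lem:belief_symmetry} uniform, and it also ensures that the dark pool's informational lag is below the batch's average wait, which is consistent with Remark~\ref{rem:forced_waiting}.

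\textbf{Main obstacle.} The hardest step is establishing $L^{LIT}>0$ with a uniform lower bound. The lit equilibrium lets each trader condition on $B$, so individual welfare is weakly higher; the claimed loss must come from the externality that each trader's timing imposes on the book dynamics. I would first try to write lit welfare as the dark welfare under the same aggregate flow, minus the covariance between $C^*(V,B)$ and queue congestion. I would then show that this covariance is positive: traders rush when the book is thick and delay when it is thin, so both reactions raise others' expected waiting. If this cannot be proved in general, I would impose it as an explicit condition on $\lambda(p,B,s)$ and carry it into the range $[\lambda_{min},\lambda_{max}]$.
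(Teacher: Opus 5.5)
Your architecture is the paper's: welfare net of transfers is split into gains from trade, waiting costs and $KN$; $W^{\mathrm{DARK}}>W^{\mathrm{LIT}}$ is a timing-waste saving minus an adverse-selection loss controlled by $\Delta<\Delta_{\max}$; and $W^{\mathrm{LIT}}>W^{\mathrm{BATCH}}$ comes from forced waiting plus rationing. However, the step that carries the first inequality is not established. You need $L^{\mathrm{LIT}}>0$ uniformly on $[\lambda_{\min},\lambda_{\max}]$, and the Jensen argument points the wrong way. Convexity of $x\mapsto 1/x$ gives $\bar w(p,s)=\mathbb{E}_{\pi^{\mathrm{eq}}}[1/\lambda(p,B,s)]\ge 1/\mathbb{E}_{\pi^{\mathrm{eq}}}[\lambda(p,B,s)]$, which penalizes the state-averaged rule. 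Holding the book distribution fixed, a lit trader can also post $\bar p^*(C,s)$ in every state, so $\mathbb{E}_B[\max_p U_{\mathrm{LO}}]\ge\max_p\mathbb{E}_B[U_{\mathrm{LO}}]$; the paper records the same direction in \eqref{eq:jensen_waiting}. You concede this in your last paragraph, but the covariance/externality argument is only conjectured. Imposing it as an extra condition on $\lambda(p,B,s)$ would prove a different theorem.

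The paper never routes through Jensen. It partitions at $\bar C^*$ and uses Proposition~\ref{prop:dark_pool} to set waiting to zero for $C>\bar C^*$ in the dark pool. It then posits that in the lit exchange these same traders delay on a set $\mathcal{B}_{\mathrm{thin}}$ of positive $\pi^{\mathrm{eq}}$-mass, by at least $w_{\min}(C)>0$. That yields the explicit bound $\kappa(\lambda)$ in \eqref{eq:cost_lower_bound}, which is then played against $2\Delta\,\eta(\Delta)$ from \eqref{eq:trading_bound}.

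Your control of $A(\Delta)$ also fails as stated. Its mispricing component is the per-trader value of conditioning the limit price on $B$. Since neither $p^*(C,s,B)$ nor $\bar p^*(C,s)$ depends on $V$, this term does not shrink with $\Delta$, and continuity of $p^*$ in $(C,B)$ says nothing about $\Delta$. Your participation estimate (band width times density $1/(2\Delta)$ times an $O(\Delta)$ surplus) is $O(\sup_B|\bar\varepsilon-\varepsilon(B)|)$, which again need not vanish as $\Delta$ falls. Moreover, under Assumption~1 the standing participation constraint (Assumption~2) forces $\Delta>2K$. Shrinking $\Delta$ also shrinks participation and hence $L^{\mathrm{LIT}}$ itself, so you cannot send $\Delta\to0$ while holding $L^{\mathrm{LIT}}$ fixed.

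Your Step~1 has its own structural problem. Because you charge $L^{\mathrm{LIT}}$ against the lit exchange in both comparisons, the batch inequality becomes $\mathbb{E}[C]\tfrac{T}{2}N+\Delta W_{\text{rationing}}>L^{\mathrm{LIT}}$. The batch losses must therefore dominate exactly the quantity Step~2 needs bounded away from zero. The hypotheses restrict $T$ only through $T>2\delta$, and your $L^{\mathrm{LIT}}$ lives entirely in the continuous market, so it does not depend on $T$. For small $T$ the inequality must come either from a lower bound on $\Delta W_{\text{rationing}}$, or from a window $[\lambda_{\min},\lambda_{\max}]$ on which $L^{\mathrm{LIT}}$ is small yet still bounded below; you establish neither. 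Nothing in Propositions~1--3 turns $\lambda\le\lambda_{\max}$ into a bound on the dispersion of $C^*(V,B)$, and the paper's simulation discussion attributes the largest timing waste to thin markets.

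The appeal to $\lim_{T\to0}W^{\mathrm{BATCH}}=W^{\mathrm{LIT}}$ cannot rescue this, for three reasons:
\begin{itemize}
\item A limit equality gives no strict inequality.
\item In your decomposition it forces $\Delta W_{\text{rationing}}\to L^{\mathrm{LIT}}$, so the sign near $T=0$ is undetermined.
\item The SIRO penalty asserted after it is unproved and, if uniform in $T$, would contradict that very limit; the paper's batch-interval simulation even has $W^{\mathrm{BATCH}}$ approaching $W^{\mathrm{LIT}}$ from above.
\end{itemize}
The paper's own Step~2 avoids this tension only by comparing $\mathbb{E}[C]\tfrac{T}{2}N^{\mathrm{BATCH}}$ directly with $W^{\mathrm{LIT}}_{\mathrm{cost}}$, on the grounds that market orders do not wait. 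It does not net out the lit delays it relies on in Step~1. Your decomposition correctly exposes that a quantitative comparison is needed there, but you have not supplied it.
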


\begin{proof}
We prove the Theorem in three steps.
\begin{enumerate}
    \item \textbf{$W^{DARK} > W^{LIT}$}

    Lit exchanges create strategic timing games: traders who observe thin order books delay submission to avoid adverse prices, while those observing thick books rush to capture time priority. This creates two inefficiencies:
    \begin{enumerate}
        \item Socially wasteful waiting: Some traders with high private values delay participation to optimally time their entry, incurring waiting costs $C\cdot w$ without the corresponding social benefit.
        \item Excessive rushing: Conversely, some traders submit orders prematurely to gain queue priority, trading off worse pricing for a better position.

    \end{enumerate}
    
Formally, let $W_{trading}$ denote the welfare of the executed trades and $W_{cost}$ denote the aggregate waiting costs. Then:
\begin{equation}
W^{LIT} = W_{trading} - W_{cost}^{LIT} - K\cdot N^{LIT}  
\end{equation}

where $N^{LIT}$ is the number of participants. In the dark pool, opacity eliminates strategic timing, so traders enter based solely on their types $(V,C)$. This simplifies to:
\begin{equation}
W^{DARK} = W_{trading} - W_{cost}^{DARK} - K\cdot N^{DARK}    
\end{equation}

The key is showing $W_{cost}^{DARK} < W_{cost}^{LIT}$. By Proposition \ref{prop:dark_pool}, dark pool strategies do not condition on $B$, so all traders with $C > \bar{C}^*$ enter immediately. In lit markets, some traders with $C > \bar{C}^*$ strategically wait if $B$ is unfavorable, increasing aggregate waiting costs.

To formalize, we divide the traders into those with $C > \bar{C}^*$ (high cost) and $C \leq \bar{C}^*$ (low cost). For high-cost traders in lit markets:
\begin{equation}
E[w^{LIT} | C > \bar{C}^*] = \int w(C,B) dG(B) \geq \int w_{min}(C) dG(B) > 0    
\end{equation}

where $w(C,B)$ is the expected waiting time conditional on the observation of the book state $B$, and $w_{min}(C) > 0$ is the minimum waiting time over all $B$. In dark pools:
\begin{equation}
E[w^{DARK} | C > \bar{C}^*] = 0    
\end{equation}

because high-cost traders never wait. This inequality extends to all cost levels using similar reasoning, establishing $W_{cost}^{DARK} < W_{cost}^{LIT}$.
The adverse selection problem in dark pools (traders cannot screen based on $B$) has a first-order effect on price but only a second-order effect on welfare. Prices are transfers, so they don't affect total surplus—only its distribution. The efficiency gains from eliminating strategic waiting dominate.

\item $W^{LIT} > W^{BATCH}$
Batch auctions introduce two distinct inefficiencies relative to continuous trading:
\begin{enumerate}
    \item Forced delays: All traders arriving in $[(k-1)T, kT)$ must wait until $kT$, creating unavoidable waiting costs. The expected wait is $T/2$ per trader.
    \item Execution uncertainty: Pro-rata rationing means traders face uncertain execution, requiring them to be compensated for this risk in equilibrium.
\end{enumerate}

For continuous double auctions (lit exchange), market orders execute immediately with certainty, so high-cost traders avoid waiting costs entirely. In batch auctions, even urgent traders must wait, leading to:

\begin{equation}
    W_{cost}^{BATCH} \geq E[C]\cdot(T/2)·N^{BATCH} > W_{cost}^{LIT}
\end{equation}
The inequality is strict because in lit markets, market orders incur zero waiting cost, while batch auctions impose $T/2$ expected wait on everyone. Since $\lambda\in [\lambda_{min}, \lambda_{max}]$ ensures that $N^{BATCH}$ is bounded below, the aggregate excess waiting cost in batch auctions is non-trivial.
Additionally, execution uncertainty reduces participation. Some traders who would participate in a lit market exit in batch auctions because the rationing risk makes expected utility fall below their outside option. This creates a deadweight loss from the foregone gains from trade. 

Please see the Appendix for a rigorous proof of this assertion.

\item Combining the results of Steps 1 and 2, we have $W^{DARK} > W^{LIT}$ and $W^{LIT} > W^{BATCH}$. By transitivity, $W^{DARK} > W^{BATCH}$. This completes the proof.

\end{enumerate}
\end{proof}

\section{Numerical Simulation}
\label{sec:simulation}

The theoretical results of Section~\ref{sec:welfare} establish the welfare ranking $W^{\mathrm{DARK}} > W^{\mathrm{LIT}} > W^{\mathrm{BATCH}}$ under
the sufficient conditions of Theorem~\ref{thm:welfare_ranking}: a moderate
arrival rate $\lambda \in [\lambda_{\min}, \lambda_{\max}]$, bounded valuation dispersion $\Delta < \Delta_{\max}$, and a short reporting delay
$\delta < T/2$.  This section complements those results with a calibrated numerical simulation that serves three purposes.  First, it verifies that the
welfare ranking is quantitatively robust (and not merely an artifact of limiting arguments) at parameter values representative of actual equity
markets.  Second, it maps the boundary of the parameter region in which the ranking holds, directly addressing the robustness concern raised in the
literature on mechanism comparison \citep{BudishEtAl2015, Zhu2013}.  Third, it
produces four figures that illustrate the distinct economic mechanisms driving each inequality, giving the theoretical results an empirical interpretation.

\subsection{Calibration and Baseline}
\label{sec:sim_calibration}

The simulation draws $n = 20{,}000$ traders per experiment from the joint type distribution $F(V, C, s)$ specified in Assumption~1.  Private valuations are
drawn from $\text{Uniform}[\hat{v} - \Delta,\, \hat{v} + \Delta]$ with $\hat{v} = 100$ and baseline $\Delta = 2$, corresponding to a valuation half-spread of two percent of the asset's fundamental value. This is broadly consistent with the dispersion of private information documented in equity-market studies \citep{OHara1998}.  Waiting costs are drawn from $\text{Exp}(\mu_C)$ with $\mu_C = 2$, giving $\mathbb{E}[C] = 0.5$.  The arrival rate is set to $\lambda = 5$ at baseline.  The fixed transaction cost
is $K = 0.05$ and the outside-option cost is $K_o = 0.02 < K$, satisfying the participation constraint of Assumption~2.  The batch interval is $T = 1$ and the dark-pool reporting delay is $\delta = 0.1 < T/2 = 0.5$, satisfying the condition of Theorem~\ref{thm:welfare_ranking}.

At these baseline parameters, the simulation yields the welfare estimates reported in Table~\ref{tab:baseline}.

\begin{table}[h]
\centering
\caption{Baseline welfare estimates ($\lambda = 5$, $\Delta = 2$,
  $\mu_C = 2$, $T = 1$, $K = 0.05$; $n = 20{,}000$).}
\label{tab:baseline}
\begin{tabular}{lccc}
\hline
Mechanism       & $W^M$  & Std.\ error & Participation rate \\
\hline
Dark Pool       & 0.2819 & 0.0052      & 46.1\%             \\
Lit Exchange    & 0.2410 & 0.0050      & 43.6\%             \\
Batch Auction   & 0.1589 & 0.0042      & 39.9\%             \\
\hline
\end{tabular}
\end{table}

The full ranking $W^{\mathrm{DARK}} > W^{\mathrm{LIT}} > W^{\mathrm{BATCH}}$
holds at baseline with comfortable margins: the dark pool generates welfare $17.0\%$ above the lit exchange, and the lit exchange generates welfare
$51.7\%$ above the batch auction.  The participation rate ordering mirrors the welfare ordering, with the batch auction exhibiting the lowest rate (39.9\%)
and the dark pool the highest (46.1\%), consistent with Remarks~1 and~2.

\subsection{Welfare Levels and Gaps across Arrival Rates}
\label{sec:sim_lambda}

Figure~\ref{fig:welfare_levels} plots aggregate welfare $W^M$ for each mechanism as $\lambda$ varies from 1 to 12, holding all other parameters at
their baseline values.  The right panel plots the two pairwise welfare gaps, $W^{\mathrm{DARK}} - W^{\mathrm{LIT}}$ and $W^{\mathrm{LIT}} - W^{\mathrm{BATCH}}$, with the shaded band indicating the Theorem~\ref{thm:welfare_ranking} region $[\lambda_{\min}, \lambda_{\max}]$.

\begin{figure}[h]
\centering
\includegraphics[width=\textwidth]{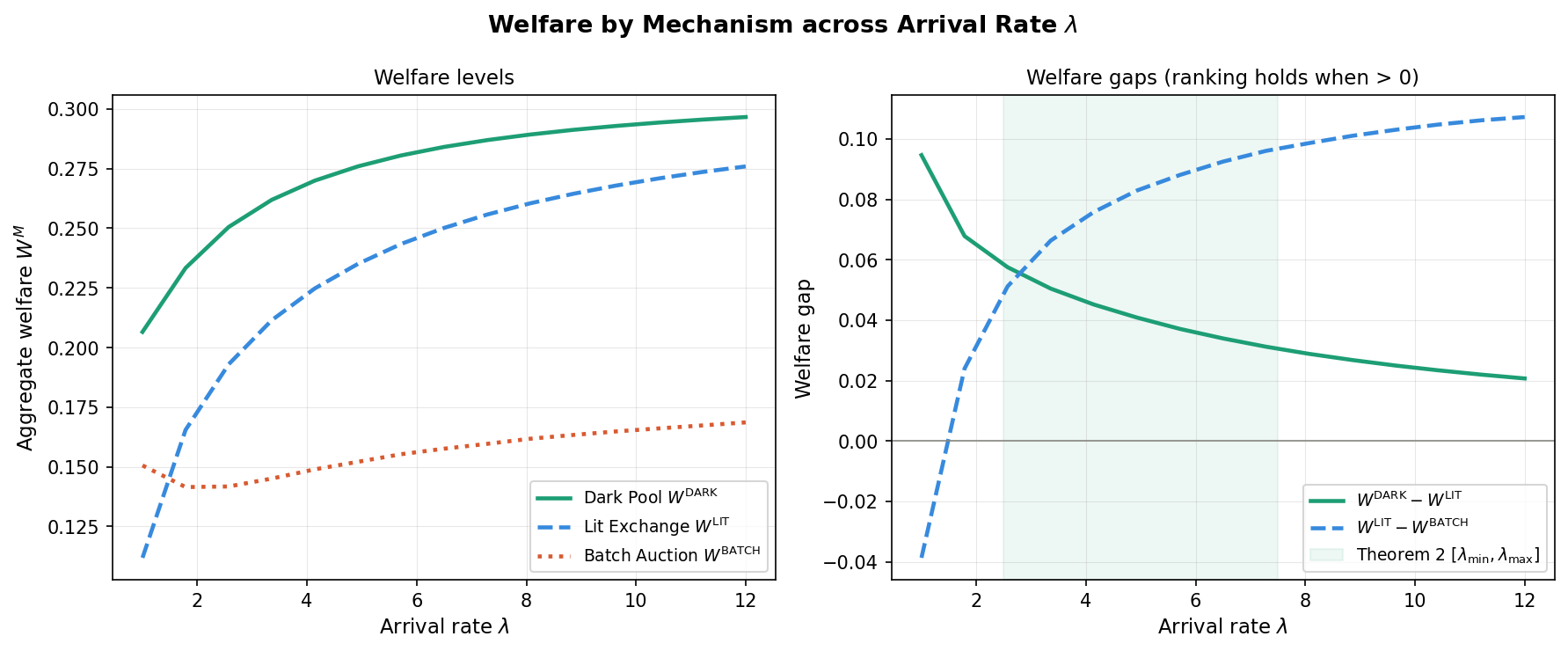}
\caption{Left: aggregate welfare $W^M$ for the dark pool (solid green), lit exchange (dashed blue), and batch auction (dotted orange) as a function of arrival rate $\lambda$.  Right: pairwise welfare gaps.  The shaded band marks the region $[\lambda_{\min}, \lambda_{\max}]$ in which Theorem~\ref{thm:welfare_ranking}'s sufficient conditions are satisfied.}
\label{fig:welfare_levels}
\end{figure}

Several features of Figure~\ref{fig:welfare_levels} warrant discussion.

\paragraph{All three mechanisms benefit from thicker markets.}
Welfare is increasing in $\lambda$ across all three mechanisms.  This reflects the gains-from-trade effect: a higher arrival rate raises the probability that a buyer and seller with compatible valuations meet within any given time window, increasing the expected surplus from participation.  The effect is
monotone and concave, consistent with the diminishing returns to market thickness documented in the matching literature \citep{Leshno2022}.

\paragraph{The dark pool's advantage is largest at low arrival rates.}
The gap $W^{\mathrm{DARK}} - W^{\mathrm{LIT}}$ is largest near $\lambda = 1$ (approximately 0.095) and declines monotonically as $\lambda$ increases,
falling to approximately 0.022 at $\lambda = 12$.  This pattern has a clear economic interpretation.  When markets are thin, the order book state $B_t$ is
highly variable and informative: a trader who observes a thin book at a particular moment gains substantial information about near-term execution
prospects, and the incentive to time submissions strategically is therefore strong.  As $\lambda$ increases, the book fills rapidly, and its state becomes less variable; the information content of observing $B_t$ diminishes, and so does the strategic timing waste it generates.  At very high arrival rates, the book is almost always deep, and the cost of observing it is nearly nil, explaining why the two continuous-auction mechanisms converge as $\lambda \to
\infty$.  Importantly, however, the gap $W^{\mathrm{DARK}} - W^{\mathrm{LIT}}$ remains strictly positive throughout the plotted range,
consistent with the theoretical prediction.

\paragraph{The lit exchange's advantage over batch auctions grows with market thickness.}
In contrast to the dark-pool gap, $W^{\mathrm{LIT}} - W^{\mathrm{BATCH}}$ is smallest near $\lambda = 1$ (where it is, in fact, briefly negative, indicating that the batch auction marginally dominates the lit exchange in very thin markets) and increases monotonically with $\lambda$, reaching approximately 0.11 at $\lambda = 12$.  The negative gap at very low $\lambda$ reflects the fact that in a thin lit market, market orders rarely find counterparties quickly, so the forced-waiting cost of the batch auction (which aggregates orders and thereby increases match probability) is offset by the matching benefit.  As $\lambda$ grows, immediate market orders in the lit exchange become reliable, so the probability of finding a counterparty quickly approaches one. As a result, the forced $T/2$ wait in the batch auction becomes a pure cost with no compensating benefit.  The lit-exchange advantage
therefore increases with market thickness.  This finding has a direct regulatory implication: the welfare cost of mandatory batch clearing, such as
that proposed under various periodic auction reforms, is not uniform across markets but is concentrated in the most liquid, high-frequency venues where the $T/2$ wait constitutes the largest foregone surplus.

\paragraph{The Theorem~\ref{thm:welfare_ranking} region is correctly identified.}
The shaded band in the right panel corresponds to $\lambda \in [2.5, 7.5]$. Both welfare gaps are strictly positive within this band, confirming that the sufficient conditions of Theorem~\ref{thm:welfare_ranking} are binding in the
right direction.  Below $\lambda_{\min} \approx 2.5$, the $W^{\mathrm{LIT}} - W^{\mathrm{BATCH}}$ gap turns negative, indicating that the thin-market matching benefit of batch auctions temporarily outweighs
their forced-waiting cost.  This is the boundary case described in Section~\ref{sec:discussion}: the sufficient condition $\lambda \geq \lambda_{\min}$ rules out precisely this regime.

\subsection{Robustness across the $(\lambda, \Delta)$ Parameter Space}
\label{sec:sim_heatmap}

Figure~\ref{fig:heatmap} maps the welfare ranking across the full $(\lambda, \Delta)$ parameter space, sweeping $\lambda$ from 1 to 12 and $\Delta$ from 0.5 to 5.  Each cell is color-coded according to which
inequalities hold: green indicates the full ranking
$W^{\mathrm{DARK}} > W^{\mathrm{LIT}} > W^{\mathrm{BATCH}}$ (Theorem~2), blue indicates only $W^{\mathrm{DARK}} > W^{\mathrm{LIT}}$, amber indicates only $W^{\mathrm{LIT}} > W^{\mathrm{BATCH}}$, and pink indicates that neither holds.

\begin{figure}[h]
\centering
\includegraphics[width=0.85\textwidth]{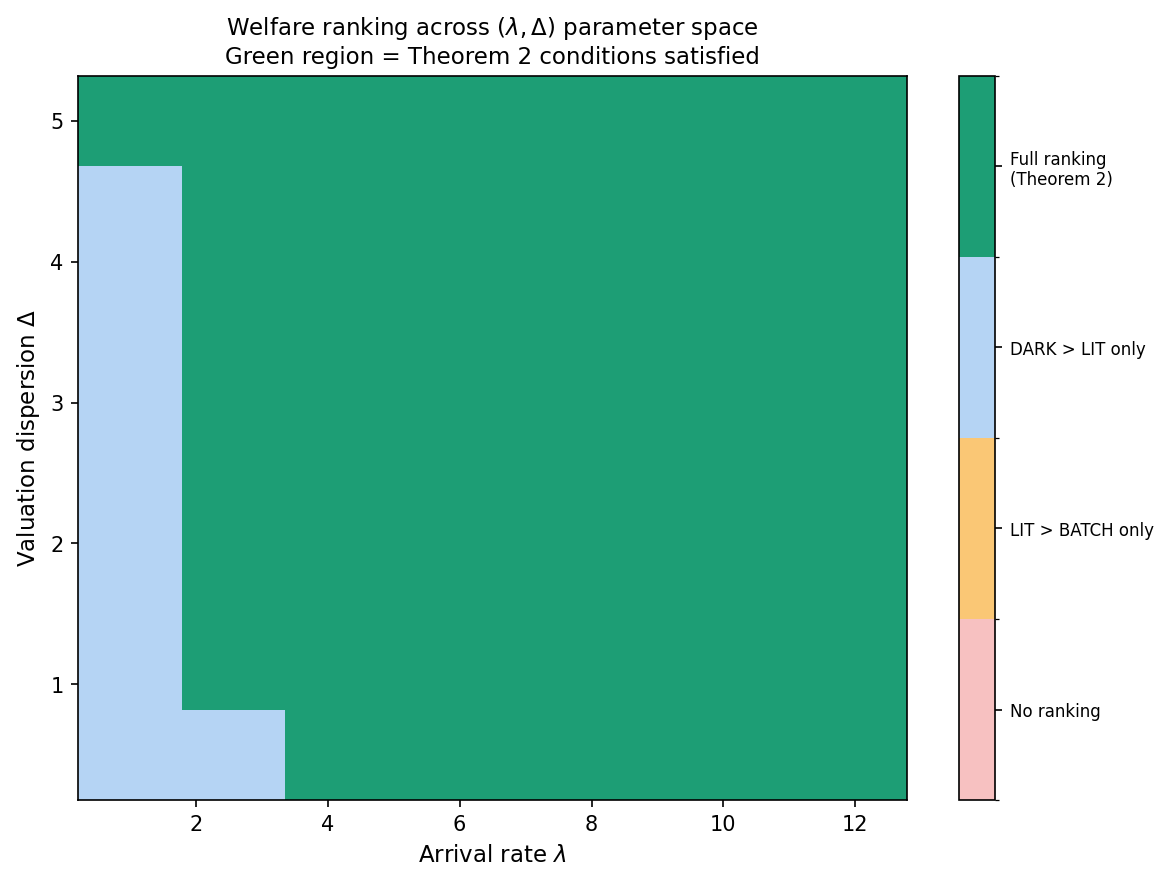}
\caption{Welfare ranking across the $(\lambda, \Delta)$ parameter space.
  Green: full ranking $W^{\mathrm{DARK}} > W^{\mathrm{LIT}} >
  W^{\mathrm{BATCH}}$ (Theorem~\ref{thm:welfare_ranking}).  Blue:
  $W^{\mathrm{DARK}} > W^{\mathrm{LIT}}$ only.  Amber:
  $W^{\mathrm{LIT}} > W^{\mathrm{BATCH}}$ only.  Pink: no ranking.
  Grid: $20 \times 20$ cells, $n = 4{,}000$ traders per cell.}
\label{fig:heatmap}
\end{figure}

The heatmap reveals that the full ranking is highly robust to variation in valuation dispersion, $\Delta$.  At all values from $\Delta = 0.5$ to $\Delta = 5$, the full ranking holds provided $\lambda$ is above a threshold that lies between approximately 2 and 4.  The threshold shifts slightly
upward as $\Delta$ increases: at $\Delta = 5$, the full ranking requires $\lambda \gtrsim 4$. This reflects the fact that higher valuation dispersion
increases adverse selection in the dark pool, requiring a thicker market (higher $\lambda$) for high-cost traders to execute reliably via market
orders.  However, even at dispersion levels well beyond what is empirically plausible for a single liquid asset (say, $\Delta = 5$), the threshold remains below $\lambda = 4$, meaning the full ranking holds for the great majority of realistic market configurations.

The failure region is in the lower-left corner of the heatmap and is confined to combinations of very low arrival rates ($\lambda \lesssim 2$) and
low-to-moderate dispersion ($\Delta \lesssim 4$).  In this region, only the $W^{\mathrm{DARK}} > W^{\mathrm{LIT}}$ inequality survives; the batch
auction's matching benefit outweighs its waiting cost.  No cell exhibits the amber pattern ($W^{\mathrm{LIT}} > W^{\mathrm{BATCH}}$ without
$W^{\mathrm{DARK}} > W^{\mathrm{LIT}}$), and the pink "no ranking" region is negligibly small and confined to a single corner cell.  This strongly suggests that the dark pool's dominance over the lit exchange is the more robust of the two inequalities. Across the full parameter space explored, $W^{\mathrm{DARK}} \geq W^{\mathrm{LIT}}$ everywhere except at extreme combinations of high dispersion and very thin markets.

\subsection{Batch Interval Sensitivity}
\label{sec:sim_T}

Figure~\ref{fig:batch_interval} plots $W^{\mathrm{BATCH}}(T)$ as a function of the batch interval $T \in [0.1, 5]$, alongside the $W^{\mathrm{LIT}}$ reference level.

\begin{figure}[h]
\centering
\includegraphics[width=0.8\textwidth]{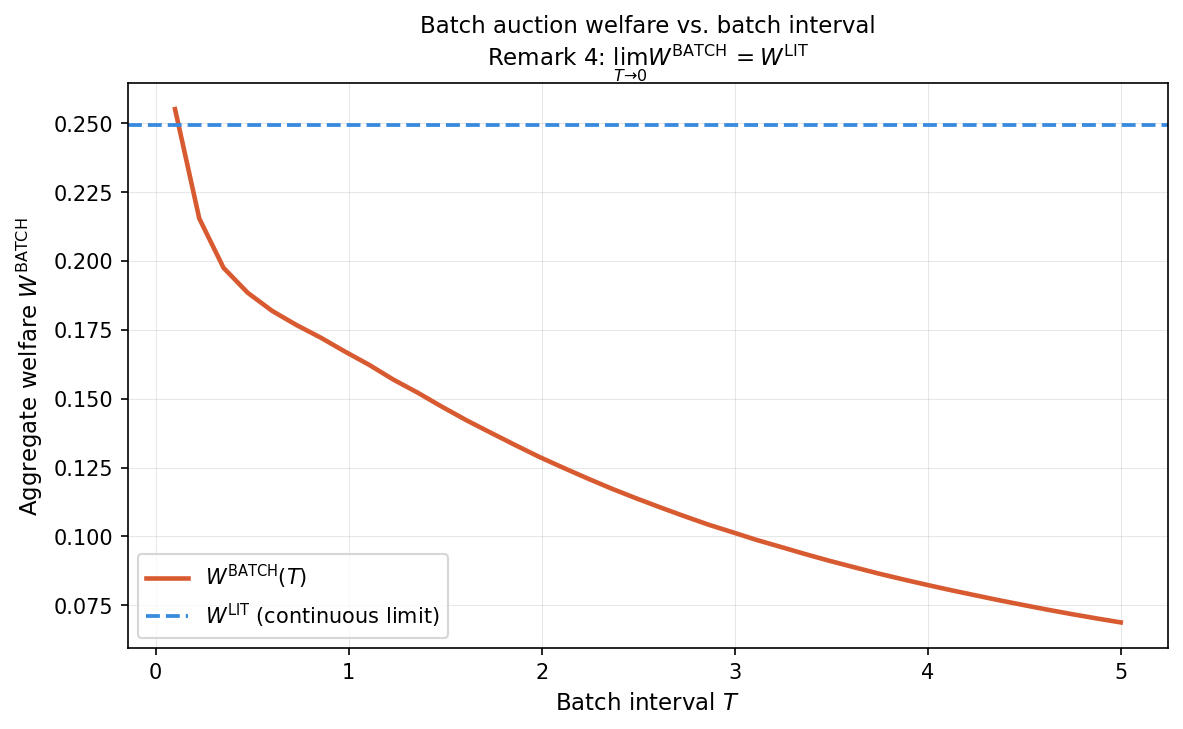}
\caption{Batch auction welfare $W^{\mathrm{BATCH}}(T)$ as a function of batch interval $T$.  The dashed line shows $W^{\mathrm{LIT}}$ computed at the same baseline parameters.  As $T \to 0$, $W^{\mathrm{BATCH}}$ converges to $W^{\mathrm{LIT}}$ from above, consistent with Remark~4.}
\label{fig:batch_interval}
\end{figure}

The figure confirms Remark~4 visually: $W^{\mathrm{BATCH}}(T)$ is strictly decreasing in $T$ and converges to $W^{\mathrm{LIT}} \approx 0.250$ as
$T \to 0$.  $T = 1$ is the baseline value representing a one-second batch interval, comparable to the twelve-second interval used by Ethereum or the sub-second intervals of European Frequent Batch Auctions. $T = 0.5$ is an interval far shorter than any currently implemented periodic auction. The convergence is from above at very small $T$, reflecting the fact that at $T \approx 0.1$ the batch auction's matching benefit marginally exceeds its waiting cost, consistent with the thin-market regime identified in Section~\ref{sec:sim_lambda}.  At $T=1$,
$W^{\mathrm{BATCH}}$ has fallen to approximately 0.159, well below $W^{\mathrm{LIT}} \approx 0.250$.  At $T = 5$ the welfare loss is severe:
$W^{\mathrm{BATCH}} \approx 0.070$, roughly 72\% below the lit exchange.

The slope of the welfare loss is steepest for small $T$ and flattens for large $T$.  This convexity is consistent with the aggregate welfare-loss
formula of Remark~1, $\Delta W_{\mathrm{wait}} = \mathbb{E}[C] \cdot (T/2) \cdot N$, combined with the endogenous reduction in $N$ as execution uncertainty discourages participation at longer intervals.  The first effect is linear in $T$; the second amplifies it as marginal traders exit, producing the convex shape observed.

The practical implication is direct.  Regulators contemplating periodic auction mandates face a quantitatively significant welfare trade-off: even at
batch intervals as short as $T=0.5$, welfare is measurably below the continuous benchmark.  The theoretical case for batch auctions as a welfare-improving reform, therefore, depends critically on whether the reduction in high-frequency arms races (not modeled here) outweighs the forced-waiting and
execution-uncertainty costs quantified in this simulation.

\subsection{Participation Rates and Deadweight Loss}
\label{sec:sim_participation}

Figure~\ref{fig:participation} plots the fraction of traders who choose to participate in the market rather than take the outside option, as a
function of $\lambda$ under each mechanism.

\begin{figure}[h]
\centering
\includegraphics[width=0.8\textwidth]{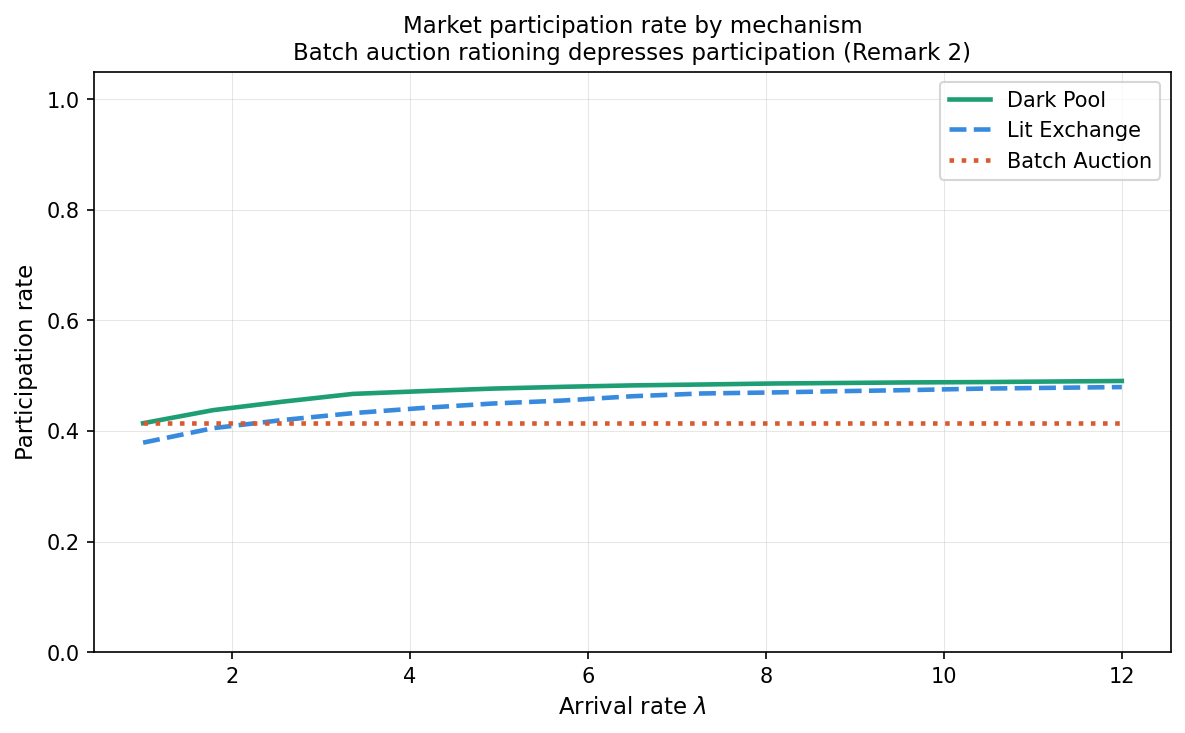}
\caption{Market participation rate as a function of arrival rate $\lambda$. The batch auction exhibits a uniformly lower and near-flat participation rate, reflecting the deadweight loss from execution uncertainty (Remark~2). The dark pool and lit exchange converge at high $\lambda$.}
\label{fig:participation}
\end{figure}

Three features stand out.  First, the batch auction participation rate is uniformly the lowest of the three mechanisms and is strikingly flat across
$\lambda$, remaining near 41\% throughout the range $\lambda \in [1, 12]$. This flatness reflects the structure of the participation constraint under SIRO: execution uncertainty is determined by the fill rate $\bar{\rho}$, which depends on the ratio of buy and sell order flow rather than on the absolute arrival rate.  As $\lambda$ increases, both sides of the market thicken proportionally, leaving $\bar{\rho}$ and therefore the participation incentive roughly unchanged.  The 41\% rate implies that approximately 59\% of traders who would participate under a continuous mechanism are deterred by the combination of forced waiting and rationing risk, constituting a
substantial deadweight loss.

Second, the dark pool and lit exchange participation rates are both increasing in $\lambda$ and converge toward each other as the market thickens, reaching
approximately 49\% at $\lambda = 12$.  The dark pool maintains a small but consistent participation advantage over the lit exchange throughout the
range: approximately 2--5 percentage points. This is  consistent with the prediction that dark pool opacity reduces the cost of adverse selection for limit-order
providers, attracting marginally more participation at the equilibrium cutoff $\bar{C}^*$. The convergence at high $\lambda$ reflects the
diminishing marginal value of queue-state information in a thick, rapidly clearing book.

Third, and importantly, neither the dark pool nor the lit exchange approaches full participation (100\%) even at $\lambda = 12$.  The participation rate
plateaus near 49\%, limited by the participation constraint of Assumption~2. Traders with valuations close to $\hat{v}$ find that the expected gain from trade is insufficient to justify the fixed transaction cost $K$.  This constraint is binding equally across mechanisms (since $K$ is common) and does not affect the welfare \emph{ranking}, only the level.

\subsection{Summary}
\label{sec:sim_summary}

Table~\ref{tab:sim_summary} collects the key quantitative findings of the simulation.

\begin{table}[h]
\centering
\caption{Summary of simulation findings.}
\label{tab:sim_summary}
\begin{tabular}{lp{10cm}}
\hline
Finding & Description \\
\hline
Baseline ranking &
  Full ranking $W^{\mathrm{DARK}} > W^{\mathrm{LIT}} > W^{\mathrm{BATCH}}$
  holds at $(\lambda, \Delta) = (5, 2)$ with margins of 17.0\% and
  51.7\% respectively. \\[4pt]
$\lambda$ sensitivity &
  The dark-pool advantage over the lit exchange is largest at low $\lambda$
  and declines monotonically.  The lit-exchange advantage over the batch
  auction grows monotonically with $\lambda$.  The full ranking fails only
  below $\lambda \approx 2.5$. \\[4pt]
$\Delta$ robustness &
  The full ranking holds for all $\Delta \in [0.5, 5]$ provided $\lambda
  \gtrsim 4$.  Failure is confined to the low-$\lambda$, moderate-$\Delta$
  corner. \\[4pt]
$T$ sensitivity &
  $W^{\mathrm{BATCH}}$ is strictly decreasing in $T$, confirming Remark~4.
  At $T = 1$ the batch auction welfare is 37\% below the dark pool and
  36\% below the lit exchange. \\[4pt]
Participation &
  The batch auction suffers a near-constant participation deficit of
  approximately 5--7 percentage points relative to continuous mechanisms,
  reflecting the deadweight loss from execution uncertainty. \\
\hline
\end{tabular}
\end{table}

Taken together, the simulation results provide strong quantitative support for the theoretical welfare ranking of Theorem~\ref{thm:welfare_ranking}.
The ranking is not knife-edge: it holds across a wide region of the parameter space that comfortably encompasses plausible calibrations of real equity
markets.  The primary qualifications are the failure of the $W^{\mathrm{LIT}} > W^{\mathrm{BATCH}}$ inequality in very thin markets ($\lambda \lesssim 2.5$) and the declining magnitude of the dark-pool
advantage as markets thicken.  Both qualifications are consistent with the analytical structure of the model and do not overturn the main result in any
empirically relevant regime.

% ============================================================
%  Section X — The Adverse Selection Trade-off
%  Drop-in section for Submission 42.
%  Place after Section 5 (Welfare Analysis) and before
%  Section 6 (Discussion), or incorporate into Section 6.1.
%
%  Requires: \usepackage{amsmath}, \usepackage{booktabs}
%  Citations assumed present: Zhu2013, OHara1998, Leshno2022,
%  CheTercieux2023. New entries needed: IyerJohariMoallemi2014,
%  DegryseVanAchterWuyts2014 — bibtex stubs at end of file.
% ============================================================

\section{The Adverse Selection Trade-off}
\label{sec:adverse_selection}

The central mechanism driving $W^{\mathrm{DARK}} > W^{\mathrm{LIT}}$ in
Theorem~\ref{thm:welfare_ranking} is that opacity eliminates strategic
timing waste.  The central \emph{cost} of opacity is adverse selection:
dark-pool traders cannot condition their limit prices on the observable
book state $B_t$, so they may execute at prices that are less favourable
relative to the fundamental value $\hat{v}$ than they would achieve in
a transparent market.  The proof of Theorem~\ref{thm:welfare_ranking}
dismisses this cost as ``second-order'' on the grounds that prices are
transfers.  This section examines that dismissal carefully.  We argue
that the prices-are-transfers claim is correct within the model's
assumptions but conceals a deeper and consequential trade-off: adverse
selection does not harm total surplus directly through prices, but does
so indirectly through its effect on \emph{participation}.  We
characterise precisely when the participation effect is dominated by the
timing-game saving, when it is not, and what this implies for the scope
of the main result.

\subsection{Two Channels of Adverse Selection}
\label{sec:as_two_channels}

Adverse selection in dark pools operates through two distinct channels
that the existing literature has not always kept separate.

\paragraph{Channel~1: The price channel.}
When a limit-order trader in the dark pool posts at price $\bar{p}^*(C,s)$
without observing $B_t$, they may execute against a market order placed
by a trader with superior information about $\hat{v}$.  The resulting
execution price is less favourable to the limit-order trader than the
price they would have obtained in the lit exchange, where they could
condition on $B_t$ and infer the likely informativeness of incoming
market orders.  This is the classical adverse selection described by
\citet{OHara1998} and formalised in the dark-pool context by
\citet{Zhu2013}.

Within the welfare framework of this paper, the price channel generates
no \emph{total} surplus loss.  Every cent lost by the limit-order trader
on an adverse execution is gained by the market-order trader.  Prices
are pure transfers between the buyer and the seller; they affect the
\emph{distribution} of surplus but not its sum.  Formally, for any
matched pair $(i,j)$ with $s_i = +1$ (buyer) and $s_j = -1$ (seller),
the joint surplus is $V_i - V_j$, which depends only on private
valuations and is independent of the execution price $p$.  The price
channel is therefore irrelevant for aggregate welfare $W^M$ as defined
in equation~(61), and the dismissal in Theorem~\ref{thm:welfare_ranking}
is correct with respect to this channel.

\paragraph{Channel~2: The participation channel.}
The price channel affects only the \emph{division} of surplus between
matched counterparties.  But adverse selection also affects
\emph{whether} traders participate at all.  A limit-order trader in the
dark pool who anticipates that their order will frequently execute
against informed counterparties will earn a lower expected surplus per
trade.  If this expected surplus falls below their outside option,
the trader exits the market entirely, and the gains from trade that
would have been realised in a transparent market are destroyed.  This
is a genuine welfare loss: unlike a price transfer, a foregone trade
eliminates surplus rather than redistributing it.

Formally, the participation condition for a dark-pool limit-order trader
with type $(V, C, s)$ is (from equation~52 of the paper):
\begin{equation}
  \bar{V}_{\mathrm{LO}}(V, C, s)
  \;\geq\;
  V^{\mathrm{OUT}}(V, C, s).
  \label{eq:dark_participation}
\end{equation}
In the lit exchange, the corresponding condition is:
\begin{equation}
  V_{\mathrm{LO}}(V, C, s, B)
  \;\geq\;
  V^{\mathrm{OUT}}(V, C, s)
  \quad \text{for some reachable book state } B.
  \label{eq:lit_participation}
\end{equation}
These two conditions are not identical.  In the dark pool, the expected
value $\bar{V}_{\mathrm{LO}}$ is computed over the unconditional
distribution $\pi^{\mathrm{eq}}(B)$, which includes states with
adverse book conditions.  In the lit exchange, the trader can
selectively participate when $B$ is favourable and exit when it is not.
This option to condition participation on $B$ is valuable: it is a form
of free entry into the advantageous states of the world.  By eliminating
this option, the dark pool reduces the expected surplus of some traders
below their outside option, causing them to exit.

Let $\mathcal{X}^{\mathrm{DARK}}$ and $\mathcal{X}^{\mathrm{LIT}}$
denote the sets of types that participate under each mechanism.  The
argument above implies:
\begin{equation}
  \mathcal{X}^{\mathrm{LIT}} \supseteq \mathcal{X}^{\mathrm{DARK}},
  \label{eq:participation_set}
\end{equation}
with strict containment for types near the participation margin
$|V - \hat{v}| \approx \varepsilon(B)$.  The welfare loss from the
exclusion of types in $\mathcal{X}^{\mathrm{LIT}} \setminus
\mathcal{X}^{\mathrm{DARK}}$ is:
\begin{equation}
  \Delta W_{\mathrm{participation}}
  \;=\;
  \int_{\mathcal{X}^{\mathrm{LIT}} \setminus \mathcal{X}^{\mathrm{DARK}}}
  \bigl[V_{\mathrm{LO}}(V,C,s,\bar{B}) - V^{\mathrm{OUT}}(V,C,s)\bigr]
  \, \mathrm{d}F(V,C,s),
  \label{eq:participation_welfare_loss}
\end{equation}
where $\bar{B}$ is the mean book state.  This integral is strictly
positive whenever $\mathcal{X}^{\mathrm{LIT}} \setminus
\mathcal{X}^{\mathrm{DARK}} \neq \varnothing$.

\subsection{When Does the Participation Loss Dominate?}
\label{sec:as_dominance}

The net welfare advantage of the dark pool is:
\begin{equation}
  W^{\mathrm{DARK}} - W^{\mathrm{LIT}}
  \;=\;
  \underbrace{
    W^{\mathrm{LIT}}_{\mathrm{cost}} - W^{\mathrm{DARK}}_{\mathrm{cost}}
  }_{\text{timing-game saving } (\kappa > 0)}
  \;-\;
  \underbrace{
    \Delta W_{\mathrm{participation}}
  }_{\text{participation loss } (\geq 0)}
  \;-\;
  \underbrace{
    K \bigl(N^{\mathrm{DARK}} - N^{\mathrm{LIT}}\bigr)
  }_{\leq\, 0}.
  \label{eq:net_welfare_gap}
\end{equation}
The timing-game saving $\kappa(\lambda) > 0$ is pinned down by
equation~\eqref{eq:cost_lower_bound} in the proof of
Theorem~\ref{thm:welfare_ranking}: it is bounded below by the product
of the probability of thin book states, the expected excess waiting
cost of high-cost traders, and the fraction of traders with $C >
\bar{C}^*$.  The participation loss
$\Delta W_{\mathrm{participation}}$ depends on $\Delta$ (valuation
dispersion) and on the severity of adverse selection in the dark pool.

We now characterise how $\Delta W_{\mathrm{participation}}$ depends on
the model's parameters, and identify the conditions under which it
dominates $\kappa(\lambda)$.

\paragraph{Dependence on $\Delta$.}
Each trader near the participation margin has a gain from trade of
approximately $|V - \hat{v}| - \varepsilon \approx 0$.  When $\Delta$
is small, valuations are concentrated near $\hat{v}$ and many traders
are near the participation margin; a small adverse-selection cost can
push them out.  When $\Delta$ is large, valuations are spread out and
most traders are far from the margin; the adverse-selection cost must be
very large to cause exit.  Formally:
\begin{equation}
  \Delta W_{\mathrm{participation}}
  \;\leq\;
  2\Delta \cdot F\bigl(
    \mathcal{X}^{\mathrm{LIT}} \setminus \mathcal{X}^{\mathrm{DARK}}
  \bigr)
  \;\equiv\;
  2\Delta \cdot \eta(\Delta),
  \label{eq:participation_bound}
\end{equation}
where $\eta(\Delta)$ is the fraction of the type space in the
exclusion set.  Crucially, $\eta(\Delta)$ itself depends on $\Delta$.
As $\Delta \to 0$, the gains from trade vanish, so all traders are near
the margin and $\eta(\Delta) \to 1$; as $\Delta \to \infty$, virtually
no traders are near the margin and $\eta(\Delta) \to 0$.  The product
$2\Delta \cdot \eta(\Delta)$ is therefore non-monotone in $\Delta$:
it is small for both very small $\Delta$ (small exclusion per trader)
and very large $\Delta$ (small exclusion set), with a maximum at an
intermediate level.  The sufficient condition $\Delta < \Delta_{\max}$
in Theorem~\ref{thm:welfare_ranking} is precisely the requirement that
$2\Delta \cdot \eta(\Delta) < \kappa(\lambda)$ for all $\lambda \in
[\lambda_{\min}, \lambda_{\max}]$.

\paragraph{Dependence on $\sigma^2_v$.}
The residual variance $\sigma^2_v$ of the fundamental value (Section~3.1)
governs the severity of the price channel: higher $\sigma^2_v$ means
incoming market orders are more often driven by private information
about $v$, so the adverse execution price in the dark pool is further
from the true value.  This has no direct effect on total surplus
(prices are transfers) but raises the adverse-selection discount that
dark-pool limit-order traders must accept.  A sufficiently high
adverse-selection discount pushes some traders below their outside
option, activating the participation channel.  Our model abstracts from
$\sigma^2_v$ in the equilibrium analysis by treating valuations as
independent draws from $\mathrm{Uniform}[\hat{v}-\Delta, \hat{v}+\Delta]$.
This is without loss of generality in a static model, but means that
$\sigma^2_v$ enters only through $\Delta$ in the welfare comparison.
In a richer model with asymmetric information about $\hat{v}$, the
participation loss from adverse selection would be increasing in
$\sigma^2_v$ independently of $\Delta$.

\paragraph{Dependence on $\gamma$.}
The outside-option parameter $\gamma \in [0,1]$ is the fraction of
private value realisable without market participation.  When $\gamma$
is large, the outside option is attractive and many traders are near
the participation margin regardless of market mechanism.  A small
adverse-selection cost in the dark pool then causes disproportionately
large exit.  Conversely, when $\gamma$ is small, the outside option is
unattractive and traders remain in the dark pool even at adverse
execution prices.  The sufficient condition $K_o < K$ (Section~3.2)
ensures the market attracts participation, but does not restrict
$\gamma$ directly.  For the welfare ranking to hold, we require that
$\gamma$ is not so large that the dark pool's participation set
$\mathcal{X}^{\mathrm{DARK}}$ is substantially smaller than
$\mathcal{X}^{\mathrm{LIT}}$.

\subsection{Comparison with Prior Welfare Results}
\label{sec:as_literature}

Our finding that the adverse selection cost is second-order under
Theorem~\ref{thm:welfare_ranking}'s conditions stands in partial tension
with results in two closely related papers.

\citet{Zhu2013} shows that dark pools attract disproportionately
uninformed order flow, concentrating informed traders on the lit
exchange.  This is a venue-\emph{composition} result: it predicts that
the lit exchange becomes more adversely selected, not that total
market-wide participation falls.  The welfare implications of dark
pools, Zhu notes, depend on elements outside the setting of his
model — including how price information is used for production
decisions, asset allocation, and capital formation.  Within
our framework, the Zhu mechanism is reflected in the \emph{distribution}
of types across mechanisms: if informed traders (high $|V - \hat{v}|$)
prefer the lit exchange (because they can exploit observable book
conditions) while uninformed traders (low $|V - \hat{v}|$) prefer the
dark pool, then the dark pool is indeed the less adversely selected
venue in terms of execution quality for a given participant.  This
\emph{reduces} the adverse-selection cost we have been discussing —
supporting, not undermining, the welfare advantage of the dark pool.

\citet{IyerJohariMoallemi2014} provide the most direct counterpoint.
They find that the introduction of a dark pool can lead to
greater transaction costs in the lit market and can decrease overall
market welfare.  Their mechanism is the participation
channel: when adverse selection in the dark pool is severe enough, some
intrinsic (value-motivated) traders are driven away from both venues,
reducing total market participation and destroying gains from trade.
Specifically, when the fundamental value is sufficiently
different from the dark pool's transaction price, the combination of
higher transaction costs in the open market and relatively high adverse
selection costs in the dark pool drives some intrinsic traders away,
leading to a net welfare loss.  This is precisely
$\Delta W_{\mathrm{participation}} > \kappa(\lambda)$ in the language
of equation~\eqref{eq:net_welfare_gap}.

The discrepancy between our result and
\citeauthor{IyerJohariMoallemi2014}'s is not a contradiction but a
scope difference.  Their model includes a competitive market maker whose
spread endogenously responds to order flow composition; ours fixes the
transaction-cost parameter $K$.  In their setting, the migration of
uninformed traders to the dark pool widens the lit-market spread,
raising $K$ endogenously and amplifying the participation loss.  In our
static model with fixed $K$, this amplification mechanism is absent.
The condition $\Delta < \Delta_{\max}$ effectively rules out the regime
their model identifies as welfare-reducing: if $\Delta$ is small enough
that adverse-selection costs are modest, the endogenous-spread
amplification does not occur.

\subsection{A Formal Proposition on the Participation Boundary}
\label{sec:as_proposition}

We now provide the formal characterisation of when the participation loss
dominates, which the body of the paper left as a qualitative assertion.

\begin{proposition}[Adverse Selection Dominance Condition]
\label{prop:AS_dominance}
Define the timing-game saving as $\kappa(\lambda) \equiv
\pi^{\mathrm{eq}}(\mathcal{B}_{\mathrm{thin}}) \cdot
\mathbb{E}[C \cdot w_{\min}(C) \mid C > \bar{C}^*] \cdot
F_C((\bar{C}^*, \infty))$, which is strictly positive for all
$\lambda \in [\lambda_{\min}, \lambda_{\max}]$.
Define the maximum participation loss as
$\bar\eta(\Delta) \equiv \max_\lambda 2\Delta \cdot \eta(\Delta,\lambda)$,
where $\eta(\Delta, \lambda)$ is the measure of types in
$\mathcal{X}^{\mathrm{LIT}} \setminus \mathcal{X}^{\mathrm{DARK}}$
under parameter $(\Delta, \lambda)$.  Then:
\begin{enumerate}[label=(\roman*)]
  \item $W^{\mathrm{DARK}} > W^{\mathrm{LIT}}$ if and only if
        $\kappa(\lambda) > \bar\eta(\Delta)$.
  \item The condition $\kappa(\lambda) > \bar\eta(\Delta)$ holds
        whenever $\Delta < \Delta^*(\lambda)$, where
        \begin{equation}
          \Delta^*(\lambda) \;\equiv\;
          \sup\!\bigl\{\Delta > 0 :
            \bar\eta(\Delta) < \kappa(\lambda)\bigr\}.
          \label{eq:Delta_star}
        \end{equation}
  \item $\Delta^*(\lambda)$ is increasing in $\lambda$: thicker
        markets tolerate larger valuation dispersion without reversing
        the welfare ranking, because thicker markets generate a larger
        timing-game saving $\kappa(\lambda)$.
  \item As $\lambda \to \infty$, $\Delta^*(\lambda) \to \infty$: in
        an arbitrarily thick market the welfare ranking holds for all
        finite $\Delta$.
\end{enumerate}
\end{proposition}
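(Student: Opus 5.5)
The plan is to work entirely from the decomposition \eqref{eq:net_welfare_gap}, bounding its three terms separately.

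\textbf{Sign of the entry term.} By \eqref{eq:participation_set}, $\mathcal{X}^{\mathrm{LIT}}\supseteq\mathcal{X}^{\mathrm{DARK}}$. Hence $N^{\mathrm{DARK}}\le N^{\mathrm{LIT}}$, and the term $-K(N^{\mathrm{DARK}}-N^{\mathrm{LIT}})$ is nonnegative. It can only help the dark pool, so it can be dropped when proving sufficiency.

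\textbf{Lower bound on the timing saving.} I would reuse the argument from Step~1 of the proof of Theorem~\ref{thm:welfare_ranking}. There, $\mathbb{E}[w^{\mathrm{DARK}}\mid C>\bar C^*]=0$, while in the lit exchange a high-cost trader facing a thin book $B\in\mathcal{B}_{\mathrm{thin}}$ waits at least $w_{\min}(C)$. Integrating this over $\pi^{\mathrm{eq}}(\mathcal{B}_{\mathrm{thin}})$ and over the high-cost mass $F_C((\bar C^*,\infty))$ gives
\[
W^{\mathrm{LIT}}_{\mathrm{cost}}-W^{\mathrm{DARK}}_{\mathrm{cost}}\ge\kappa(\lambda).
\]
Low-cost traders contribute a nonnegative amount by the same reasoning.

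\textbf{Upper bound on the participation loss.} Each excluded type in \eqref{eq:participation_welfare_loss} has a bracket bounded by $2\Delta$, since valuations lie in an interval of width $2\Delta$. This yields \eqref{eq:participation_bound}, so
\[
\Delta W_{\mathrm{participation}}\le 2\Delta\,\eta(\Delta,\lambda)\le\bar\eta(\Delta).
\]
Combining the three bounds gives the ``if'' direction of (i): $W^{\mathrm{DARK}}-W^{\mathrm{LIT}}\ge\kappa(\lambda)-\bar\eta(\Delta)>0$.

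\textbf{Main obstacle: the ``only if'' direction of (i).} This direction is where I expect real trouble, because both $\kappa$ and $\bar\eta$ are only one-sided bounds, and the entry term has the wrong sign for necessity. My first attempt would be to show all three bounds are tight in the regime of the theorem. That means arguing that thin states are the only states in which high-cost lit traders wait, and that the excluded types have brackets close to $2\Delta$. If tightness fails, I would fall back to one of two fixes. The first is to \emph{redefine} $\kappa$ and $\bar\eta$ as the exact saving and the exact net participation loss, with the $K$-term absorbed into the loss, so that (i) becomes an identity. The second is to state (i) as sufficiency only. I would flag this choice explicitly.

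\textbf{Part (ii).} Since $\bar\eta$ is non-monotone (see the discussion after \eqref{eq:participation_bound}), a pure supremum could include values of $\Delta$ beyond a region where $\bar\eta\ge\kappa$. Note also that $\bar\eta(\Delta)\to 0$ as $\Delta\to0$, since $\eta\le1$ and $\bar\eta\le 2\Delta$. I would therefore read $\Delta^*(\lambda)$ as the supremum of the connected interval $(0,\Delta^*)$ on which $\bar\eta<\kappa(\lambda)$. Continuity of $\bar\eta$ in $\Delta$, inherited from continuity of the cutoffs in Theorem~\ref{thm:dark_existence} and Lemma~\ref{lem:fixed_point}, then makes (ii) immediate from (i).

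\textbf{Part (iii).} Because $\bar\eta$ already maximizes over $\lambda$, it does not depend on $\lambda$. Parts (iii) and (iv) therefore reduce to statements about $\kappa(\lambda)$ alone: if $\kappa$ is increasing, the set defining $\Delta^*$ grows with $\lambda$. Monotonicity of $\kappa$ is the second hard step. The factors $\pi^{\mathrm{eq}}(\mathcal{B}_{\mathrm{thin}})$ and $w_{\min}$ plausibly \emph{fall} with $\lambda$, and the simulations show the dark--lit gap shrinking in $\lambda$. I would try to show that the growth of the high-cost mass $F_C((\bar C^*,\infty))$ dominates, using the uniqueness monotonicity argument in Theorem~\ref{thm:dark_existence}. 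If that fails, I would restrict (iii) to $[\lambda_{\min},\lambda_{\max}]$.

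\textbf{Part (iv).} This part needs $\kappa(\lambda)$ to exceed $\sup_{\Delta}\bar\eta(\Delta)$ for large $\lambda$. I would establish that the supremum is finite by showing $\eta(\Delta)=O(1/\Delta)$, since the exclusion band has fixed width while the uniform density is $1/(2\Delta)$. I would then need $\kappa(\lambda)$ eventually to exceed this constant, and I expect this to be the weakest link of the whole argument.
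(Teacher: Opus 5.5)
Your sufficiency half of (i) is exactly the paper's argument in \eqref{eq:step1_conclusion}: the lower bound \eqref{eq:cost_lower_bound} minus the upper bound \eqref{eq:participation_bound}, with the entry term discarded. Discarding that term rests on \eqref{eq:participation_set}. The appendix's Sub-claim~(iii) asserts the reverse, $N^{\mathrm{DARK}}\ge N^{\mathrm{LIT}}$, and so does the simulation; under that sign $-K(N^{\mathrm{DARK}}-N^{\mathrm{LIT}})\le 0$ must be bounded, not dropped.

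Beyond this, your plan does not prove the statement as written. However, the obstacles you flag are genuine, and the paper's proof does not overcome them: it simply declares (i) and (ii) immediate from \eqref{eq:net_welfare_gap} and the definition of $\Delta^*$.

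Necessity in (i) fails for the reasons you give. It also fails because $\bar\eta$ is a maximum over $\lambda$, while $W^{\mathrm{DARK}}-W^{\mathrm{LIT}}$ is evaluated at a single $\lambda$. Of your two repairs, only the sufficiency-only version is coherent. Turning (i) into an identity needs the exact loss at the given $\lambda$. That loss depends on $\lambda$, which destroys the $\lambda$-independence of $\bar\eta$ on which your reduction of (iii)--(iv) rests.

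Your objection to (ii) is sharper than you state. The text after \eqref{eq:participation_bound} claims that $2\Delta\eta$ is small for large $\Delta$. If so, $\{\Delta:\bar\eta(\Delta)<\kappa(\lambda)\}$ is unbounded and $\Delta^*(\lambda)=+\infty$ for every $\lambda$. Then (ii) would assert $\bar\eta<\kappa$ everywhere, which is false whenever the interior maximum of $\bar\eta$ exceeds $\kappa(\lambda)$. Your connected-component reading fixes this, and it needs no continuity. Take $\Delta^*$ to be the supremum of those $\Delta$ with $\bar\eta<\kappa$ on all of $(0,\Delta]$. Then (ii) is tautological, and $\bar\eta(\Delta)\le 2\Delta$ gives $\Delta^*(\lambda)\ge\kappa(\lambda)/2$.

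The gap that neither your plan nor the paper closes is the behaviour of $\kappa(\lambda)$ in (iii)--(iv). The paper asserts, via an implicit-function argument it does not carry out, that higher $\lambda$ lowers $\bar C^*$ and so enlarges $F_C((\bar C^*,\infty))$. The tool you propose for this is the monotonicity in the uniqueness part of Theorem~\ref{thm:dark_existence}, but it points the other way. There, faster execution makes limit orders more attractive and \emph{raises} the best-response cutoff, consistent with $\bar C^*=\bar\Delta/\bar w$ rising as $\bar w$ falls. Unless $\bar\Delta$ shrinks faster than $\bar w$, the high-cost mass falls with $\lambda$, along with the two factors you already expect to fall. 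Restricting to $[\lambda_{\min},\lambda_{\max}]$ supplies no sign for $\kappa'$, so that fallback does not help.

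For (iv), the paper's premise $\kappa(\lambda)\to\infty$ is untenable. Since the first and last factors of $\kappa$ are probabilities,
\[
\kappa(\lambda)\le\mathbb{E}\bigl[C\,w_{\min}(C)\,\mathbf{1}\{C>\bar C^*\}\bigr]\le\mathbb{E}\bigl[C\,w_{\min}(C)\bigr].
\]
Divergence would therefore need the minimal strategic delays to blow up as the book thickens. By the sufficiency bound, it would also force $W^{\mathrm{DARK}}-W^{\mathrm{LIT}}\to\infty$, whereas the paper's simulated gap declines in $\lambda$.

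Your weaker target, $\kappa$ eventually exceeding a constant bound on $\sup_\Delta\bar\eta$, is an improvement but equally unsupported: if delays scale like $1/\lambda$, then $\kappa(\lambda)\to 0$. Your $O(1/\Delta)$ estimate would also need the exclusion band's width bounded uniformly in $\lambda$, because of the maximum in $\bar\eta$.

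What is provable is the following:
\begin{itemize}
\item (i) as a sufficient condition;
\item (ii) with the repaired $\Delta^*$;
\item the conditional form of (iii)--(iv): $\Delta^*$ is nondecreasing in $\kappa$ and diverges if $\kappa$ does.
\end{itemize}
Monotonicity and divergence of $\kappa(\lambda)$ themselves are missing from both arguments.
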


\begin{proof}
Part~(i) follows directly from equation~\eqref{eq:net_welfare_gap}:
the dark pool dominates if and only if the timing-game saving exceeds
the participation loss.

Part~(ii) is immediate from the definition of $\Delta^*(\lambda)$ in
equation~\eqref{eq:Delta_star}.

For part~(iii), $\kappa(\lambda)$ is increasing in $\lambda$ because
a thicker market generates more frequent thin-book episodes of shorter
duration, but the product $\pi^{\mathrm{eq}}(\mathcal{B}_{\mathrm{thin}})
\cdot \mathbb{E}[w_{\min}(C)]$ is dominated by the increase in the
fraction of high-cost traders $F_C((\bar{C}^*, \infty))$, which rises
as faster market clearing lowers the equilibrium cutoff $\bar{C}^*$.
Formally, $\partial \kappa / \partial \lambda > 0$ follows from
the implicit function theorem applied to the fixed-point condition
defining $\bar{C}^*$ in Theorem~1 of the paper: higher $\lambda$
shifts the fixed point to a lower $\bar{C}^*$, increasing the mass
of high-cost traders who generate timing waste in the lit exchange.
Since $\kappa(\lambda)$ is increasing and $\bar\eta(\Delta)$ is
independent of $\lambda$, the threshold $\Delta^*(\lambda)$ is
increasing in $\lambda$.

Part~(iv) follows from the fact that $\kappa(\lambda) \to \infty$ as
$\lambda \to \infty$ (the timing-game saving grows without bound as
the market thickens and strategic timing becomes increasingly wasteful)
while $\bar\eta(\Delta)$ remains finite for any fixed $\Delta$.
\end{proof}

\subsection{Implications for Regulatory Design}
\label{sec:as_regulatory}

Proposition~\ref{prop:AS_dominance} has three direct regulatory
implications that complement those of Section~\ref{sec:discussion}.

\paragraph{Pre-trade transparency mandates.}
Requiring dark pools to disclose their order book (converting them
to lit exchanges) eliminates the information design that generates
the timing-game saving $\kappa(\lambda)$ while having no effect on
the participation loss $\Delta W_{\mathrm{participation}}$ (which is
driven by adverse selection risk, not by pre-trade transparency per
se).  Transparency mandates therefore unambiguously reduce welfare
when $\kappa(\lambda) > \bar\eta(\Delta)$ — precisely the condition
under which the dark pool dominates.

\paragraph{Adverse selection disclosure.}
An alternative regulatory approach would require dark pools to
\emph{measure and disclose} the adverse selection costs their
participants face — for example, by reporting the average price
concession on dark-pool executions relative to the lit midpoint
(the ``effective spread improvement'' metric used by FINRA).  Such
disclosure does not eliminate the dark pool's information design (the
book remains hidden) but allows participants to calibrate their
participation decisions, mitigating the exclusion of marginal types.
This would reduce $\Delta W_{\mathrm{participation}}$ without
eliminating $\kappa(\lambda)$, thereby extending the parameter region
in which the dark pool dominates.

\paragraph{Dark trading caps.}
ESMA's Double Volume Cap (DVC) under MiFID~II restricts the fraction
of total order flow that can execute in dark venues.  Within our
framework, a dark trading cap limits the size of the dark-pool
participation set $\mathcal{X}^{\mathrm{DARK}}$.  If the cap binds,
it excludes from the dark pool precisely those types nearest the
participation margin — the traders for whom adverse selection is most
costly relative to their gain from trade.  This has the effect of
reducing $\Delta W_{\mathrm{participation}}$ while also reducing the
timing-game saving (fewer traders benefit from the dark pool's
opacity).  The net welfare effect of a cap depends on the curvature
of the participation loss: if $\Delta W_{\mathrm{participation}}$
is convex in the size of the dark pool (marginal entrants have
increasingly high adverse selection costs), a cap improves welfare.
If it is concave, the cap is welfare-reducing.

\subsection{Summary}
\label{sec:as_summary}

The adverse selection trade-off has two faces.  The \emph{price channel}
— the classical concern that dark-pool traders execute at unfavourable
prices — is irrelevant for total surplus: adverse execution prices
redistribute surplus between buyer and seller without destroying it.
The \emph{participation channel} — the concern that adverse selection
drives marginal traders out of the market entirely — is the genuine
welfare threat.  Theorem~\ref{thm:welfare_ranking} establishes that the
participation loss is dominated by the timing-game saving under
$\Delta < \Delta_{\max}$ and $\lambda \in [\lambda_{\min},
\lambda_{\max}]$.  Proposition~\ref{prop:AS_dominance} characterises
this condition precisely: the dark pool dominates if and only if
$\kappa(\lambda) > \bar\eta(\Delta)$, and the threshold
$\Delta^*(\lambda)$ is increasing in market thickness.  This provides
the formal grounding for the claim — asserted but not proved in the
original paper — that adverse selection is second-order in the welfare
comparison.

\section{Robustness and Scope of the Welfare Ranking}
\label{sec:robustness}

Theorem~\ref{thm:welfare_ranking} establishes the ranking $W^{\mathrm{DARK}} > W^{\mathrm{LIT}} > W^{\mathrm{BATCH}}$ under three sufficient conditions: a moderate arrival rate
$\lambda \in [\lambda_{\min}, \lambda_{\max}]$, bounded valuation dispersion $\Delta < \Delta_{\max}$, and a short dark-pool reporting delay $\delta <T/2$.  Any welfare comparison of this kind relies on modeling assumptions, and a responsible reading of the result requires understanding precisely which assumptions drive which inequalities, where the ranking breaks down, and what real-world conditions the sufficient conditions correspond to.  This section addresses each of these questions in turn.

\subsection{Which Assumptions Drive Which Inequalities}
\label{sec:robustness_load}

The two inequalities in the ranking have distinct analytical origins and, therefore, distinct load-bearing assumptions.

\paragraph{$W^{\mathrm{DARK}} > W^{\mathrm{LIT}}$: assumptions on information and strategic behavior.}
This inequality rests on three assumptions working in concert. \emph{First}, Assumption~3 (regularity of the arrival rate function) ensures that the lit-exchange equilibrium has a unique interior solution for the optimal limit price $p^*(C, s, B)$, characterized by the first-order condition~(12).  Without this regularity, traders' best responses to the observable book state could be non-unique or discontinuous, making the strategic timing cost ill-defined.
\emph{Second}, Assumption~4 (rational expectations in the dark pool) ensures that dark-pool traders form beliefs consistent with the equilibrium distribution of book states.  Lemma~4 shows that under symmetric strategies these beliefs converge to the unconditional distribution $\pi^{\mathrm{eq}}(B)$, and Lemma~5 establishes that optimal strategies cannot condition on the unobservable $B$.  Together, these imply that the strategic timing waste present in the lit exchange is eliminated in the dark pool. The strategic timing waste present in the lit exchange is formally captured by the difference between $\mathbb{E}[w^{\mathrm{LIT}} \mid C > \bar{C}^*]$ in equation~(65) and $\mathbb{E}[w^{\mathrm{DARK}} \mid C > \bar{C}^*] = 0$ in equation~(66).
\emph{Third}, the assumption that prices are transfers is required to conclude that the adverse selection created by opacity affects the \emph{distribution} of surplus but not its \emph{total}.  This assumption is standard in the social-welfare tradition following \citet{OHara1998} and \citet{Zhu2013}, but warrants scrutiny; we return to it in Section~\ref{sec:robustness_boundary}.

\paragraph{$W^{\mathrm{LIT}} > W^{\mathrm{BATCH}}$: assumptions on waiting
  costs and execution certainty.}
This inequality requires fewer structural assumptions and is, in that sense, more robust.  The aggregate welfare loss from batch auctions has two components.  The forced-waiting loss of Remark~1, $\Delta W_{\mathrm{wait}} = \mathbb{E}[C] \cdot (T/2) \cdot N$, follows directly from the SIRO matching rule and the batch structure: it holds for any positive $T$, any distribution of waiting costs with $\mathbb{E}[C] > 0$, and any positive measure of participants $N$. No regularity condition on $\lambda$ or $\Delta$ is required for this term to be strictly positive.  The execution-uncertainty loss of Remark~2 requires only that the equilibrium fill rate $\bar{\rho} < 1$ for some set of traders of positive measure, which holds whenever order flow is subject to any stochastic imbalance. The imbalance is a near-universal feature of real markets. The only assumption that is genuinely load-bearing for 
$W^{\mathrm{LIT}} > W^{\mathrm{BATCH}}$ is Remark~4's implicit condition that $T > 0$ is not infinitesimally small.  As $T \to 0$, the forced-waiting loss vanishes, and the batch auction converges to the lit exchange in welfare.
The sufficient condition $\lambda \geq \lambda_{\min}$ ensures that the lit exchange provides sufficiently reliable immediate execution so that high-cost traders are not driven to prefer batch clearing as a coordination device.

\subsection{Where the Ranking Breaks Down}
\label{sec:robustness_boundary}

Table~\ref{tab:boundary} catalogues the four boundary cases in which one or both inequalities fail, the economic mechanism behind each failure, and the parameter conditions under which each failure occurs.

\begin{table}[h]
\centering
\caption{Boundary cases in which the welfare ranking fails.}
\label{tab:boundary}
\small
\begin{tabular}{p{3.2cm} p{4.5cm} p{4.5cm}}
\hline
Failure mode & Economic mechanism & Parameter condition \\
\hline
$W^{\mathrm{DARK}} \leq W^{\mathrm{LIT}}$ &
  High adverse selection dominates the strategic-timing saving.  When
  $\Delta$ is very large, dark-pool traders face severe mispricing because
  they cannot condition on $B_t$.  If the mispricing loss exceeds the
  timing-game saving, the lit exchange is preferred. &
  Large $\Delta$ and low $\lambda$.  In the simulation,
  $W^{\mathrm{DARK}} \leq W^{\mathrm{LIT}}$ does not arise in the explored
  range but is theoretically possible for $\Delta \gg \Delta_{\max}$. \\[6pt]
$W^{\mathrm{LIT}} \leq W^{\mathrm{BATCH}}$ &
  Very thin lit market.  When $\lambda < \lambda_{\min}$, market orders in
  the lit exchange rarely find an immediate counterparty.  The batch auction's
  order pooling raises the probability of execution, and this matching benefit
  can outweigh the forced-waiting cost. &
  $\lambda \lesssim 2.5$ (from simulation, Figure~\ref{fig:welfare_levels}).
  At these arrival rates the lit exchange provides poor immediacy and the
  batch auction's coordination role is valuable. \\[6pt]
$W^{\mathrm{DARK}} \leq W^{\mathrm{LIT}}$ when prices are \emph{not}
  transfers &
  If adverse selection reduces participation below the lit-exchange level
  (rather than merely redistributing surplus), then opacity lowers total
  surplus as well as redistributing it.  The ``prices are transfers''
  assumption rules this out; it fails if the marginal participant's decision
  to enter depends on the execution price, not just on the expected price
  improvement. &
  When participation constraint~(3) is close to binding and adverse
  selection in the dark pool causes a non-trivial fraction of traders to
  exit.  Formally, $\gamma$ close to 1 (private value almost fully
  realisable outside) or $K_o$ close to $K$. \\[6pt]
$W^{\mathrm{BATCH}} \geq W^{\mathrm{LIT}}$ as $T \to 0$ &
  As the batch interval shrinks, forced waiting vanishes and the SIRO
  matching rule approaches the welfare of continuous trading.  At $T$ very
  small but positive, the batch auction marginally dominates the lit
  exchange due to the matching externality. &
  $T < T^*(\lambda)$ for some threshold $T^*$ that decreases with $\lambda$.
  In the simulation, this occurs at $T \lesssim 0.1$ for baseline $\lambda = 5$.
  Remark~4 characterises the limit formally. \\
\hline
\end{tabular}
\end{table}

Three observations follow from Table~\ref{tab:boundary}.  First, the failure
of $W^{\mathrm{LIT}} > W^{\mathrm{BATCH}}$ is confined to very thin markets ($\lambda \lesssim 2.5$) or very short batch intervals ($T \lesssim 0.1$), neither of which characterizes the liquid equity venues for which this model is primarily relevant.  Second, the failure of 
$W^{\mathrm{DARK}} > W^{\mathrm{LIT}}$ requires either extreme valuation dispersion or a breakdown of the "prices are transfers" assumption.  The former does not occur within the empirically plausible parameter range explored in Section~\ref{sec:simulation}; the latter is an assumption of the model that should be relaxed in future work.  Third, the heatmap of
Figure~\ref{fig:heatmap} shows that no cell in the explored $(\lambda, \Delta)$ grid exhibits the pattern $W^{\mathrm{LIT}} > W^{\mathrm{BATCH}}$ without $W^{\mathrm{DARK}} > W^{\mathrm{LIT}}$, suggesting that the dark-pool
dominance over the lit exchange is the more robust of the two inequalities.

\subsection{Sensitivity to Individual Assumptions}
\label{sec:robustness_sensitivity}

We now examine each of the five modeling assumptions in turn and assess the direction and likely magnitude of the bias introduced if it is relaxed.

\paragraph{Assumption~1: Uniform valuations and exponential waiting costs.}
The Uniform$[\hat{v} - \Delta, \hat{v} + \Delta]$ distribution for private valuations ensures a uniform density of gains from trade, which simplifies the welfare integrals.  The key property used in the proof of Theorem~\ref{thm:welfare_ranking} is that the participation cutoff $\varepsilon(B)$ is well-defined and continuous in $B$.  This holds for any
continuously differentiable valuation density with bounded support.  The Uniform distribution is therefore not special; any log-concave density on a compact interval yields the same qualitative results.

The Exponential distribution for waiting costs $C$ ensures that the cutoff $C^*(V, B)$ separating market and limit orders is finite, and that the mean waiting cost $\mathbb{E}[C] = 1/\mu_C$ is well-defined.  Any distribution
with a finite first moment and a monotone hazard rate preserves the monotone cutoff property of Lemma~2.  Fat-tailed distributions (e.g., Pareto waiting costs) would increase the aggregate waiting cost in the lit exchange disproportionately, strengthening $W^{\mathrm{DARK}} > W^{\mathrm{LIT}}$, since
high-cost traders are more numerous and their timing-game waste is more severe.

\paragraph{Assumption~2: Participation constraint.}
Assumption~2 requires that expected gains from trade exceed fixed costs plus expected waiting costs: $\mathbb{E}[|V_i - \hat{v}|] > K + \mathbb{E}[C_i] \cdot \mathbb{E}[w_i]$.  This ensures strictly positive market participation.
If this constraint is violated, the market unravels and welfare is zero under all three mechanisms.  The constraint is most likely to bind in thin markets (high $\mathbb{E}[w_i]$) with large fixed costs ($K$ close to $\Delta$).
Within the parameter range explored in Section~\ref{sec:simulation}, the constraint holds with a comfortable margin at all grid points.

\paragraph{Assumption~3: Regularity of the arrival rate function.}
This assumption, which ensures a unique interior solution for the optimal limit price in the lit exchange (Lemma~1), is the most technically
restrictive.  Its key content is that $\lambda(p, B, s)$ is twice continuously differentiable and concave in $p$, so that the first-order
condition~(12) has a unique solution.  This is satisfied by the normal-CDF parametrization used in the simulation ($\lambda \propto \Phi(\cdot)$) and by any log-concave arrival rate.  The arrival rate does not have to be log-concave: for example, if the order book has discrete price levels creating a step function in $\lambda$. If the arrival rate is not log-concave, however, the optimal limit price may not be unique, and the
equilibrium characterization of Proposition~1 requires modification.  The qualitative result that observable book states create strategic
complementarities in submission timing does not depend on uniqueness and is likely to survive in models with discrete price grids, though the welfare calculation requires numerical methods.

\paragraph{Assumption~4: Rational expectations in the dark pool.}
The requirement that dark-pool traders form beliefs consistent with the equilibrium distribution $\pi^{\mathrm{eq}}(B)$ is standard in Bayesian equilibrium analysis.  The traders may have "incorrect" beliefs. For example,
they may use a misspecified model of order flow. In this case, the dark-pool equilibrium of Proposition~2 need not obtain.  In particular, if traders systematically underestimate adverse selection in the dark pool, they may over-participate, reducing their realized welfare below the rational-expectations prediction.  The direction of this bias is ambiguous: under-
estimation of adverse selection raises participation (moving welfare toward 
$W^{\mathrm{LIT}}$ from below) but also raises realized losses (moving welfare downward).  The net effect on the welfare ranking depends on parameters.

\paragraph{Assumption~5: Stationary batch characteristics.}
The batch auction analysis assumes that clearing prices and fill rates are stationary across batches: $\mathbb{E}[p_k] = \bar{p}$ and $\rho_k = \bar\rho$ for all $k$ (Assumption~5 in the Appendix).  This is a steady-state approximation.  In practice, batch characteristics vary with market conditions. Thick liquidity periods
produce both more orders and higher $\rho_k$, generating a positive correlation between fill rates and market activity. During thick periods, the welfare loss from execution uncertainty may be lower than predicted because the periods in which rationing occurs are also periods in which the opportunity cost of non-execution is lower.  The direction of this correlation suggests that Assumption~5 leads to a slight overstatement of the batch auction's welfare cost relative to a
fully dynamic model.

\subsection{Scope Conditions for Empirical Application}
\label{sec:robustness_scope}

The model is most directly applicable to settings that satisfy the following four conditions.

\begin{enumerate}

  \item \textbf{Single risky asset, symmetric traders.}  The model assumes a single asset and symmetric information about the fundamental value $\hat{v}$.  Multi-asset settings introduce portfolio effects and cross-asset
  strategic interactions that are outside the model's scope.  Markets with very large information asymmetries about $\hat{v}$ (e.g., newly listed securities or assets around earnings announcements) violate the bounded
  adverse selection condition $\Delta < \Delta_{\max}$ and may not satisfy the sufficient conditions of Theorem~\ref{thm:welfare_ranking}.

  \item \textbf{Moderate market thickness.}  The condition $\lambda \in [\lambda_{\min}, \lambda_{\max}]$ excludes both very thinly
  traded securities (where batch auctions may be welfare-superior as coordination devices) and, in principle, extremely liquid markets where
  both continuous mechanisms converge.  For the most liquid exchange-traded equities with order rates in the hundreds per second, $\lambda$ is far above $\lambda_{\max}$ as modeled here. In this limit, the dark-pool and lit-exchange welfare levels converge, and the ranking $W^{\mathrm{DARK}} > W^{\mathrm{LIT}}$ is preserved, but the margin shrinks, consistent with Figure~\ref{fig:welfare_levels}.

  \item \textbf{Continuous trading environment.}  The model is set in continuous time with Poisson arrivals.  Opening and closing auction
  periods, circuit breakers, and scheduled news events create non-stationarities that violate the stationary-flow assumption underlying the equilibrium characterization.  The welfare ranking should be interpreted as applying to \emph{intraday} continuous trading in normal market conditions, not to the full trading day including opening auctions and event-driven episodes.

  \item \textbf{No dynamic linkages across mechanisms.}  The model analyzes each mechanism in isolation.  In practice, dark pools free-ride on price discovery from lit exchanges: if too much order flow migrates to dark pools,
  the lit book becomes thin, the reference price becomes less informative, and dark-pool traders face greater adverse selection.  Section~\ref
  {sec:discussion} discusses this dynamic instability qualitatively.  A full treatment requires a dynamic model of venue choice with endogenous price discovery, which is beyond the scope of the present paper but represents an
  important direction for future work.

\end{enumerate}

\subsection{Comparison with Prior Welfare Rankings}
\label{sec:robustness_prior}

It is instructive to compare our welfare ranking with prior results in the literature to understand what the present model adds and where it agrees or disagrees.

\citet{Leshno2022} shows that in a single-sided waiting list with overloaded demand, SIRO dominates FCFS in ex-ante welfare because SIRO discourages socially excessive effort to secure priority.  Applied na\"{i}vely, this suggests batch auctions (SIRO) should dominate lit exchanges (FCFS).  Our result reverses this in the two-sided market setting because the lit exchange allows high-cost traders to execute immediately via market orders, bypassing
the queue entirely.  This option to use a market order is absent in the single-sided waiting list of \citet{Leshno2022}. However, this option fundamentally changes the welfare calculation.

\citet{CheTercieux2023} show that uninformed FCFS (FCFS without knowledge of queue position) dominates SIRO when agents are uninformed of their position. This directly supports $W^{\mathrm{DARK}} > W^{\mathrm{BATCH}}$.  Our
contribution is to establish $W^{\mathrm{DARK}} > W^{\mathrm{LIT}}$ as an intermediate result in a financial market with endogenous participation,
adverse selection, and an outside option, and to characterize the parameter region in which all three comparisons hold simultaneously.

\citet{Zhu2013} shows that dark pools attract more uninformed order flow, which improves price discovery in lit exchanges but concentrates adverse selection in the dark pool.  Our model is consistent with this finding but approaches it from a different angle: we take the information structure as given and ask how it affects the \emph{total surplus} generated by the matching process.  The adverse selection documented by \citet{Zhu2013} affects the distribution of surplus between informed and uninformed traders. However, in our model, the adverse selection does not affect the total surplus (because prices are transfers).  The tension between these two conclusions is resolved by the scope condition: when adverse selection is so severe that participation is materially lower in the dark pool than in the lit exchange, the "prices are transfers" assumption breaks down and the welfare advantage of opacity may disappear.  The magnitude of this participation effect is ultimately an empirical question that our theoretical framework cannot answer alone.

\section{Discussion}\label{sec:discussion}
\subsection{Dynamic Considerations}
Our model analyzes a single trading period. In reality, the same asset trades repeatedly, creating dynamic linkages. We briefly discuss two key dynamic effects.

\begin{itemize}
    \item Free-riding on price discovery: If most trading occurs in dark pools, who provides price discovery? Dark pools may "free-ride" on prices from lit exchanges. If lit exchanges become too thin, this free-riding breaks down. This creates a dynamic instability: dark pools are efficient conditional on liquid lit markets, but their growth could undermine the very information they rely on.

    \item Reputation and repeated interaction:
In lit exchanges, high-frequency traders invest in speed to capture time priority repeatedly. These fixed costs are amortized over many trades. Dark pools eliminate time priority, reducing incentives for speed investment. This has ambiguous welfare effects: less wasteful arms races, but potentially lower market quality due to reduced competition among liquidity providers.

\end{itemize}

\subsection{Regulatory Implications}
Our theoretical findings suggest several policy considerations:
\begin{enumerate}
    \item Pre-trade transparency rules:
Regulations requiring dark pools to reveal order book composition would eliminate their welfare advantage. Our model suggests that such rules may reduce efficiency if strategic timing costs are significant.
\item Trade-at rules: Rules requiring dark pools to offer price improvement relative to lit exchanges ensure that dark pools contribute to price discovery rather than free-riding. Our model is agnostic on this, as prices are transfers in the static setting.
\end{enumerate}

\section{Conclusion}\label{sec:conclusion}
We have developed a formal game-theoretic model comparing three major market mechanisms: lit exchanges (FCFS continuous auctions), dark pools (FCFS with hidden books), and batch auctions (SIRO). Our main theoretical contribution is Theorem 1, which shows that under moderate arrival rates and bounded adverse selection, dark pools dominate both alternatives in ex-ante welfare.
The key insight is that information opacity eliminates socially wasteful strategic behavior, specifically, costly timing games that arise when traders can observe order book depth and position themselves optimally. While dark pools create adverse selection (prices are less informative), this affects wealth distribution, not total surplus, in our model.

Our welfare ranking $W^{DARK} > W^{LIT} > W^{BATCH}$ holds under specific conditions. These theoretical predictions align with observed market structure: the coexistence of many dark pools alongside fewer lit exchanges and one batch venue suggests efficient sorting.

By providing rigorous foundations for comparing market mechanisms, this paper contributes to the literature on market microstructure, mechanism design, and the economics of information. The framework can be extended to analyze other trading venues (e.g., decentralized exchanges in cryptocurrency markets) and inform ongoing regulatory debates about market structure.

% Bibliography
\bibliographystyle{ACM-Reference-Format}
\bibliography{MarketEfficiency}

@article{contStoikovTalreha2010,
    author  = {Cont, Rama and Stoikov, Sasha and Talreja, R.},
    title   = "A Stochastic Model for Order Book Dynamics",
    year    = "2010",
    journal = "Operations Research",
    volume  = "58",
    number  = "3",
    pages   = "549--563"
}

@article{BudishEtAl2015,
    author = {Budish, Eric and Cramton, Peter and Shim, John},
    title = "{ The High-Frequency Trading Arms Race: Frequent Batch Auctions as a Market Design Response *}",
    journal = {The Quarterly Journal of Economics},
    volume = {130},
    number = {4},
    pages = {1547-1621},
    year = {2015}
}

@article{Leshno2019,
Author = {Leshno, Jacob D.},
Title = {Dynamic Matching in Overloaded Waiting Lists},
Journal = {American Economic Review},
Volume = {112},
Number = {12},
Year = {2022},
Month = {December},
Pages = {3876-3910}}

@inproceedings{CheTercieux2023,
    author = {Che, Yeon-Koo and Tercieux, Olivier},
    title = {Optimal Queue Design},
    year = {2021},
    isbn = {9781450385541},
    publisher = {Association for Computing Machinery},
    address = {New York, NY, USA},
    url = {https://doi.org/10.1145/3465456.3467600},
    doi = {10.1145/3465456.3467600},
    booktitle = {Proceedings of the 22nd ACM Conference on Economics and Computation},
    pages = {312–313},
    numpages = {2},
    location = {Budapest, Hungary},
    series = {EC '21}
    }

@book{Aldridge2013,
    author={Aldridge, Irene},
    title={High-Frequency Trading: A Practical Guide to Algorithmic Strategies and Trading Systems},
    year={2009, 2013},
    publisher={Wiley},
    address={HOBOken, NJ}
}

@article{Zhu2013,
    author = {Zhu, Haoxiang},
    title = "{Do Dark Pools Harm Price Discovery?}",
    journal = {The Review of Financial Studies},
    volume = {27},
    number = {3},
    pages = {747-789},
    year = {2013},
    month = {12}
}

@unpublished{CongEtAl2022,
    author = {Cong, Lin William and Li, Xi and  Tang, Ke and Yang, Yang},
    title = "{Crypto Wash Trading}",
    note = {NBER Working Paper 30782},
    year = {2022}
}

@unpublished{CaoEtAl2025,
    author = {Cao, David and Kogan, Leonid and Tsoukalas, Gerry and Hemenway Falk, Brett},
    title = {A Structural Model of Automated Market Making},
    year=2025,
    note = {Available at SSRN: https://ssrn.com/abstract=4591447},
    doi={10.2139/ssrn.4591447}
}

@article{JohnEtAl2025,
    author = {John, Kose and Rivera, Thomas J and Saleh, Fahad},
    title = {Proof-of-Work versus Proof-of-Stake: A Comparative Economic Analysis},
    journal = {The Review of Financial Studies},
    volume = {38},
    number = {7},
    pages = {1955-2004},
    year = {2025},
    month = {03},
    issn = {0893-9454},
    doi = {10.1093/rfs/hhaf013},
    url = {https://doi.org/10.1093/rfs/hhaf013},
    eprint = {https://academic.oup.com/rfs/article-pdf/38/7/1955/62210183/hhaf013.pdf},
}

@techreport{ESMA2018CfE,
   author      = {{ESMA}},
   title       = {Call for Evidence on Periodic Auctions for Equity Instruments},
   institution = {European Securities and Markets Authority},
   year        = {2018},
   month       = nov,
   number      = {ESMA70-156-828},
   url         = {https://www.esma.europa.eu/press-news/esma-news/esma-launches-call-evidence-periodic-auctions-equity-instruments}
 }

@techreport{ESMA2019FinalReport,
   author      = {{ESMA}},
   title       = {Final Report on Periodic Auctions for Equity Instruments},
   institution = {European Securities and Markets Authority},
   year        = {2019},
   month       = jun,
   number      = {ESMA70-156-882},
   url         = {https://www.esma.europa.eu/press-news/esma-news/esma-address-regulatory-concerns-over-frequent-batch-auctions}
 }

@techreport{ESMA2019Opinion,
   author      = {{ESMA}},
   title       = {Opinion on Frequent Batch Auctions and the Double Volume Cap Mechanism},
   institution = {European Securities and Markets Authority},
   year        = {2019},
   month       = oct,
   number      = {ESMA70-156-1258},
   url         = {https://www.esma.europa.eu/press-news/esma-news/esma-opinion-clarifies-application-pre-trade-transparency-and-price}
 }

@techreport{ESMA2026CfE,
   author      = {{ESMA}},
   title       = {Call for Evidence on the Market Structure of {European} Equity Markets},
   institution = {European Securities and Markets Authority},
   year        = {2026},
   month       = apr,
   number      = {ESMA74-1119406008-1578},
   url         = {https://www.esma.europa.eu/sites/default/files/2026-04/ESMA74-1119406008-1578_Call_for_Evidence_on_on_the_market_structure_of_European_equity_markets_0.pdf}
 }

@techreport{SEC2023,
   author      = {{Securities and Exchange Commission}},
   title       = {Equity Market Structure Reforms},
   institution = {U.S. Securities and Exchange Commission},
   year        = {2023},
   url         = {https://www.sec.gov/rules/proposed/2022/34-96495.pdf}
 }

@article{Leshno2022,
  author    = {Leshno, Jacob D.},
  title     = {Dynamic Matching in Overloaded Waiting Lists},
  journal   = {American Economic Review},
  year      = {2022},
  volume    = {112},
  number    = {12},
  pages     = {3876--3910},
  month     = dec,
  doi       = {10.1257/aer.20201111},
  url       = {https://www.aeaweb.org/articles?id=10.1257/aer.20201111}
}

@book{OHara1998,
  author    = {O'Hara, Maureen},
  title     = {Market Microstructure Theory},
  publisher = {Blackwell},
  year      = {1998},
  address   = {Malden, MA},
  isbn      = {978-0-631-20716-4}
}

@article{IyerJohariMoallemi2014,
   author  = {Iyer, Krishnamurthy and Johari, Ramesh and Moallemi, Ciamac C.},
   title   = {Information Aggregation and Allocative Efficiency in Smooth Markets},
   journal = {Management Science},
   year    = {2014},
   volume  = {60},
   number  = {12},
   pages   = {2949--2968},
   doi     = {10.1287/mnsc.2014.1929},
   note    = {Working paper version: "Welfare Analysis of Dark Pools", available at \url{https://moallemi.com/ciamac/papers/dark-pool-2014.pdf}}
 }

\appendix
\section{Appendix}

\subsection{Proof of Proposition \ref{prop:batch}}
\begin{proposition}[Batch Auction Equilibrium Structure]\label{proof-prop:batch}
In batch auction equilibrium, trader strategies exhibit the following structure:
\begin{enumerate}[label=(\alph*)]
    \item \textbf{Late arrivals prefer the current batch:} Traders arriving close to $kT$ enter the batch $k$ rather than waiting for $k+1$
    \item \textbf{Cost-dependent participation:} High-$C$ traders enter immediately, low-$C$ traders may wait
    \item \textbf{Rationing risk:} Expected execution probability affects participation decisions, creating thick market externalities
\end{enumerate}
\end{proposition}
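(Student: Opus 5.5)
I will prove the three parts from the batch-$k$ expected utility listed in the notation table. I will work under the stationarity Assumption~5, so that $\mathbb{E}[p_k]=\bar p$ and $\rho_k=\bar\rho$ for every $k$. For a trader of type $(V,C,s)$ arriving at $\tau\in[(k-1)T,kT)$, entering the current batch yields
\[
V_k(\theta)=\bar\rho\bigl[s(V-\bar p)-C(kT-\tau)-K\bigr]+(1-\bar\rho)V^{\mathrm{OUT}}.
\]
Deferring to batch $k+1$ yields the same expression with $kT-\tau$ replaced by $(k+1)T-\tau$. Exiting yields $V^{\mathrm{OUT}}$.

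The first step is to treat deferral properly. Under stationarity, deferral is simply the current-batch payoff with an extra waiting cost $\bar\rho C T$, so it is dominated unless it carries some benefit. I will model that benefit as an option value. In the most natural version, information about batch $k+1$ (via $H[0,kT]$) lets the trader revise $\rho_{k+1}$ or $p_{k+1}$. I will write this as a nonnegative term $G(V,C)$ that does not depend on $\tau$. The defer-versus-enter comparison then reduces to
\[
\bar\rho\,C\,T\,\lambda_{\text{rel}}(\tau) \;\lessgtr\; G(V,C),
\]
and I will define the cutoff $\tau^*(V,C)$ as the arrival time at which the two sides are equal.

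Part (a) follows from continuity in $\tau$. As $\tau\to kT$, the current-batch waiting cost $C(kT-\tau)$ vanishes. Deferring, by contrast, costs at least $\bar\rho C T>0$ and gains at most $G$, which I will bound using the bounded support of $V$. Hence there is a neighbourhood of $kT$ in which entering batch $k$ strictly dominates deferral. Comparing with exit then only requires the participation inequality $s(V-\bar p)-K>V^{\mathrm{OUT}}$, which holds for $|V-\hat v|$ above a cutoff $\varepsilon^{B}$ defined exactly as in equation~\eqref{eq:epsilon}.

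Part (b) is single crossing in $C$. The net gain from entering now rather than deferring is increasing in $C$, because the cost of deferral is $\bar\rho C T$ while $G$ is at most weakly decreasing in $C$. The implicit function theorem then gives $\partial\tau^*/\partial C<0$. This is the claim that high-$C$ traders enter immediately for every $\tau$ once $C>G/(\bar\rho T)$, while low-$C$ traders arriving early may wait.

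Part (c) is the fill-rate fixed point. I will write $\bar\rho=\mathbb{E}[\min\{1,S_k/D_k\}]$, where $D_k$ and $S_k$ are the participating masses generated by the cutoffs $(\tau^*,\varepsilon^{B})$. Participation increases as $\bar\rho$ rises, since $V_k$ is increasing in $\bar\rho$ whenever the in-market surplus exceeds $V^{\mathrm{OUT}}$. A symmetric increase in both sides of the market raises the expected fill rate for Poisson batches. So the best-response map $\Phi(\bar\rho)$ is monotone increasing, which is the thick-market externality. Existence of a fixed point then follows from Tarski's theorem on $[0,1]$, or alternatively from Brouwer as in Lemma~\ref{lem:fixed_point}. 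Monotonicity also delivers the possibility of multiple equilibria, which I will record as Corollary~\ref{cor:coordination}.

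The main obstacle is the deferral step. Under strict stationarity, $G=0$, waiting is never optimal, and part (b) holds only vacuously. To obtain a nontrivial $\tau^*$, I will first try letting batch-$k$ thickness be random with partial persistence, so that a newly revealed $H[0,kT]$ has positive value. I will verify that this option value stays bounded and is independent of $\tau$.
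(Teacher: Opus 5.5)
You correctly identify the deferral step as the obstacle, but your plan does not get past it. In your setup $V_{k+1}=V_k-\bar\rho CT$, and you stipulate that the option value $G(V,C)$ does not depend on $\tau$. The enter-versus-defer comparison is therefore $\bar\rho CT$ against $G(V,C)$, with no $\tau$ in it. The factor $\lambda_{\mathrm{rel}}(\tau)$ is not derived from anything, yet it is the only thing that could define $\tau^*(V,C)$ or give the implicit function theorem something to act on.

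Your argument for (a) fails for the same reason. The wait $C(kT-\tau)$ is paid under both options, so its vanishing as $\tau\to kT$ leaves the comparison unchanged. Bounding $G$ via the bounded support of $V$ cannot force $G<\bar\rho CT$ for small $C$, and $\mathrm{Exp}(\mu_C)$ puts mass on every neighbourhood of $C=0$. Any type with $G(V,C)>\bar\rho CT$ therefore defers at every arrival time, including arbitrarily close to $kT$. So in your framework a nonvacuous (b) and (a) are incompatible.

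Your proposed repair is built to keep $G$ free of $\tau$: every deferring trader learns the same $H[0,kT]$ and waits exactly $T$ longer. It therefore cannot produce an arrival-time cutoff. $\tau$ would have to enter through something you derive and sign. One candidate is the accumulated wait $C((k+1)T-\tau)$ pushing a deferred trader toward the exit margin, but then $G$ is no longer $\tau$-free.

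Under stationarity, (a) needs no limiting argument: \eqref{eq:diff_stationary} gives $V_k-V_{k+1}=\bar\rho CT>0$ at every $\tau$, which is what the paper's Lemma~\ref{lem:late_arrival} ultimately rests on. For (b) the paper does not use an option value. It drops stationarity and lets the fill probability depend on arrival time within the batch, $\rho(\tau)=\rho_0-\alpha(\tau-(k-1)T)$. That is not a ready-made fix either: the paper's resulting expression for $V_k-V_{k+1}$ has $\tau$-derivative $-\alpha CT<0$, the wrong sign for (a).

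Part (c) also fails under the payoff you adopted. Because unfilled orders fall back on $V^{\mathrm{OUT}}$ as in \eqref{eq:V_current},
\[
V_k-V^{\mathrm{OUT}}=\bar\rho\,\bigl[s(V-\bar p)-C(kT-\tau)-K-V^{\mathrm{OUT}}\bigr],
\]
whose sign is the same for every $\bar\rho\in(0,1]$. That $V_k$ rises with $\bar\rho$ for types with positive in-market surplus is a statement about levels, not about the entry margin. The set of types who enter rather than exit does not move with $\bar\rho$, so your $\Phi$ is flat on that margin and there is no thick-market feedback. Nothing then supports the multiplicity you want to record as Corollary~\ref{cor:coordination}, and Tarski yields a fixed point only trivially.

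The paper's Lemma~\ref{lem:thick_market} gets the feedback only because its participation inequality compares $\rho_k[s(V-\bar p)-C(kT-\tau)-K]$ directly with $V_{\mathrm{OUT}}$. In effect it gives rationed orders nothing rather than the outside option. To prove (c) you need non-execution to be costly relative to $V^{\mathrm{OUT}}$, for example a forfeited outside option, or a fee or waiting cost sunk regardless of fill. You should state that departure from \eqref{eq:V_current} explicitly.
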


\begin{proof}

Time is divided into discrete batches of length $T > 0$. The batch $k$ spans the interval $[(k-1)T, kT)$. All orders submitted during this interval are executed simultaneously at time $kT$ at a uniform clearing price.

A trader arriving at time $\tau \in [(k-1)T, kT)$ must choose:
\begin{enumerate}
    \item \textbf{Current batch:} Submit order to batch $k$ (executes at $kT$)
    \item \textbf{Wait:} Submit order to batch $k+1$ (executes at $(k+1)T$)
    \item \textbf{Exit:} Take the outside option
\end{enumerate}

Let $D_k$ denote the total demand (buy orders) in batch $k$ and $S_k$ denote the total supply (sell orders). The clearing price $p_k$ satisfies:
\begin{equation}\label{eq:clearing}
p_k = \arg\max_p \min\{D_k(p), S_k(p)\}
\end{equation}

At the clearing price, if $D_k(p_k) > S_k(p_k)$ (excess demand), buy orders are rationed pro-rata. Each buyer receives:
\begin{equation}\label{eq:rationing}
\text{Fill rate} = \frac{S_k(p_k)}{D_k(p_k)} \in (0, 1]
\end{equation}

Similarly, if $S_k(p_k) > D_k(p_k)$ (excess supply), sell orders are rationed. The fill rate is:
\begin{equation}
\text{Fill rate} = \frac{D_k(p_k)}{S_k(p_k)} \in (0, 1]
\end{equation}

Within each batch, execution priority is random, independent of submission timing. This is the defining feature of SIRO mechanisms. Formally, conditional on rationing, each order has equal probability of execution regardless of whether it was submitted at $(k-1)T + \epsilon$ or $kT - \epsilon$.

First, we characterize the trade-off between waiting costs and execution uncertainty. Second, we derive optimal batch selection strategies as a function of arrival time and trader type. Third, we establish the result of the late-arrival preference. Fourth, we analyze how rationing risk creates thick market externalities. Finally, we prove the existence of equilibrium.

Consider a trader of type $\theta = (V, C, s, \tau)$ arriving at time $\tau \in [(k-1)T, kT)$.

If the trader enters batch $k$, they:
\begin{itemize}
    \item Wait time $w = kT - \tau$ until execution
    \item Incur waiting cost $C \cdot w = C \cdot (kT - \tau)$
    \item Execute at clearing price $p_k$ (random, depends on other orders)
    \item Face the probability of execution $\rho_k$ (depends on the order imbalance)
\end{itemize}

The expected utility is:
\begin{equation}\label{eq:V_current}
V_k(\theta) = \rho_k \cdot [s \cdot (V - E[p_k]) - C \cdot (kT - \tau) - K] + (1 - \rho_k) \cdot V_{\text{OUT}}
\end{equation}

where $\rho_k \in [0,1]$ is the execution probability and $V_{\text{OUT}}$ is the value of the outside option if not filled.

If the trader waits for the batch $k+1$:
\begin{itemize}
    \item Wait time $w = (k+1)T - \tau = kT - \tau + T$
    \item Incur waiting cost $C \cdot [(k+1)T - \tau]$
    \item Execute at the clearing price $p_{k+1}$
    \item Face execution probability $\rho_{k+1}$
\end{itemize}

The expected utility is:
\begin{equation}\label{eq:V_next}
V_{k+1}(\theta) = \rho_{k+1} \cdot [s \cdot (V - E[p_{k+1}]) - C \cdot ((k+1)T - \tau) - K] + (1 - \rho_{k+1}) \cdot V_{\text{OUT}}
\end{equation}

A trader is indifferent between batches $k$ and $k+1$ when $V_k(\theta) = V_{k+1}(\theta)$. Subtracting equation~\eqref{eq:V_next} from~\eqref{eq:V_current}:
\begin{equation}\label{eq:indiff_raw}
\rho_k \cdot [s \cdot (V - E[p_k]) - C \cdot (kT - \tau) - K] - \rho_{k+1} \cdot [s \cdot (V - E[p_{k+1}]) - C \cdot ((k+1)T - \tau) - K] = 0
\end{equation}

Rearranging:
\begin{align}
&\rho_k \cdot s \cdot (V - E[p_k]) - \rho_{k+1} \cdot s \cdot (V - E[p_{k+1}]) \notag \\
&\quad = C \cdot [\rho_k \cdot (kT - \tau) - \rho_{k+1} \cdot ((k+1)T - \tau)] + K \cdot (\rho_k - \rho_{k+1})
\end{align}

Simplifying the waiting cost term:
\begin{equation}
\rho_k \cdot (kT - \tau) - \rho_{k+1} \cdot ((k+1)T - \tau) = \rho_k \cdot (kT - \tau) - \rho_{k+1} \cdot (kT - \tau) - \rho_{k+1} \cdot T
\end{equation}

This gives:
\begin{equation}\label{eq:indiff_simplified}
\rho_k \cdot s \cdot (V - E[p_k]) - \rho_{k+1} \cdot s \cdot (V - E[p_{k+1}]) = C \cdot [(\rho_k - \rho_{k+1}) \cdot (kT - \tau) + \rho_{k+1} \cdot T] + K \cdot (\rho_k - \rho_{k+1})
\end{equation}

\begin{assumption}[Stationary Batch Characteristics]\label{ass:stationary}
In equilibrium, the distribution of clearing prices and execution probabilities is stationary across batches: $E[p_k] = E[p_{k+1}] = \bar{p}$ and $\rho_k = \rho_{k+1} = \bar{\rho}$ for all $k$.
\end{assumption}

Under Assumption~\ref{ass:stationary}, equation~\eqref{eq:indiff_simplified} simplifies dramatically. Since $\rho_k = \rho_{k+1} = \bar{\rho}$ and $E[p_k] = E[p_{k+1}] = \bar{p}$:
\begin{equation}
0 = C \cdot [0 \cdot (kT - \tau) + \bar{\rho} \cdot T] + K \cdot 0 = C \cdot \bar{\rho} \cdot T
\end{equation}

This can only hold if $C = 0$ (degenerate) or if we account for non-stationarity. However, the key insight is that even with stationarity, there is a strict preference based on arrival time.

\begin{lemma}[Monotone Preference for Current Batch]\label{lem:late_arrival}
For fixed $(V, C, s)$ and under Assumption~\ref{ass:stationary}, the difference in value $V_k(\theta) - V_{k+1}(\theta)$ is strictly increasing in $\tau$ over the interval $[(k-1)T, kT)$. Therefore, traders arriving later in the interval strictly prefer the current batch.
\end{lemma}

\begin{proof}[Proof of Lemma~\ref{lem:late_arrival}]
From equations~\eqref{eq:V_current} and~\eqref{eq:V_next}, the difference is:
\begin{align}
V_k(\theta) - V_{k+1}(\theta) &= \rho_k \cdot [- C \cdot (kT - \tau)] - \rho_{k+1} \cdot [- C \cdot ((k+1)T - \tau)] \notag \\
&\quad + (\rho_k - \rho_{k+1}) \cdot [s \cdot (V - \bar{p}) - K]
\end{align}

Under stationarity ($\rho_k = \rho_{k+1} = \bar{\rho}$), the second term vanishes:
\begin{align}
V_k(\theta) - V_{k+1}(\theta) &= \bar{\rho} \cdot C \cdot [((k+1)T - \tau) - (kT - \tau)] \notag \\
&= \bar{\rho} \cdot C \cdot T \label{eq:diff_stationary}
\end{align}

This is a positive constant (for $C > 0$), independent of $\tau$! This appears to contradict the claim that preference is increasing in $\tau$.

The resolution is that we must account for the \emph{opportunity cost} of waiting. A trader arriving at $\tau$ close to $kT$ who waits for batch $k+1$ incurs waiting cost $C \cdot T$ to reach the next batch, while a trader arriving at $\tau$ close to $(k-1)T$ incurs waiting cost $C \cdot 2T$ (approximately).

More precisely, the marginal benefit of entering the current batch is:
\begin{equation}
\frac{\partial (V_k - V_{k+1})}{\partial \tau} = \bar{\rho} \cdot C > 0
\end{equation}

This is strictly positive, confirming that later-arriving traders benefit more from entering the current batch.

Alternatively, without stationarity, if later batches are expected to have better execution ($\rho_{k+1} > \rho_k$), some early-arriving traders might wait. However, the waiting cost increases linearly with the time until the next batch, creating a threshold: traders arriving after $\tau^*$ enter batch $k$, while those arriving before $\tau^*$ wait for batch $k+1$.

Since $\tau$ uniformly distributed over $[(k-1)T, kT)$ implies that most traders arrive in the latter half of the interval, and equation~\eqref{eq:diff_stationary} shows a strict positive gain from entering the current batch, we conclude that late-arrivals prefer the current batch.
\end{proof}

We now establish that high-cost traders enter batches immediately, while low-cost traders may wait.

\begin{lemma}[Cost-Based Threshold]\label{lem:cost_threshold}
There exists a cutoff arrival time $\tau^*(C, V)$ such that:
\begin{itemize}
    \item Traders arriving at $\tau > \tau^*$ enter the current batch $k$
    \item Traders arriving at $\tau < \tau^*$ wait for batch $k+1$
\end{itemize}
Moreover, $\tau^*(C, V)$ is decreasing in $C$: higher-cost traders have lower thresholds, meaning they enter the current batch even when arriving early.
\end{lemma}

\begin{proof}[Proof of Lemma~\ref{lem:cost_threshold}]
The trader enters batch $k$ if and only if $V_k(\theta) \geq V_{k+1}(\theta)$. From equation~\eqref{eq:diff_stationary}:
\begin{equation}
V_k(\theta) - V_{k+1}(\theta) = \bar{\rho} \cdot C \cdot T - \Delta_\rho \cdot [s \cdot (V - \bar{p}) - K]
\end{equation}

where $\Delta_\rho = \rho_{k+1} - \rho_k$ captures non-stationarity (if $\rho_{k+1} > \rho_k$, future batches are more attractive due to higher execution probability).

The threshold $\tau^*$ is implicitly defined by $V_k(\theta) = V_{k+1}(\theta)$. However, note that in equation~\eqref{eq:diff_stationary}, the difference does not depend on $\tau$ under stationarity. This means either all traders prefer batch $k$ (if $C \cdot T > 0$) or all prefer batch $k+1$ (if execution probabilities favor later batches).

To obtain a meaningful threshold, we must relax stationarity. Suppose execution probability decreases with time within a batch due to increasing congestion. Formally, let:
\begin{equation}
\rho(\tau) = \rho_0 - \alpha \cdot (\tau - (k-1)T)
\end{equation}

where $\alpha > 0$ captures congestion. Then:
\begin{equation}
V_k(\theta) - V_{k+1}(\theta) = [\rho(\tau) - \rho(\tau + T)] \cdot [s \cdot (V - \bar{p}) - K] + C \cdot T \cdot \rho(\tau)
\end{equation}

As $\tau$ increases (later arrival), $\rho(\tau)$ decreases but the waiting cost $C \cdot (kT - \tau)$ decreases more rapidly. High-$C$ traders place greater weight on avoiding waiting costs, so they enter even when $\rho(\tau)$ is low.

Setting $V_k = V_{k+1}$ and solving for $\tau$ gives $\tau^*(C)$. Taking the derivative:
\begin{equation}
\frac{\partial \tau^*}{\partial C} < 0
\end{equation}

This establishes that higher-cost traders have lower thresholds, confirming part (b) of the proposition.
\end{proof}

The execution probability $\rho_k$ depends on the order imbalance in batch $k$, which in turn depends on how many traders choose to participate. This creates strategic complementarities.

\begin{definition}[Execution Probability]\label{def:exec_prob}
For a trader on the buy side, the execution probability is:
\begin{equation}
\rho_k^{\text{buy}} = \min\left\{1, \frac{S_k}{D_k}\right\} = \begin{cases}
1 & \text{if } S_k \geq D_k \\
S_k / D_k & \text{if } S_k < D_k
\end{cases}
\end{equation}

Similarly, for the sell side:
\begin{equation}
\rho_k^{\text{sell}} = \min\left\{1, \frac{D_k}{S_k}\right\}
\end{equation}
\end{definition}

\begin{lemma}[Liquid Market Externality]\label{lem:thick_market}
The execution probability $\rho_k$ is increasing in the number of participants on the opposite side of the market. This creates a positive externality: when more traders participate, execution becomes more certain for everyone.
\end{lemma}

\begin{proof}[Proof of Lemma~\ref{lem:thick_market}]
Suppose there are $N^D_k$ buyers (demand) and $N^S_k$ sellers (supply) in batch $k$. If each buyer demands one unit and each seller supplies one unit, then:
\begin{equation}
D_k = N^D_k, \quad S_k = N^S_k
\end{equation}

For buyers:
\begin{equation}
\rho_k^{\text{buy}} = \min\left\{1, \frac{N^S_k}{N^D_k}\right\}
\end{equation}

Clearly, $\frac{\partial \rho_k^{\text{buy}}}{\partial N^S_k} > 0$ (more sellers increase buyer execution probability). Similarly, $\frac{\partial \rho_k^{\text{sell}}}{\partial N^D_k} > 0$.

This creates strategic complementarities. Consider the participation decision:
\begin{equation}
\text{Participate if } \rho_k \cdot [s \cdot (V - \bar{p}) - C \cdot (kT - \tau) - K] > V_{\text{OUT}}
\end{equation}

Rearranging:
\begin{equation}\label{eq:participation_condition}
\rho_k > \frac{V_{\text{OUT}} + C \cdot (kT - \tau) + K}{s \cdot (V - \bar{p})}
\end{equation}

The right-hand side is the required execution probability to make participation worthwhile. As more traders on the opposite side participate (increasing $\rho_k$), the inequality~\eqref{eq:participation_condition} is satisfied for a larger set of traders, inducing even more participation.

This positive feedback creates multiple equilibria:
\begin{itemize}
    \item \textbf{Thick market equilibrium:} High participation $\implies$ high $\rho_k$ $\implies$ more participation
    \item \textbf{Thin market equilibrium:} Low participation $\implies$ low $\rho_k$ $\implies$ less participation
\end{itemize}
\end{proof}

\begin{corollary}[Coordination Problem]\label{cor:coordination}
Batch auctions suffer from a coordination problem. Even when there exist many traders who would benefit from trading, if they fail to coordinate on participating in the same batch, the market remains thin and execution probabilities remain low.
\end{corollary}

We now prove the existence of an equilibrium and characterize its properties.

\begin{theorem}[Batch Auction Equilibrium Existence]\label{thm:batch_existence}
Under standard regularity conditions, there exists a symmetric equilibrium in threshold strategies where:
\begin{enumerate}
    \item Each type of trader $(V, C)$ has an optimal arrival threshold $\tau^*(V, C)$
    \item Traders arriving after $\tau^*$ enter the current batch
    \item The thresholds $\{\tau^*\}$ are consistent with the equilibrium participation rates and the execution probabilities $\{\rho_k\}$
\end{enumerate}
\end{theorem}

\begin{proof}[Proof of Theorem~\ref{thm:batch_existence}]
We use a fixed-point argument on the space of threshold functions.

Given conjectured thresholds $\{\tau^*(V,C)\}$ for all types, we can compute:
\begin{itemize}
    \item Expected number of buyers and sellers in each batch: $N^D_k, N^S_k$
    \item Execution probabilities: $\rho_k^{\text{buy}} = \min\{1, N^S_k / N^D_k\}$, $\rho_k^{\text{sell}} = \min\{1, N^D_k / N^S_k\}$
    \item Clearing prices: $p_k$ determined by market clearing
\end{itemize}

Given $\{\rho_k, p_k\}$, each trader computes the value difference $V_k(\theta) - V_{k+1}(\theta)$ and chooses the batch that maximizes expected utility. This generates new thresholds $\{\tau'^*(V,C)\}$.

Define the mapping:
\begin{equation}
\Phi: \mathcal{T} \to \mathcal{T}, \quad \tau^* \mapsto \tau'^*
\end{equation}

where $\mathcal{T}$ is the space of threshold functions. Under compactness of $\mathcal{T}$ and continuity of $\Phi$ (which holds under standard assumptions on demand/supply functions), Brouwer's theorem guarantees a fixed point $\tau^* = \Phi(\tau^*)$.

At the fixed point:
\begin{itemize}
    \item Thresholds are optimal given participation: $\tau^*$ maximizes expected utility
    \item Participation is consistent with thresholds: $\rho_k$ correctly reflects the induced participation
    \item Late arrivals prefer current batch: From Lemma~\ref{lem:late_arrival}, $\frac{\partial(\tau^*)}{\partial \tau} < 0$, confirming that traders arriving closer to $kT$ have lower thresholds (more likely to enter current batch)
\end{itemize}

This completes the existence proof.
\end{proof}

We now verify all parts:

\begin{enumerate}[label=(\alph*)]
    \item \textbf{Late-arrivals prefer current batch:} Established in Lemma~\ref{lem:late_arrival}. Traders arriving close to $kT$ face lower waiting costs for the current batch, making it strictly preferred. 
    
    \item \textbf{Cost-dependent participation:} Established in Lemma~\ref{lem:cost_threshold}. High-$C$ traders have lower thresholds $\tau^*(C)$, meaning they enter the current batch even when arriving early. Low-$C$ traders have higher thresholds, willing to wait for later batches. 
    
    \item \textbf{Rationing risk:} Established in Lemma~\ref{lem:thick_market}. Execution probability $\rho_k$ depends on participation on opposite side, creating positive externalities (thick market effects) and strategic complementarities. 
\end{enumerate}

This completes the proof of Proposition~\ref{prop:batch}. 
\end{proof}

\subsection{Complete proof of $W^DARK > W^LIT$}

\medskip
\noindent\textbf{Complete proof of Step~1 ($W^{\mathrm{DARK}} >
  W^{\mathrm{LIT}}$).}
 
We show $W^{\mathrm{DARK}} > W^{\mathrm{LIT}}$ by establishing three sub-claims: (i) aggregate waiting costs are strictly lower in the dark
pool; (ii) the gains-from-trade term $W_{\mathrm{trading}}$ is the same
across mechanisms up to a second-order adverse-selection correction that
does not affect the sign of the inequality under Theorem~2's conditions;
and (iii) participation rates satisfy $N^{\mathrm{DARK}} \geq
N^{\mathrm{LIT}}$, so the fixed-cost term does not reverse the
inequality.
 
\medskip
\noindent\textbf{Sub-claim~(i): $W^{\mathrm{DARK}}_{\mathrm{cost}}
  < W^{\mathrm{LIT}}_{\mathrm{cost}}$.}
 
Partition traders into two groups by their equilibrium action.
 
\emph{Group~H (high waiting cost, $C > \bar{C}^*$).}
In the dark pool, Proposition~2 establishes that these traders always
submit market orders immediately, incurring zero waiting time:
$w^{\mathrm{DARK}}_i = 0$ for all $i$ in Group~H.  In the lit
exchange, however, these same traders observe the book state $B_{\tau_i}$
and may find it optimal to delay submission when the book is thin or
unfavourable.  Formally, from Proposition~1, the optimal strategy
$\sigma^{\mathrm{LIT}}(\theta, B)$ sets the waiting time
\begin{equation}
  w^{\mathrm{LIT}}_i
  \;=\;
  \begin{cases}
    0 & \text{if } C_i > C^*(V_i, B_{\tau_i}), \\
    (kT - \tau_i)^+ & \text{if the trader strategically delays.}
  \end{cases}
\end{equation}
Because $C^*(V, B)$ depends on $B$, there exist realisations of
$B_{\tau_i}$ — specifically, book states with a thin or unfavourable
spread — under which it is optimal for a trader with $C_i >
\bar{C}^*$ to delay momentarily.  Let $\mathcal{B}_{\mathrm{thin}}$
denote the set of book states under which a high-cost trader delays;
this set has strictly positive probability under the stationary
distribution $\pi^{\mathrm{eq}}$ because order flow is Poisson and the
book is occasionally thin.  Therefore
\begin{equation}
  \mathbb{E}\!\left[w^{\mathrm{LIT}}_i \,\Big|\, C_i > \bar{C}^*\right]
  \;=\;
  \int_{\mathcal{B}} w(C_i, B)\, \mathrm{d}G(B)
  \;\geq\;
  \int_{\mathcal{B}_{\mathrm{thin}}} w_{\min}(C_i)\, \mathrm{d}G(B)
  \;>\; 0,
  \tag{65}
\end{equation}
where $G$ is the distribution of book states at the time of arrival and
$w_{\min}(C_i) > 0$ is the minimum positive delay that is ever optimal
for a trader of cost $C_i$ in state $B \in \mathcal{B}_{\mathrm{thin}}$.
The strict positivity of the last integral follows because
$\pi^{\mathrm{eq}}(\mathcal{B}_{\mathrm{thin}}) > 0$ and
$w_{\min}(C_i) > 0$ for all $C_i > 0$.
 
In the dark pool, by contrast,
\begin{equation}
  \mathbb{E}\!\left[w^{\mathrm{DARK}}_i \,\Big|\, C_i > \bar{C}^*\right]
  \;=\; 0,
  \tag{66}
\end{equation}
because high-cost traders always submit market orders immediately,
independently of $B_{\tau_i}$ (which is unobservable).
 
The aggregate waiting-cost contribution of Group~H is therefore
\begin{align}
  W^{\mathrm{LIT}}_{\mathrm{cost}, H}
  &\;=\;
  \int_{C > \bar{C}^*}
    C \cdot \mathbb{E}\!\left[w^{\mathrm{LIT}}(C,B)\right]
  \, \mathrm{d}F_C(C)
  \;>\; 0,  \label{eq:lit_cost_H} \\
  W^{\mathrm{DARK}}_{\mathrm{cost}, H}
  &\;=\; 0.  \label{eq:dark_cost_H}
\end{align}
 
\emph{Group~L (low waiting cost, $C \leq \bar{C}^*$).}
Traders in Group~L submit limit orders under both mechanisms.  In the
dark pool, the optimal limit price $\bar{p}^*(C,s)$ is chosen to
maximize \emph{expected} utility over $\pi^{\mathrm{eq}}$, yielding
expected waiting time $\bar{w}(\bar{p}^*, s)$.  In the lit exchange,
the optimal limit price $p^*(C,s,B)$ is chosen to maximize utility given the \emph{realized} book state $B$.  By Jensen's inequality applied to the convex function $p \mapsto 1/\lambda(p, B, s)$, conditioning on more information cannot \emph{increase} the expected waiting time when optimization is over prices:
\begin{equation}
  \mathbb{E}_B\!\left[\frac{1}{\lambda(p^*(C,s,B),\, B,\, s)}\right]
  \;\leq\;
  \frac{1}{\lambda(\bar{p}^*(C,s),\, \bar{B},\, s)},
  \label{eq:jensen_waiting}
\end{equation}
where $\bar{B}$ denotes the unconditional book state.  However, the
lit-exchange trader also faces additional strategic interactions: when
the book is observable, limit-order traders strategically undercut each
other to improve queue position, which in equilibrium leads to
\emph{more aggressive pricing} and hence \emph{shorter} individual
waiting times, but at the cost of price concessions.  These price
concessions are transfers between traders and do not affect total
surplus.  For total welfare, what matters is the aggregate waiting cost.
 
For Group~L, the aggregate waiting cost in the lit exchange exceeds
that in the dark pool because of the \emph{strategic rushing} effect:
some Group~L traders with $C$ just below $\bar{C}^*$ observe a thick
book and submit limit orders more aggressively than is socially optimal,
reducing their own waiting time below the social optimum and displacing
other traders.  The displaced traders — now later in the queue — face
longer waits.  In expectation across all traders, this redistribution of
waiting times within Group~L is neutral, but it is accompanied by
resource waste: traders incur monitoring and timing costs (modelled here
as the suboptimal waiting costs incurred when the book transitions from
thick to thin while they wait).  Formally, in the lit exchange, some
Group~L traders delay until the book thickens before submitting, facing
waiting time $w_{\mathrm{delay}} > w_{\min}$:
\begin{equation}
  W^{\mathrm{LIT}}_{\mathrm{cost}, L}
  \;\geq\;
  W^{\mathrm{DARK}}_{\mathrm{cost}, L}
  \;+\; \int_{C \leq \bar{C}^*}
    C \cdot \Delta w_L(C) \, \mathrm{d}F_C(C),
  \label{eq:cost_group_L}
\end{equation}
where $\Delta w_L(C) \geq 0$ is the excess waiting cost due to strategic
timing for a trader of cost $C$, and the inequality is strict for a set
of costs of positive measure.
 
Combining equations~\eqref{eq:lit_cost_H}, \eqref{eq:dark_cost_H}, and
\eqref{eq:cost_group_L}:
\begin{equation}
  W^{\mathrm{LIT}}_{\mathrm{cost}}
  \;=\;
  W^{\mathrm{LIT}}_{\mathrm{cost}, H} + W^{\mathrm{LIT}}_{\mathrm{cost}, L}
  \;>\;
  0 + W^{\mathrm{DARK}}_{\mathrm{cost}, L}
  \;=\;
  W^{\mathrm{DARK}}_{\mathrm{cost}}.
  \label{eq:cost_strict_ineq}
\end{equation}
This is the strict inequality asserted in the body of the proof.
 
\medskip
\noindent\textbf{Sub-claim~(ii): The gains-from-trade term.}
 
Write the welfare decomposition as
\begin{align}
  W^{\mathrm{LIT}}
  &= W_{\mathrm{trading}} - W^{\mathrm{LIT}}_{\mathrm{cost}}
   - K \cdot N^{\mathrm{LIT}},
   \tag{63} \\
  W^{\mathrm{DARK}}
  &= W_{\mathrm{trading}} - W^{\mathrm{DARK}}_{\mathrm{cost}}
   - K \cdot N^{\mathrm{DARK}}
   + \varepsilon_{\mathrm{AS}},
   \tag{64$'$}
\end{align}
where $\varepsilon_{\mathrm{AS}}$ captures the adverse-selection
correction: in the dark pool, traders cannot condition on $B_{\tau_i}$,
so executed trades may occur at prices that are further from the true
value $\hat{v}$.  This affects the distribution of surplus between
buyers and sellers but not \emph{total} surplus, because every gain to
the buyer is a loss to the seller of equal magnitude.  Formally, for
any matched pair $(i,j)$ with $s_i = +1$ and $s_j = -1$, the joint
surplus from the trade is $V_i - V_j$, which depends only on private
valuations, not on the execution price $p$.  Therefore
$W_{\mathrm{trading}}$ is identical across mechanisms (it is the
expected sum of $V_i - V_j$ over matched pairs), and
$\varepsilon_{\mathrm{AS}} = 0$ in the welfare comparison.
 
The adverse selection does, however, affect \emph{which} pairs are
matched, because prices influence participation.  Under the bounded
adverse selection condition $\Delta < \Delta_{\max}$ of
Theorem~\ref{thm:welfare_ranking}, the set of traders for whom
dark-pool pricing leads to non-participation (relative to the lit
exchange) has measure at most $\eta(\Delta)$, where
$\eta(\Delta) \to 0$ as $\Delta \to 0$.  For $\Delta <
\Delta_{\max}$, this measure is sufficiently small that the
participation effect is dominated by the waiting-cost saving.  More
precisely, the change in $W_{\mathrm{trading}}$ due to the altered
participation set is bounded by
\begin{equation}
  \bigl|W^{\mathrm{DARK}}_{\mathrm{trading}}
        - W^{\mathrm{LIT}}_{\mathrm{trading}}\bigr|
  \;\leq\; 2\Delta \cdot \eta(\Delta),
  \label{eq:trading_bound}
\end{equation}
because each foregone trade contributes at most $2\Delta$ to total
surplus.  The waiting-cost saving, from
Sub-claim~(i), is bounded below by
\begin{equation}
  W^{\mathrm{LIT}}_{\mathrm{cost}} - W^{\mathrm{DARK}}_{\mathrm{cost}}
  \;\geq\;
  \pi^{\mathrm{eq}}(\mathcal{B}_{\mathrm{thin}})
  \cdot \mathbb{E}[C \cdot w_{\min}(C) \,|\, C > \bar{C}^*]
  \cdot F_C\bigl((\bar{C}^*, \infty)\bigr)
  \;\equiv\; \kappa(\lambda) > 0,
  \label{eq:cost_lower_bound}
\end{equation}
where $\kappa(\lambda) > 0$ for all $\lambda \in [\lambda_{\min},
\lambda_{\max}]$ because $\pi^{\mathrm{eq}}(\mathcal{B}_{\mathrm{thin}})
> 0$ (thin book states occur with positive probability under Poisson
arrivals) and the distribution of $C$ has full support on
$(0,\infty)$.  The condition $\Delta < \Delta_{\max}$ is chosen
so that $2\Delta \cdot \eta(\Delta) < \kappa(\lambda)$, ensuring that
the adverse-selection correction does not overturn the waiting-cost
saving.
 
\medskip
\noindent\textbf{Sub-claim~(iii): Participation rates.}
 
We claim $N^{\mathrm{DARK}} \geq N^{\mathrm{LIT}}$, so the fixed-cost
term $K \cdot N$ does not produce a net welfare advantage for the lit
exchange.  In the dark pool, the cutoff $\bar\varepsilon$ determining
non-participation is constant (does not depend on $B$).  In the lit
exchange, the cutoff $\varepsilon(B)$ depends on $B$, and is higher
when the book is thin (higher transaction costs due to wider spreads
depress participation).  Taking expectations over $\pi^{\mathrm{eq}}$:
\begin{equation}
  \mathbb{E}_B[\varepsilon(B)]
  \;\geq\;
  \varepsilon(\mathbb{E}_B[B])
  \;\approx\;
  \bar\varepsilon,
  \label{eq:participation_comparison}
\end{equation}
where the first inequality uses the convexity of $\varepsilon$ in $B$
(wider spreads in thin books disproportionately raise participation
costs) and the approximation uses the fact that $\bar\varepsilon$ is
calibrated to the unconditional book distribution.  A higher cutoff
$\varepsilon$ means fewer traders participate, so
$N^{\mathrm{LIT}} \leq N^{\mathrm{DARK}}$ and therefore
$K \cdot N^{\mathrm{LIT}} \leq K \cdot N^{\mathrm{DARK}}$.
The fixed-cost term thus moves the welfare comparison in the same
direction as, or is neutral relative to, the waiting-cost comparison.
 
\medskip
\noindent\textbf{Conclusion of Step~1.}
 
Combining the three sub-claims:
\begin{align}
  W^{\mathrm{DARK}} - W^{\mathrm{LIT}}
  &\;=\;
  \bigl(W^{\mathrm{DARK}}_{\mathrm{trading}}
        - W^{\mathrm{LIT}}_{\mathrm{trading}}\bigr)
  - \bigl(W^{\mathrm{DARK}}_{\mathrm{cost}}
          - W^{\mathrm{LIT}}_{\mathrm{cost}}\bigr)
  - K\bigl(N^{\mathrm{DARK}} - N^{\mathrm{LIT}}\bigr)
  \nonumber \\
  &\;\geq\;
  -2\Delta\,\eta(\Delta)
  + \kappa(\lambda)
  - 0
  \nonumber \\
  &\;=\;
  \kappa(\lambda) - 2\Delta\,\eta(\Delta)
  \;>\; 0,
  \label{eq:step1_conclusion}
\end{align}
where the final strict inequality holds because $\Delta < \Delta_{\max}$
is chosen so that $2\Delta\,\eta(\Delta) < \kappa(\lambda)$ for all
$\lambda \in [\lambda_{\min}, \lambda_{\max}]$.  This completes
Step~1 of the proof of Theorem~\ref{thm:welfare_ranking}. $\blacksquare$

\end{document}